\documentclass{article}
\usepackage{graphicx} 
\usepackage[a4paper,margin=2.5cm]{geometry}
\usepackage[T1]{fontenc}
\usepackage{lmodern}

\usepackage[T1]{fontenc}
\usepackage{lmodern}
\usepackage[a4paper,margin=2.5cm]{geometry}

\usepackage{cite}
\usepackage{amsmath,amssymb,amsfonts}
\usepackage{amsthm}
\usepackage{bm}

\usepackage{algorithm}
\usepackage{algorithmic}

\usepackage{graphicx}
\usepackage{xcolor}
\usepackage{tikz}
\usepackage{pgfplots}
\pgfplotsset{compat=1.18}

\usepackage{textcomp}
\usepackage{xurl}
\usepackage{hyperref}
\hypersetup{hidelinks}

\newtheorem{theorem}{Theorem}
\newtheorem{lemma}[theorem]{Lemma}
\newtheorem{corollary}[theorem]{Corollary}
\newtheorem{remark}[theorem]{Remark}
\newtheorem{proposition}[theorem]{Proposition}
\newtheorem{conjecture}[theorem]{Conjecture}

\def\BibTeX{{\rm B\kern-.05em{\sc i\kern-.025em b}\kern-.08em
    T\kern-.1667em\lower.7ex\hbox{E}\kern-.125emX}}

\title{Complexity, Bounds, and Exact Algorithms for Rainbow $k$-Domination in Regular Graphs}

\author{
Piotr Lange\\
\small Faculty of Electronics, Telecommunications and Informatics\\
\small Gdańsk University of Technology\\
\small Narutowicza 11/12, 80-233 Gdańsk, Poland\\
\small \texttt{piotr.lange@student.pg.edu.pl}
}

\date{September 2026}

\begin{document}

\maketitle

\begin{abstract}
A rainbow $k$-dominating function assigns the empty set or exactly one of the colors $\{1\},\{2\},\ldots,\{k\}$ to each vertex of a simple graph in such a way that every vertex receiving $\emptyset$ sees all $k$ colors in its neighborhood. This model has several facility-location interpretations where colored vertices represent sites hosting different types of services, and uncolored vertices represent the beneficiaries of those services who must have access to all of them. In this paper we study this domination variant in regular graphs. First, we develop lower and upper bounds for this class of graphs, and explore their structural consequences. Then, we use those results to prove NP-completeness of a decision formulation of this problem. Additionally, we derive exact formulas for the rainbow $k$-domination number of cycles, M\"obius ladders and prisms. Lastly, we develop a dynamic-programming algorithm for generalized Petersen graphs of the form $P(6t, t)$ and state two conjectures based on the output of its implementation and related integer linear programming computations.
\end{abstract}

\noindent\textbf{Keywords:}
rainbow domination, regular graphs, domination in graphs, structural bounds, probabilistic bounds, exact formulas, computational complexity, exact algorithms, generalized Petersen graphs, M\"obius ladders

\section{Introduction}
\label{sec:introduction}

Domination is one of the central concepts in graph theory and combinatorial optimization. A dominating set in a graph is a subset of its vertices such that every vertex outside of it is adjacent to at least one vertex inside it. This condition has numerous applications in facility location, system control and monitoring, network design, and resource distribution. Related research has led to the development of a large family of other domination-type graph parameters \cite{haynes1998fundamentals,haynes2020topics}.

Rainbow domination parameters extend this idea to systems where every vertex receiving no assignment must have all required types of services, resources, or labels in its neighborhood. Two closely related but different rainbow domination models have been introduced in the literature. The classical $k$-rainbow domination problem, introduced by Bre\v{s}ar, Henning, and Rall, assumes that an arbitrary subset of the color set $[k]=\{1,\ldots,k\}$ is assigned to each vertex so that every vertex receiving the empty set must see all $k$ colors in its open neighborhood \cite{bresar2008rainbow}. Thus, this model allows for the assignment of several colors to one vertex.

The parameter studied in this paper is the singleton version. A rainbow $k$-dominating function assigns exactly one color to each selected vertex of a graph such that every uncolored vertex has all $k$ colors in its open neighborhood. This model has appeared in the literature as singleton $k$-rainbow domination and, for $k=2$, under the names rainbow $2$-domination and dual-server domination \cite{bonomo2018domination, erves2021petersen, chellali2026dual}. Throughout this paper, we use the term rainbow $k$-domination and denote the minimum weight of a rainbow $k$-dominating function of a graph $G$ by $\widetilde{\gamma}_{k}(G)$.

The restriction to only one color being assigned per selected vertex is natural in settings where each selected location can accommodate or provide only one type of resource. One such application, concerning the spatial modeling of post-disaster shelter camps, was developed by G\"oz\"upek, Lange, and Raczek in \cite{gozupek2026shelter}. In their model, uncolored vertices represent accommodation units, while colored vertices represent service-providing facilities. Each facility type is associated with a different color, and an uncolored vertex is adequately served only when all required facility types are present in its neighborhood. They further compared the use of rainbow and Roman domination parameters for facility placement constraints.

The case $k=2$ was studied by Bonomo, Bre\v{s}ar, Grippo, Milani\v{c}, and Safe in their work on domination parameters involving the number $2$ \cite{bonomo2018domination}. In particular, they showed that rainbow $2$-domination remains computationally hard even when restricted to split graphs. The parameter was also studied for larger values of $k$ and specific graph families in several papers by \v{Z}erovnik and coauthors \cite{erves2021petersen, brezovnik2025rainbow, zerovnik2026singleton, zerovnik2026four}.

In this paper, we focus on regular graphs, first developing general bounds for this class and recalling a characterization of the extremal equality case. We then present the algorithmic consequences of this characterization, proving NP-completeness of the corresponding decision problem. Next, we apply the bounds to obtain exact formulas for several standard families, including cycles, M\"obius ladders, and prism graphs. Finally, we present a specialized dynamic programming algorithm for generalized Petersen graphs and report computational evidence leading to further conjectures.

The paper is organized as follows. Section~\ref{sec:preliminaries} introduces notation and definitions. General bounds for regular graphs and the characterization of the extremal equality case are presented in Section~\ref{sec:bounds}. Section~\ref{sec:complexity} contains the NP-completeness proof for regular graphs. Exact results for cycles, M\"obius ladders, and prism graphs are given in Section~\ref{sec:exact-families}. Section~\ref{sec:dp} presents the dynamic programming algorithm for generalized Petersen graphs. Computational evidence and conjectures are discussed in Section~\ref{sec:computations}. Section~\ref{sec:conclusion} concludes the paper.

\section{Preliminaries}
\label{sec:preliminaries}

Throughout the paper, all graphs are finite, simple, and undirected. For a graph $G$, we denote its vertex set and edge set by $V(G)$ and $E(G)$, respectively. For two disjoint vertex sets $X,Y\subseteq V(G)$, we write $e(X,Y)$ for the number of edges with one endpoint in $X$ and the other endpoint in $Y$. The order of $G$ is $n=|V(G)|$. For a vertex $v\in V(G)$, its open neighborhood is denoted by $N_G(v)$, or by $N(v)$ when the graph is clear from the context. The degree of a vertex $v$ is $d_G(v)=|N_G(v)|$. The minimum and maximum degrees of $G$ are denoted by $\delta(G)$ and $\Delta(G)$, respectively. A graph is $r$-regular if every vertex has degree $r$. A $3$-regular graph is alternatively called a cubic graph.

For a positive integer $k$, let $[k]=\{1,2,\ldots,k\}$. A rainbow $k$-dominating function of a graph $G$ is a function
\[
f:V(G)\to \{\emptyset,\{1\},\{2\},\ldots,\{k\}\}
\]
such that, for every vertex $v\in V(G)$ with $f(v)=\emptyset$,
\[
\bigcup_{u\in N(v)} f(u)=[k].
\]
The weight of $f$ is
\[
w(f)=|\{v\in V(G):f(v)\neq \emptyset\}|.
\]
Let $\mathcal{R}_k(G)$ be the set of all rainbow $k$-dominating functions of $G$. Then, the rainbow $k$-domination number of $G$ can be defined as follows.
\[
\widetilde{\gamma}_{k}(G) = \min_{f\in \mathcal{R}_k(G)} w(f).
\]

A graph and two of its rainbow $3$-dominating functions, one of them minimum, are shown in Figure~\ref{fig:rainbow-example}.

\begin{figure}[t]
\centering

\begin{tikzpicture}[
    x=0.75cm,
    y=0.75cm,
    vertex/.style={
        circle,
        draw,
        minimum size=5.5mm,
        inner sep=0pt,
        font=\scriptsize
    },
    c1/.style={vertex,fill=red!25},
    c2/.style={vertex,fill=blue!25},
    c3/.style={vertex,fill=green!25},
    empty/.style={vertex,fill=white}
]


\begin{scope}[xshift=-4.3cm]

\node[c1] (a0) at ( 0.7, 0.0) {1};
\node[c2] (a1) at (-1.3,-1.4) {2};
\node[c3] (a2) at ( 4.1, 1.4) {3};
\node[c1] (a3) at (-2.7, 2.0) {1};
\node[c3] (a4) at (-4.7, 0.0) {3};
\node[c2] (a5) at ( 4.1,-1.4) {2};
\node[c1] (a6) at ( 1.3, 1.4) {1};
\node[c2] (a7) at (-2.7,-2.0) {2};
\node[c3] (a8) at ( 1.3,-1.4) {3};
\node[c1] (a9) at (-1.3, 1.4) {1};
\node[c3] (a10) at ( 2.7,-2.0) {3};
\node[empty] (a11) at ( 2.7, 2.0) {};
\node[empty] (a12) at ( 4.7, 0.0) {};
\node[empty] (a13) at (-4.1, 1.4) {};
\node[empty] (a14) at (-4.1,-1.4) {};
\node[c2] (a15) at (-0.7, 0.0) {2};

\draw
(a0)--(a6) (a0)--(a8) (a0)--(a10) (a0)--(a15)
(a1)--(a7) (a1)--(a13) (a1)--(a15)
(a2)--(a8) (a2)--(a11) (a2)--(a12)
(a3)--(a9) (a3)--(a13) (a3)--(a14)
(a4)--(a9) (a4)--(a13) (a4)--(a14)
(a5)--(a10) (a5)--(a11) (a5)--(a12)
(a6)--(a11) (a6)--(a12)
(a7)--(a14) (a7)--(a15)
(a8)--(a10)
(a9)--(a15);

\node[font=\small] at (0,-2.8) {(a)};

\end{scope}

\begin{scope}[xshift=4.3cm]

\node[empty] (b0) at ( 0.7, 0.0) {};
\node[c2] (b1) at (-1.3,-1.4) {2};
\node[c3] (b2) at ( 4.1, 1.4) {3};
\node[c1] (b3) at (-2.7, 2.0) {1};
\node[c3] (b4) at (-4.7, 0.0) {3};
\node[c2] (b5) at ( 4.1,-1.4) {2};
\node[c1] (b6) at ( 1.3, 1.4) {1};
\node[c2] (b7) at (-2.7,-2.0) {2};
\node[c3] (b8) at ( 1.3,-1.4) {3};
\node[empty] (b9) at (-1.3, 1.4) {};
\node[c3] (b10) at ( 2.7,-2.0) {3};
\node[empty] (b11) at ( 2.7, 2.0) {};
\node[empty] (b12) at ( 4.7, 0.0) {};
\node[empty] (b13) at (-4.1, 1.4) {};
\node[empty] (b14) at (-4.1,-1.4) {};
\node[c2] (b15) at (-0.7, 0.0) {2};

\draw
(b0)--(b6) (b0)--(b8) (b0)--(b10) (b0)--(b15)
(b1)--(b7) (b1)--(b13) (b1)--(b15)
(b2)--(b8) (b2)--(b11) (b2)--(b12)
(b3)--(b9) (b3)--(b13) (b3)--(b14)
(b4)--(b9) (b4)--(b13) (b4)--(b14)
(b5)--(b10) (b5)--(b11) (b5)--(b12)
(b6)--(b11) (b6)--(b12)
(b7)--(b14) (b7)--(b15)
(b8)--(b10)
(b9)--(b15);

\node[font=\small] at (0,-2.8) {(b)};

\end{scope}

\end{tikzpicture}

\caption{A graph $G$ and two of its rainbow $3$-dominating functions. Numbers inside colored vertices indicate the assigned colors, and white vertices are uncolored. (a) A rainbow $3$-dominating function of weight $12$. (b) A minimum rainbow $3$-dominating function of weight $10$.}
\label{fig:rainbow-example}
\end{figure}

Even though formally a rainbow $k$-dominating function $f$ assigns empty or one-element sets to the vertices of $G$, for simplicity we equivalently say that $f$ colors some vertices and leaves the rest uncolored. We write
\[
C_f=\{v\in V(G):f(v)\neq\emptyset\}
\]
for the set of colored vertices and
\[
U_f=\{v\in V(G):f(v)=\emptyset\}
\]
for the set of uncolored vertices. Alternatively, we write $C$ and $U$ when the function is clear from the context. For each color $i\in[k]$, we also write
\[
C_i=\{v\in V(G):f(v)=\{i\}\}.
\]
Thus
\[
C=C_1\dot\cup C_2\dot\cup\cdots\dot\cup C_k.
\]
Then $w(f)=|C|$.

For a vertex $v$ of $G$, a function $f: V(G)\to \{\emptyset,\{1\},\ldots,\{k\}\}$, and a color $i\in[k]$, we say that $v$ sees the color $i$ in its neighborhood if $i \in \bigcup_{u\in N(v)} f(u)$. In an analogous setting, if $h: V(G)\to [k]$ is a coloring with $k$ colors and $h$ assigns the color $i\in[k]$ to at least one neighbor of $v\in V(G)$, we say that $v$ sees this color in its neighborhood.

Recall that in the classical $k$-rainbow domination model, a function
\[
g:V(G)\to 2^{[k]}
\]
may assign an arbitrary subset of $[k]$ to each vertex. Its weight is
\[
w(g)=\sum_{v\in V(G)}|g(v)|.
\]
If $\bigcup_{u\in N(v)} g(u)=[k]$ for every vertex $v\in V(G)$ with $g(v)=\emptyset$, then $g$ is a $k$-rainbow dominating function of $G$. The minimum weight of such a function is called the $k$-rainbow domination number of $G$ and denoted by $\gamma_{rk}(G)$.

We record the following relation between the two rainbow parameters, observed in \cite{erves2021petersen}. It follows from the fact that every rainbow $k$-dominating function of a graph is also a $k$-rainbow dominating function of the same graph and of the same weight.
\begin{proposition}
\label{prop:rainbow-relation}
Let $G$ be a graph. Then, for every positive integer $k$,
\[
\widetilde{\gamma}_{k}(G) \geq \gamma_{rk}(G).
\]
\end{proposition}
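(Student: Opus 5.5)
The plan is to show that the set $\mathcal{R}_k(G)$ of rainbow $k$-dominating functions sits inside the set of $k$-rainbow dominating functions of $G$, with weights preserved. The minimum over the smaller set is then at least the minimum over the larger set, which is exactly the claimed inequality.

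\emph{Feasibility.} Fix a minimum rainbow $k$-dominating function $f$, so $w(f)=\widetilde{\gamma}_{k}(G)$. Its codomain $\{\emptyset,\{1\},\ldots,\{k\}\}$ is contained in $2^{[k]}$, so we may view $f$ directly as a function $g:V(G)\to 2^{[k]}$ with $g=f$. We check the $k$-rainbow condition for $g$. Take any $v$ with $g(v)=\emptyset$. Then also $f(v)=\emptyset$, and the defining property of $f$ gives $\bigcup_{u\in N(v)} g(u)=\bigcup_{u\in N(v)} f(u)=[k]$. So $g$ is a $k$-rainbow dominating function of $G$.

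\emph{Weight.} Every value of $g$ has cardinality $0$ or $1$, and it equals $1$ exactly on $C_f$. Hence
\[
w(g)=\sum_{v\in V(G)}|g(v)|=|C_f|=w(f)=\widetilde{\gamma}_{k}(G).
\]
Since $\gamma_{rk}(G)$ is the minimum weight over all $k$-rainbow dominating functions, we get $\gamma_{rk}(G)\le w(g)=\widetilde{\gamma}_{k}(G)$.

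There is no real obstacle. The only points needing care are that both weight notions agree on singleton-valued functions, and that $\mathcal{R}_k(G)$ is nonempty so the minimum exists. Nonemptiness holds because coloring every vertex with color $1$ gives a valid function, since no vertex is then uncolored.
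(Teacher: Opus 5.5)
Your proof is correct and follows the paper's argument: the paper justifies the proposition by the same observation that every rainbow $k$-dominating function is itself a $k$-rainbow dominating function of equal weight, so $\widetilde{\gamma}_{k}(G)$, a minimum over this subfamily, is at least $\gamma_{rk}(G)$. Your checks that the two weight notions agree on singleton-valued functions and that $\mathcal{R}_k(G)$ is nonempty spell out what the paper leaves implicit.
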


For a fixed positive integer $k$, the corresponding decision problem is the following.

\smallskip
\noindent
\textsc{Rainbow $k$-Domination}

\smallskip
\noindent
\textbf{Instance:} A graph $G$ and an integer $B$.

\noindent
\textbf{Question:} Is $\widetilde{\gamma}_{k}(G)\le B$?
\smallskip

The problem is in NP. Indeed, for any function $f:V(G)\to\{\emptyset,\{1\},\ldots,\{k\}\}$, its weight and the rainbow $k$-domination condition can be verified in polynomial time.

If $G$ is $r$-regular and $k>r$, then no uncolored vertex can see all $k$ colors in its open neighborhood. Hence, any rainbow $k$-dominating function of $G$ must color all vertices. Therefore, in this case
\[
\widetilde{\gamma}_{k}(G)=|V(G)|.
\]
Thus, when studying $r$-regular graphs, we usually restrict our attention to $1\le k\le r$.

We will also use the following sparse induced-subgraph parameter. For an integer $s\ge 0$, let $\alpha_s(G)$ denote the maximum order of an induced subgraph of $G$ with maximum degree at most $s$. Equivalently,
\[
\alpha_s(G)=\max\{|X|:X\subseteq V(G)\text{ and }\Delta(G[X])\le s\}.
\]
In particular, $\alpha_0(G)=\alpha(G)$ is the independence number of $G$.

\section{Bounds and extremal regular graphs}
\label{sec:bounds}

In this section we present several bounds for rainbow $k$-domination in regular graphs. We begin with a lower bound derived from a general degree-based bound proved in \cite{gozupek2026shelter}. Then, we record a second lower bound expressed in terms of sparse induced subgraphs. To complement them, we also establish two probabilistic upper bounds. Finally, we reformulate the characterization of the equality case when $k=r$ and $\widetilde{\gamma}_{k}(G)=n/2$ in the rainbow $k$-domination model.

\begin{corollary}
\label{cor:regular-lower}
Let $G$ be an $r$-regular graph of order $n$, and let $1\le k\le r$. Then
\[
\widetilde{\gamma}_{k}(G) \ge kq+\min\{k,s\},
\]
where $q$ and $s$ are integers satisfying
\[
n=q(k+r)+s, \qquad 0\le s<k+r.
\]
\end{corollary}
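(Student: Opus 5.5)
The plan is to count per colour class rather than count all coloured vertices together. The plain edge count $k|U|\le r|C|$ only yields $\lceil kn/(k+r)\rceil$, and that is too weak in general. For instance, with $k=2$, $r=3$, $n=7$ (so $q=1$, $s=2$) it gives $3$, whereas the statement claims $4$.

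Let $f$ be a minimum rainbow $k$-dominating function, with $C=C_1\dot\cup\cdots\dot\cup C_k$ and $U$ as in Section~\ref{sec:preliminaries}, and set $w=|C|=\widetilde{\gamma}_{k}(G)$. The key observation is that for each fixed colour $i\in[k]$, every vertex of $U$ has at least one neighbour in $C_i$. Since $G$ is $r$-regular, each vertex of $C_i$ is adjacent to at most $r$ vertices of $U$. Double counting the edges between $C_i$ and $U$ gives
\[
|U|\le e(C_i,U)\le r|C_i|,
\]
so $|C_i|\ge \lceil |U|/r\rceil$ for every $i$. Summing over the $k$ disjoint colour classes yields the self-referential inequality
\[
w\;\ge\; k\left\lceil \frac{n-w}{r}\right\rceil .
\]

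It remains to extract $kq+\min\{k,s\}$ from this inequality, which I would do by contradiction. Suppose $w=kq+t$ with $t<\min\{k,s\}$; in particular $t<k$, and $t\ge 0$ follows from the edge-count bound above. Using $n=q(k+r)+s$, we get
\[
n-w=qr+(s-t), \qquad s-t\ge 1.
\]
Hence $\lceil (n-w)/r\rceil\ge q+1$, and the inequality forces $w\ge kq+k>kq+t=w$, a contradiction. Therefore $w\ge kq+\min\{k,s\}$.

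The degenerate case $U=\emptyset$ needs a separate check. There $w=n=q(k+r)+s\ge kq+s\ge kq+\min\{k,s\}$, so it causes no trouble. I do not expect a genuine obstacle. The one real idea is to apply the domination condition to each colour class separately instead of aggregating all colours, and after that the integer bookkeeping with $q$ and $s$ is routine.
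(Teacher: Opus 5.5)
Your proof is correct. It takes a different route from the paper only in that the paper gives no argument of its own: it substitutes $\Delta(G)=r$ into a general maximum-degree lower bound cited from \cite{gozupek2026shelter}. Your per-color double count $|U|\le e(C_i,U)\le r|C_i|$, summed over the $k$ classes to get $w\ge k\lceil (n-w)/r\rceil$, uses only that every vertex has at most $r$ neighbors, never regularity. So you have in fact proved the general $\Delta$-version of the bound that the paper imports, not just its regular special case. The citation buys the paper brevity. Your argument makes the corollary checkable in place, and it shows that the $+1$ in Corollary~\ref{cor:cubic-extremal} for $n\equiv 2,4\pmod 6$ comes from applying the ceiling to each color class separately.

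Two steps can be trimmed. The case $U=\emptyset$ needs no separate check, because $|C_i|\ge 0=\lceil 0/r\rceil$ keeps the inequality valid. The contradiction step never uses $t\ge 0$; it only needs $t<k$ and $t<s$.

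Your motivating example does not exist: a cubic graph has even order, so there is none on $7$ vertices. Valid instances of the gap between the two counts are:
\begin{itemize}
\item $k=r=3$, $n=8$ (for instance the cube $Q_3$), where the aggregate count gives $4$ and the statement gives $5$;
\item $k=2$, $r=3$, $n=12$, where they give $5$ and $6$.
\end{itemize}
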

\begin{proof}
Since $G$ is $r$-regular, $\Delta(G)=r$ and the result follows directly from the general bound proved in \cite{gozupek2026shelter}.
\end{proof}

A cubic graph has an even number of vertices. Hence, Corollary~\ref{cor:regular-lower} with $r=k=3$ gives the following bound.

\begin{corollary}
\label{cor:cubic-extremal}
Let $G$ be a cubic graph of order $n$. Then
\[
\widetilde{\gamma}_{3}(G)\ge
\begin{cases}
\frac{n}{2}, & \text{if } n\equiv 0\pmod 6,\\
\frac{n}{2}+1, & \text{if } n\equiv 2,4\pmod 6.
\end{cases}
\]
\end{corollary}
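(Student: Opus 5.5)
This is a direct specialization of Corollary~\ref{cor:regular-lower} with $r=k=3$, so the plan is simply to compute $q$ and $s$ in $n=6q+s$ with $0\le s<6$. First, I will use the handshake lemma: a cubic graph satisfies $3n=2|E(G)|$, so $n$ is even. This leaves only the residues $s\in\{0,2,4\}$. Then the bound $\widetilde{\gamma}_{3}(G)\ge 3q+\min\{3,s\}$ is evaluated in each case.

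\emph{Case $s=0$.} Here $q=n/6$, and the bound reads $3q=n/2$.

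\emph{Case $s=2$.} Here $q=(n-2)/6$ and $\min\{3,2\}=2$. The bound becomes $3q+2=(n-2)/2+2=n/2+1$.

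\emph{Case $s=4$.} Here $q=(n-4)/6$ and $\min\{3,4\}=3$. The bound becomes $3q+3=(n-4)/2+3=n/2+1$.

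These three computations give exactly the two cases stated in Corollary~\ref{cor:cubic-extremal}. There is no real obstacle. The only point requiring care is that the truncation $\min\{k,s\}$ happens to produce the same value $n/2+1$ for both $s=2$ and $s=4$. That coincidence is what allows the two residue classes to be merged into a single line.
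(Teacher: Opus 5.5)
Your proposal is correct and takes the same route as the paper: the paper also observes that a cubic graph has even order and then applies Corollary~\ref{cor:regular-lower} with $r=k=3$. Your case computations for $s\in\{0,2,4\}$ spell out the arithmetic that the paper leaves implicit, and they are right.
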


The following proposition uses a different structural argument and can give a stronger bound for graphs with small values of $\alpha_{r-k}(G)$ compared with Corollary~\ref{cor:regular-lower}. It can be considered a consequence of an analogous statement proven for $k$-domination in \cite{hansberg2013kdomination}.

\begin{proposition}
\label{prop:sparse-lower}
Let $G$ be an $r$-regular graph of order $n$, and let $1\le k\le r$. Then
\[
\widetilde{\gamma}_{k}(G)\ge n-\alpha_{r-k}(G).
\]
In particular, if $k=r$ then
\[
\widetilde{\gamma}_{r}(G)\ge n-\alpha(G).
\]
\end{proposition}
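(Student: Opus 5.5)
Take a minimum rainbow $k$-dominating function $f$ of $G$, so that $|C_f|=\widetilde{\gamma}_{k}(G)$, and look at its set of uncolored vertices $U=U_f$. The claim will follow once we show that $G[U]$ has maximum degree at most $r-k$. Indeed, that makes $U$ one of the sets allowed in the definition of $\alpha_{r-k}(G)$, so $|U|\le \alpha_{r-k}(G)$. Since $V(G)=C_f\,\dot\cup\,U$, this gives
\[
\widetilde{\gamma}_{k}(G)=|C_f|=n-|U|\ge n-\alpha_{r-k}(G).
\]

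The degree bound is a counting step. Fix $v\in U$. Since $f(v)=\emptyset$, the vertex $v$ must see all $k$ colors. Each colored neighbour carries exactly one color, so $v$ has at least $k$ neighbours in $C_f$, that is, $|N(v)\cap C_f|\ge k$. Because $d_G(v)=r$, the number of neighbours of $v$ inside $U$ is
\[
|N(v)\cap U| = r-|N(v)\cap C_f|\le r-k,
\]
and this is exactly $\Delta(G[U])\le r-k$.

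The case $k=r$ is then immediate: $\alpha_0(G)=\alpha(G)$, and in that case $U$ is an independent set.

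There is no real obstacle here. The only point to watch is that we use the singleton nature of the model, where one colored vertex contributes one color. That is what turns ``sees $k$ colors'' into ``has at least $k$ colored neighbours''. Regularity is needed only to convert this into a bound on the number of neighbours inside $U$. The argument works for any rainbow $k$-dominating function, not just a minimum one, and it parallels the $k$-domination statement of \cite{hansberg2013kdomination}.
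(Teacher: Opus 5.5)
Your proposal is correct and follows the paper's proof essentially verbatim: you take a minimum rainbow $k$-dominating function, use the singleton condition and $r$-regularity to get $\Delta(G[U])\le r-k$, and conclude $|U|\le\alpha_{r-k}(G)$. Your side remarks are also accurate: the argument works for any rainbow $k$-dominating function, and $\Delta(G)\le r$ would suffice in place of regularity.
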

\begin{proof}
Let $f$ be a minimum rainbow $k$-dominating function of $G$, and let
$C$ and $U$ be its sets of colored and uncolored vertices, respectively. Since $G$ is $r$-regular and every uncolored vertex $u\in U$ of $G$ must have at least $k$ colored neighbors, it follows that
\[
|N(u)\cap U|\le r-k
\]
and therefore
\[
\Delta(G[U])\le r-k.
\]
By the definition of $\alpha_{r-k}(G)$,
\[
|U|\le \alpha_{r-k}(G)
\]
and finally
\[
\widetilde{\gamma}_{k}(G)=|C|=n-|U|\ge n-\alpha_{r-k}(G).
\]
\end{proof}

We now derive two complementary upper bounds using a probabilistic approach inspired by similar work regarding the classical domination number, in particular \cite{jafarirad2019probabilistic} and \cite{henning2026probabilistic}. First, we present a bound obtained by substituting $\delta(G)=r$ to a general upper bound developed in \cite{gozupek2026shelter}.

\begin{corollary}
\label{cor:regular-upper}
Let $G$ be an $r$-regular graph of order $n$, and let $1\le k\le r$. Then
\[
\widetilde{\gamma}_k(G) \leq n\min\left\{1, \frac{k\bigl(\ln(r+k)+1\bigr)}{r+k}\right\}.
\]
\end{corollary}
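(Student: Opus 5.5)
The plan is to give a direct probabilistic proof, since the general bound from \cite{gozupek2026shelter} is not restated here; the corollary then follows by setting $\delta(G)=r$. The bound $n$ is trivial, because coloring every vertex (all with color $1$, say) is a rainbow $k$-dominating function. So it suffices to show $\widetilde{\gamma}_k(G)\le n\,k(\ln(r+k)+1)/(r+k)$ whenever this quantity is less than $n$.

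\emph{Random construction.} Fix $p\in[0,1]$ and assign to every vertex independently one of the following:
\begin{itemize}
\item color $i$ with probability $p/k$, for each $i\in[k]$;
\item no color with probability $1-p$.
\end{itemize}
Let $A$ be this random partial coloring. Call a vertex $v$ \emph{bad} if $f(v)=\emptyset$ and some color $i\in[k]$ is missing from $N(v)$. Repair the coloring by giving every bad vertex an arbitrary color. Every vertex that remains uncolored then sees all $k$ colors, so the result is a rainbow $k$-dominating function. Its weight is at most $|A|+|B|$, where $B$ is the set of bad vertices.

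\emph{Estimating the bad vertices.} For a fixed vertex $v$ and color $i$, the probability that no neighbor of $v$ receives color $i$ is $(1-p/k)^r\le e^{-pr/k}$. A union bound over the $k$ colors, together with the independence of $v$'s own choice from its neighbors' choices, gives
\[
\Pr[v\in B]\le (1-p)\,k\,e^{-pr/k}\le k\,e^{-p/k}e^{-pr/k}=k\,e^{-p(r+k)/k},
\]
using $1-p\le e^{-p}\le e^{-p/k}$. Hence
\[
\mathbb{E}[|A|+|B|]\le n\bigl(p+k e^{-p(r+k)/k}\bigr).
\]

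\emph{Choosing $p$.} Set $p=k\ln(r+k)/(r+k)$. Then $k e^{-p(r+k)/k}=k/(r+k)$, and the expectation is at most
\[
n\,\frac{k(\ln(r+k)+1)}{r+k}.
\]
We also need $p\le 1$. If $p>1$, then $k(\ln(r+k)+1)/(r+k)>1$ and the trivial bound $n$ already gives the claim. Some outcome attains at most the expectation, which proves the bound.

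The main obstacle is the step that absorbs the factor $(1-p)$ into the exponent so that the clean exponent $(r+k)/k$ appears. Using only $e^{-pr/k}$ would lead to the weaker expression $k\ln r/r$. The inequality $1-p\le e^{-p/k}$ is what turns the denominator into $r+k$, matching the stated formula.
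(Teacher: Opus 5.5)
Your construction, the repair step and the choice $p=k\ln(r+k)/(r+k)$ are all sound. The displayed estimate, however, has an arithmetic error exactly at the step you call the crux. You weaken $1-p\le e^{-p}$ to $1-p\le e^{-p/k}$ and then write $k e^{-p/k}e^{-pr/k}=k e^{-p(r+k)/k}$. In fact $e^{-p/k}e^{-pr/k}=e^{-p(r+1)/k}$. So for $k\ge 2$ your chain only gives $\Pr[v\in B]\le k e^{-p(r+1)/k}$, and optimizing over $p$ then yields $n\,k(\ln(r+1)+1)/(r+1)$. That is strictly weaker than the claimed bound, because $x\mapsto(\ln x+1)/x$ is decreasing for $x>1$ and $r+k>r+1$. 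The fix is to stop at $1-p\le e^{-p}$:
\[
\Pr[v\in B]\le (1-p)\,k\,e^{-pr/k}\le k\,e^{-p}\,e^{-pr/k}=k\,e^{-p(r+k)/k}.
\]
Your closing remark should change accordingly. What adds $k$ to the numerator of the exponent is the full factor $e^{-p}=e^{-pk/k}$, coming from the probability that $v$ itself is uncolored, not $e^{-p/k}$.

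With that correction your proof is complete, and it takes a different route from the paper. The paper gives no argument of its own: it obtains the corollary by substituting $\delta(G)=r$ into a general minimum-degree upper bound from \cite{gozupek2026shelter}. Your version is self-contained and shows exactly where the denominator $r+k$ comes from. It also loses no generality: since $(1-p/k)^{d(v)}\le(1-p/k)^{\delta(G)}$, the same computation with $r$ replaced by $\delta(G)$ bounds $\widetilde{\gamma}_k(G)$ by $n\min\{1,k(\ln(\delta(G)+k)+1)/(\delta(G)+k)\}$ for arbitrary graphs. The paper's route is shorter only because it delegates the probabilistic work to the cited source.
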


We also give the following bound derived specifically for regular graphs, which is stronger for certain values of $r$ and $k$.

\begin{theorem}
\label{thm:regular-upper}
Let $G$ be an $r$-regular graph of order $n$, and let $1\le k\le r$. Then
\[
\widetilde{\gamma}_{k}(G) \leq n\min\left\{1,\,\frac{r+k\left(1-\frac{1}{k}\right)^r}{r+1}\right\}.
\]
\end{theorem}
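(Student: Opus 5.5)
The plan is a probabilistic construction with an alteration step. First choose a random independent set of uncolored vertices, using the standard random-ordering trick. Then color everything else uniformly at random. Finally recolor the few uncolored vertices that fail to see all $k$ colors. The bound $\widetilde{\gamma}_k(G)\le n$ is trivial, since coloring every vertex is always a rainbow $k$-dominating function. So it suffices to prove $\widetilde{\gamma}_k(G)\le n\bigl(r+k(1-\frac1k)^r\bigr)/(r+1)$.

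\textbf{Construction.} Take a uniformly random linear ordering $\pi$ of $V(G)$. Independently, take a uniformly random coloring $h:V(G)\to[k]$ that assigns each vertex a color independently and uniformly. Let $I$ be the set of vertices $v$ that precede all of their neighbors in $\pi$. Then $I$ is an independent set. Since $|N[v]|=r+1$ for every $v$, each vertex lies in $I$ with probability exactly $\frac{1}{r+1}$, so $\mathbb{E}|I|=\frac{n}{r+1}$.

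Let $B\subseteq I$ be the set of vertices $v\in I$ whose neighborhood does not see all $k$ colors under $h$. Define $f(v)=\emptyset$ for $v\in I\setminus B$ and $f(v)=\{h(v)\}$ otherwise. This $f$ is a rainbow $k$-dominating function. Every uncolored vertex $v$ lies in $I$, so by independence of $I$ all its neighbors are colored by $f$ with their $h$-colors. Since $v\notin B$, these neighbors show all $k$ colors.

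\textbf{Expectation.} For a fixed vertex $v$, the event $v\in I$ depends only on $\pi$, while the event that $N(v)$ misses some color depends only on $h$. So the two events are independent. The $r$ neighbors of $v$ receive independent uniform colors. By the union bound over the $k$ colors, the probability that some color is missing is at most $k(1-\frac1k)^r$. Hence
\[
\mathbb{E}|B|\le \frac{n}{r+1}\,k\Bigl(1-\frac1k\Bigr)^r,
\qquad
\mathbb{E}\,w(f)=n-\mathbb{E}|I|+\mathbb{E}|B|\le n-\frac{n}{r+1}+\frac{nk(1-\frac1k)^r}{r+1}=n\,\frac{r+k(1-\frac1k)^r}{r+1}.
\]
Some choice of $(\pi,h)$ attains at most this expected weight, which gives the bound.

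\textbf{Main obstacle.} The only delicate point is making sure that uncolored vertices never neighbor other uncolored vertices. Otherwise the color count at $v$ would depend on which neighbors were dropped, and the bound would not decouple. Restricting uncolored vertices to the independent set $I$ resolves this. Sampling $\pi$ and $h$ independently then makes the probability computation a simple product.
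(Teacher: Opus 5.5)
Your proof is correct and follows the same scheme as the paper's proof: an independent set of candidate uncolored vertices, a uniform random coloring of the remaining vertices, the union bound $k(1-\frac{1}{k})^r$, and coloring of the candidates that fail. The only difference is that you generate the independent set by a random ordering, so its expected size is exactly $\frac{n}{r+1}$ and linearity of expectation gives the bound in one step; the paper instead fixes an independent set $U$ with $|U|\ge\frac{n}{r+1}$, so it must first handle the case $k(1-\frac{1}{k})^r\ge 1$ separately before multiplying this lower bound by $1-k(1-\frac{1}{k})^r$.
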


\begin{proof}
Recall that for any graph $H$,
\[
\alpha(H)\ge \frac{|V(H)|}{\Delta(H)+1}.
\]
Since $\Delta(G)=r$, there exists an independent set $U$ such that $|U|\ge n / (r+1)$. Let $C=V(G)\setminus U$. Then every vertex of $U$ has exactly $r$ neighbors in $C$.

Assign independently and uniformly one of the $k$ colors to every vertex of $C$. Fix a vertex $u\in U$ and a color $i\in[k]$. The probability that the vertex $u$ does not see the color $i$ in its neighborhood is $(1-1/k)^r$. Hence, by the union bound, the probability that at least one color is missing from $N(u)$ is at most $k(1-1/k)^r$. If $k\left(1- (1 / k)\right)^r\ge 1$, then
\[
\frac{r+k\left(1-\frac{1}{k}\right)^r}{r+1}\ge 1,
\]
and the result follows from the trivial bound $\widetilde{\gamma}_k(G)\le n$.
Suppose otherwise. Then, there exists a function $f_0$ that assigns the $k$ colors to the vertices of $C$ such that at least 
\[
|U|\left(1-k\left(1-\frac{1}{k}\right)^r\right)
\]
vertices of $U$ see all $k$ colors in their neighborhoods.

Construct a function $f:V(G)\to\{\emptyset,\{1\},\ldots,\{k\}\}$ such that
\[
f(v)=\begin{cases}
f_0 (v), & \text{if } v \in C,\\
\{1\}, & \text{if } v\in U \text{ and } \bigcup_{u\in N(v)} f_0(u) \neq [k],\\
\emptyset, & \text{if } v\in U \text{ and } \bigcup_{u\in N(v)} f_0(u) = [k].
\end{cases}
\]
Since all vertices being assigned an empty set by $f$ see all $k$ colors in their neighborhoods, $f$ is a rainbow $k$-dominating function. Hence,
\[
\widetilde{\gamma}_{k}(G)\le n-\frac{n}{r+1}\left(1-k\left(1-\frac{1}{k}\right)^r\right)
=n\frac{r+k\left(1-\frac{1}{k}\right)^r}{r+1}.
\]
Combining this with the trivial bound $\widetilde{\gamma}_{k}(G)\le n$ completes the proof.
\end{proof}

Let
\[
\rho(r,k)=\min\left\{1,\frac{k\bigl(\ln(r+k)+1\bigr)}{r+k},\frac{r+k\left(1-\frac{1}{k}\right)^r}{r+1}\right\}
\]
be the normalized value of the combination of the upper bounds from Corollary~\ref{cor:regular-upper} and Theorem~\ref{thm:regular-upper}. Its numerical value for different values of $r$ and $k$ is shown in Figure~\ref{fig:probabilistic-bound-plot}. The combined upper bound benefits from the bound from Theorem~\ref{thm:regular-upper} at smaller values of $r$ and from the asymptotically stronger logarithmic bound from Corollary~\ref{cor:regular-upper} at larger values of $r$.

\begin{figure}[ht]
\centering
\begin{tikzpicture}
\begin{axis}[
    width=0.65\columnwidth,
    height=0.45\columnwidth,
    xlabel={$r$},
    ylabel={$\rho(r,k)$},
    xmin=3, xmax=40,
    ymin=0.2, ymax=1.02,
    xtick={3,5,10,15,20,25,30,35,40},
    ytick={0.2,0.4,0.6,0.8,1.0},
    grid=both,
    minor tick num=1,
    legend style={
        at={(0.97,0.97)},
        anchor=north east,
        font=\footnotesize
    },
    every axis plot/.append style={
        thick,
        mark=*,
        mark size=1.4pt
    }
]

\addplot[
    blue,
    samples=38,
    domain=3:40
]
{min(1,min(2*(ln(x+2)+1)/(x+2),(x+2*pow(1-1/2,x))/(x+1)))};
\addlegendentry{$k=2$}

\addplot[
    red,
    samples=38,
    domain=3:40
]
{min(1,min(3*(ln(x+3)+1)/(x+3),(x+3*pow(1-1/3,x))/(x+1)))};
\addlegendentry{$k=3$}

\addplot[
    green!50!black,
    samples=37,
    domain=4:40
]
{min(1,min(4*(ln(x+4)+1)/(x+4),(x+4*pow(1-1/4,x))/(x+1)))};
\addlegendentry{$k=4$}

\addplot[
    orange,
    samples=36,
    domain=5:40
]
{min(1,min(5*(ln(x+5)+1)/(x+5),(x+5*pow(1-1/5,x))/(x+1)))};
\addlegendentry{$k=5$}

\addplot[
    violet,
    samples=35,
    domain=6:40
]
{min(1,min(6*(ln(x+6)+1)/(x+6),(x+6*pow(1-1/6,x))/(x+1)))};
\addlegendentry{$k=6$}

\end{axis}
\end{tikzpicture}
\caption{The normalized combined probabilistic upper bound $\rho(r,k)$ for different values of $r$ and $k$.}
\label{fig:probabilistic-bound-plot}
\end{figure}

The following characterization is a singleton reformulation of the equality case established by Kuzman for $k$-rainbow domination \cite[Theorem~3.1 and its proof]{kuzman2020}. We present a self-contained proof utilizing rainbow $k$-domination terminology and bounds.

\begin{theorem}
\label{thm:extremal-regular}
Let $r\ge 2$ and let $G$ be an $r$-regular graph on $n$ vertices. Then
\[
    \widetilde{\gamma}_{r}(G)=\frac{n}{2}
\]
if and only if there exist a bipartition $(U,C)$ of $G$ and a partition
\[
    C=C_1\dot\cup\cdots\dot\cup C_r
\]
such that for every $u\in U$ we have
\[
    |N(u)\cap C_i|=1
    \qquad \text{for all } i\in\{1,\dots,r\}.
\]
\end{theorem}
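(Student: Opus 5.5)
We prove the two implications separately. Both rest on a double-counting lower bound: every rainbow $r$-dominating function of an $r$-regular graph has weight at least $n/2$.

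\emph{The lower bound.} Let $f$ be any rainbow $r$-dominating function with colored set $C$ and uncolored set $U$. Each $u\in U$ must see all $r$ colors among its $r$ neighbors. So all neighbors of $u$ are colored, they carry pairwise distinct colors, and $e(U,C)=r|U|$. Each vertex of $C$ has at most $r$ neighbors in $U$, so $e(U,C)\le r|C|$. Hence $|U|\le |C|$, that is, $w(f)=|C|\ge n/2$. This is consistent with Corollary~\ref{cor:regular-lower} for $k=r$, since then $q=\lfloor n/(2r)\rfloor$, but the direct count is cleaner.

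\emph{Sufficiency.} Suppose the bipartition $(U,C)$ and the partition $C=C_1\dot\cup\cdots\dot\cup C_r$ exist. Define $f(v)=\{i\}$ for $v\in C_i$ and $f(u)=\emptyset$ for $u\in U$. The hypothesis $|N(u)\cap C_i|=1$ says that every $u\in U$ sees every color, so $f$ is a rainbow $r$-dominating function of weight $|C|$. Since $(U,C)$ is a bipartition of the $r$-regular graph $G$, counting edges between the parts gives $r|U|=r|C|$, so $|C|=n/2$. Thus $\widetilde{\gamma}_r(G)\le n/2$, and the lower bound gives equality.

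\emph{Necessity.} Let $f$ be a minimum rainbow $r$-dominating function with $|C|=n/2$, so $|U|=|C|$. The counting chain above is then tight:
\[
r|U| = e(U,C)\le r|C| = r|U|.
\]
Equality forces every vertex of $C$ to have all $r$ neighbors in $U$. Every vertex of $U$ already has all its neighbors in $C$. Hence both $U$ and $C$ are independent, and $(U,C)$ is a bipartition of $G$. Set $C_i=f^{-1}(\{i\})$. Each $u\in U$ has exactly $r$ neighbors, and they carry all $r$ colors, so it has exactly one neighbor in each $C_i$.

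The only delicate point is extracting the independence of $C$ from the equality case, which is exactly where the tightness of $e(U,C)\le r|C|$ is used. The rest is routine.
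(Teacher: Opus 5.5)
Your proof is correct and follows the paper's argument essentially step for step. Sufficiency uses the same coloring of $C_i$ by color $i$, and necessity uses the same tight double count $r|U|=e(U,C)\le r|C|$ to force independence of $C$. The only cosmetic difference is that you derive the lower bound $\widetilde{\gamma}_r(G)\ge n/2$ directly from that count, whereas the paper cites Corollary~\ref{cor:regular-lower}. The paper also additionally records $|C_i|=n/(2r)$, which the statement does not require.
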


\begin{proof}
\noindent(\,$\Leftarrow$\,)
Assume that $G$ is bipartite with bipartition $(U,C)$ and that
\[
C=C_1\dot\cup\cdots\dot\cup C_r
\]
satisfies
\[
|N(u)\cap C_i|=1
\qquad \text{for every } u\in U \text{ and every } i\in\{1,\dots,r\}.
\]
Define $f$ by assigning color $i$ to every vertex of $C_i$ and leaving every vertex of $U$ uncolored. Then each $u\in U$ has exactly one neighbor in each $C_i$, hence it sees all $r$ colors in its neighborhood. Thus $f$ is a rainbow $r$-dominating function of $G$. Its weight is
\[
w(f)=|C|.
\]

Since $G$ is $r$-regular and bipartite with bipartition $(U,C)$, we have
\[
r|U|=|E(G)|=r|C|,
\]
and therefore $|U|=|C|=n/2$. Hence
\[
\widetilde{\gamma}_{r}(G)\le w(f)=|C|=\frac{n}{2}.
\]
By Corollary~\ref{cor:regular-lower},
\[
\widetilde{\gamma}_{r}(G)\ge \left\lceil \frac{n}{2}\right\rceil \ge \frac{n}{2}.
\]
Therefore,
\[
\widetilde{\gamma}_{r}(G)=\frac{n}{2}.
\]

\smallskip
\noindent(\,$\Rightarrow$\,)
Assume now that
\[
\widetilde{\gamma}_{r}(G)=\frac{n}{2},
\]
and let $f$ be a minimum rainbow $r$-dominating function on $G$. Define
\[
C=\{v\in V(G): f(v)\neq \emptyset\},
\qquad
U=\{v\in V(G): f(v)=\emptyset\}.
\]
Then
\[
|C|=\frac{n}{2}
\qquad \text{and} \qquad
|U|=\frac{n}{2}.
\]

Since $G$ is $r$-regular and $k=r$, every vertex $u\in U$ has exactly $r$ neighbors, and in order to see all $r$ colors under a singleton assignment, all these neighbors must be colored and must receive pairwise distinct colors. In particular, no vertex of $U$ is adjacent to another vertex of $U$, so $U$ is independent. Therefore,
\[
e(U,C)=r|U|.
\]

On the other hand, every vertex of $C$ has degree $r$, so
\[
e(U,C)\le r|C|.
\]
Since $|U|=|C|$, both sides are equal, and consequently every vertex of $C$ has all its neighbors in $U$. Thus $C$ is independent as well, and $G$ is bipartite with bipartition $(U,C)$.

For each $i\in\{1,\dots,r\}$ define
\[
C_i=\{v\in C: f(v)=\{i\}\}.
\]
Then
\[
C=C_1\dot\cup\cdots\dot\cup C_r.
\]
As observed above, every vertex $u\in U$ has exactly $r$ colored neighbors with pairwise distinct colors, so for every $i\in\{1,\dots,r\}$ we have
\[
|N(u)\cap C_i|=1.
\]

Finally, counting edges between $U$ and $C_i$ in two ways gives
\[
e(U,C_i)=|U|
\]
because each vertex of $U$ has exactly one neighbor in $C_i$, and also
\[
e(U,C_i)=r|C_i|
\]
because every vertex of $C_i$ has degree $r$ and all its neighbors lie in $U$. Hence
\[
r|C_i|=|U|=\frac{n}{2},
\]
so
\[
|C_i|=\frac{n}{2r}
\qquad \text{for every } i\in\{1,\dots,r\}.
\]
In particular, $2r$ divides $n$.
\end{proof}

\begin{remark}
\label{rem:cubic-extremal}
For cubic graphs, Theorem~\ref{thm:extremal-regular} shows that the extremal case
\[
\widetilde{\gamma}_{3}(G)=\frac{n}{2}
\]
is highly restrictive. Indeed, equality forces $6\mid n$, $G$ to be bipartite, and one side of the bipartition to admit a partition
\[
    C=C_1\dot\cup C_2\dot\cup C_3
\]
with
\[
|C_1|=|C_2|=|C_3|=\frac{n}{6},
\]
such that every vertex of the opposite side has exactly one neighbor in each $C_i$. Thus the colored vertices must form three equally large classes such that every vertex of the uncolored side has exactly one neighbor in each class.
\end{remark}

\section{NP-completeness for regular graphs}
\label{sec:complexity}

In this section we prove that, for every fixed $r\ge3$, deciding whether $\widetilde{\gamma}_r(G)=|V(G)|/2$ remains NP-complete even for bipartite $r$-regular graphs. For the reduction, we use the following problem.

\smallskip
\noindent
\textsc{Strong $r$-Coloring of Uniform Regular Hypergraphs}

\smallskip
\noindent
\textbf{Instance:} A simple linear $r$-uniform $r$-regular hypergraph $\mathcal H$.

\noindent
\textbf{Question:} Does $\mathcal H$ admit a strong coloring with $r$ colors?
\smallskip

For a hypergraph $\mathcal H=(X,\mathcal E)$, a strong $r$-coloring is a function $\chi:X\to[r]$ such that
\[
    \{\chi(x):x\in F\}=[r] \qquad\text{for every }F\in\mathcal E.
\]

Lange proved \textsc{Strong $r$-Coloring of Uniform Regular Hypergraphs} to be NP-complete for every fixed $r\ge3$ on simple linear $r$-uniform $r$-regular hypergraphs \cite{lange2026strong}.

First, we introduce a parametrized forcing gadget $P_r$ used in the reduction. Define two disjoint sets of vertices
\[
A=\{a_1,\ldots,a_r\},\qquad B=\{b_1,\ldots,b_r\}.
\]
Let $W=A\cup B$ and
\[
\begin{aligned}
    E_0&=A,\\
    E_i&=(A\setminus\{a_i\})\cup\{b_i\},
        &&1\le i\le r-2,\\
    E_{r-1}&=(A\setminus\{a_{r-1}\})\cup\{b_1\}.
\end{aligned}
\]
For $0\le i\le r-1$, put $\overline E_i=W\setminus E_i$. Define a set of hyperedges
\[
    \mathcal F=\{E_i,\overline E_i:0\le i\le r-1\},
\]
and let $\mathcal Z = (W, \mathcal F)$ be a hypergraph. Finally, let $P_r$ be the incidence graph of $\mathcal Z$. Figure~\ref{fig:forcing-gadget} illustrates the gadget for $r=3$.

\begin{figure}[t]
\centering
\begin{tikzpicture}[
    every node/.style={font=\small},
    vtx/.style={circle, draw, minimum size=6mm, inner sep=0pt},
    port/.style={circle, draw=red, minimum size=6mm, inner sep=0pt},
    lab/.style={font=\footnotesize}
]

\node[vtx] (a1) at (0,5) {$a_1$};
\node[vtx] (a2) at (0,4) {$a_2$};
\node[vtx] (a3) at (0,3) {$a_3$};
\node[vtx] (b1) at (0,2) {$b_1$};
\node[vtx] (b2) at (0,1) {$b_2$};
\node[port] (b3) at (0,0) {$b_3$};

\node[vtx] (E0) at (5,5) {$E_0$};
\node[vtx] (E1) at (5,4) {$E_1$};
\node[vtx] (E2) at (5,3) {$E_2$};
\node[port] (_E0) at (5,2) {$\overline E_0$};
\node[vtx] (_E1) at (5,1) {$\overline E_1$};
\node[vtx] (_E2) at (5,0) {$\overline E_2$};

\node[lab] at (0,5.8) {$W=A\cup B$};
\node[lab] at (5,5.8) {$\mathcal F$};

\draw (a1)--(E0);
\draw (a1)--(E2);
\draw (a1)--(_E1);

\draw (a2)--(E0);
\draw (a2)--(E1);
\draw (a2)--(_E2);

\draw (a3)--(E0);
\draw (a3)--(E1);
\draw (a3)--(E2);

\draw (b1)--(E1);
\draw (b1)--(E2);
\draw (b1)--(_E0);

\draw (b2)--(_E0);
\draw (b2)--(_E1);
\draw (b2)--(_E2);

\draw[red] (b3)--(_E0);
\draw (b3)--(_E1);
\draw (b3)--(_E2);

\end{tikzpicture}
\caption{The forcing gadget $P_r$ for $r=3$. The distinguished edge $b_3\overline E_0$ is highlighted in red.}
\label{fig:forcing-gadget}
\end{figure}

The $2r$ hyperedges of $\mathcal Z$ are distinct and contain $r$ vertices each. Since every vertex of $W$ belongs to exactly one member of each pair $\{E_i,\overline E_i\}$, $P_r$ is $r$-regular and bipartite. The two partite sets, each of size $2r$, are exactly $W$ and $\mathcal F$. Moreover, $P_r$ is connected. Indeed, $E_0$ joins all vertices of $A$, $\overline E_0$ joins all vertices of $B$, and $E_1$ contains vertices of both $A$ and $B$.

\begin{lemma}
\label{lem:orientation-gadget}
The gadget $P_r$ has the following properties.
\begin{enumerate}
    \item There exists a coloring $\psi: \mathcal F \to [r]$ such that every vertex of $W$ sees all $r$ colors.
    \item No coloring $\psi':W \to [r]$ exists such that every vertex of $\mathcal F\setminus\{\overline E_0\}$ sees all $r$ colors.
\end{enumerate}
\end{lemma}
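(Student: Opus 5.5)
For the first property, I would color each complementary pair of hyperedges with a single color: set $\psi(E_i)=\psi(\overline E_i)=i+1$ for $0\le i\le r-1$. The check is then immediate. As already observed before the lemma, every $w\in W$ lies in exactly one member of each pair $\{E_i,\overline E_i\}$. So the $r$ neighbors of $w$ in $P_r$ are one hyperedge from each of the $r$ pairs. These carry the $r$ pairwise distinct colors $1,\dots,r$, so $w$ sees all of $[r]$.

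For the second property, I would argue by contradiction and use only the hyperedges $E_0,E_1,\dots,E_{r-1}$; the constraints from the sets $\overline E_i$ are not needed. Suppose $\psi':W\to[r]$ makes every vertex of $\mathcal F\setminus\{\overline E_0\}$ see all $r$ colors. Each hyperedge has exactly $r$ elements, so seeing all $r$ colors means its elements receive pairwise distinct colors. I would extract the following consequences in order.
\begin{enumerate}
\item From $E_0=A$, the colors $\psi'(a_1),\dots,\psi'(a_r)$ are pairwise distinct, so they form a bijection onto $[r]$.
\item For $1\le i\le r-2$, the set $E_i$ is rainbow and $E_i\setminus\{b_i\}=A\setminus\{a_i\}$ already uses all colors except $\psi'(a_i)$. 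This forces $\psi'(b_i)=\psi'(a_i)$.
\item In the same way, $E_{r-1}$ forces $\psi'(b_1)=\psi'(a_{r-1})$.
\end{enumerate}

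Since $r\ge 3$, the index $1$ lies in the range $1\le i\le r-2$. Taking $i=1$ in the second step gives $\psi'(b_1)=\psi'(a_1)$, so $\psi'(a_1)=\psi'(a_{r-1})$. But $r-1\ge 2$, so $a_1\ne a_{r-1}$, and this contradicts the injectivity obtained from $E_0$.

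The only delicate point is the index bookkeeping in the second and third steps. I would check that the case $i=1$ is really present, which is exactly where $r\ge3$ is used; for $r=2$ the range $1\le i\le r-2$ is empty and the argument breaks. I would also check that $a_1$ and $a_{r-1}$ are distinct. Beyond this, the proof is a direct verification.
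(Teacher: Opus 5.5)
Your proposal is correct and is essentially the paper's proof. The first part uses the same coloring $\psi(E_i)=\psi(\overline E_i)=i+1$, and the second part uses the same contradiction $\psi'(a_1)=\psi'(b_1)=\psi'(a_{r-1})$, forced by the rainbow conditions on $E_0$, $E_1$ and $E_{r-1}$; your explicit check of where $r\ge 3$ enters (that $E_1$ comes from the middle case of the definition and that $a_1\neq a_{r-1}$) spells out what the paper leaves implicit.
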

\begin{proof}
For the first statement, let $\psi(E_i)=\psi(\overline E_i)=i+1$. Then every vertex of $W$ has exactly one neighbor of each color.

For the second statement, suppose that $W$ has a coloring $\psi':W \to [r]$ such that every vertex of $\mathcal F\setminus\{\overline E_0\}$ sees all $r$ colors. Then, all vertices of $E_0 = \{a_1,\ldots,a_r\}$ must be assigned pairwise distinct colors. Moreover, the requirement that all $r$ colors appear on $E_1$ forces $\psi'(b_1)=\psi'(a_1)$, while the same requirement for $E_{r-1}$ forces $\psi'(b_1)=\psi'(a_{r-1})$. This is impossible for $r\ge3$.
\end{proof}

Introduce a new vertex $x$ and obtain $P'_r$ from $P_r$ by deleting $b_r\overline E_0$ and adding $b_rx$. Then the following statement holds.

\begin{lemma}
\label{lem:color-transmission}
Let $c: \mathcal F\cup\{x\} \to [r]$ be a coloring such that every vertex of $W$ sees all $r$ colors in $P'_r$. Then $c(x)=c(\overline E_0)$.
\end{lemma}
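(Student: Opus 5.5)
The plan is a double-counting argument on the edges between $W$ and $\mathcal F\cup\{x\}$ in $P'_r$, split by color. The key point is that a vertex of $W$ in $P'_r$ has exactly $r$ neighbors, so seeing all $r$ colors forces its neighbors to carry pairwise distinct colors. Hence every $w\in W$ has exactly one neighbor of each color $i\in[r]$.

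First I will record the degrees in $P'_r$. In $P_r$ every vertex of $\mathcal F$ has degree $r$, since $P_r$ is $r$-regular. Deleting the edge $b_r\overline E_0$ and adding $b_rx$ has three effects:
\begin{itemize}
\item every vertex of $W$ still has degree $r$;
\item every vertex of $\mathcal F\setminus\{\overline E_0\}$ still has degree $r$;
\item $\overline E_0$ now has degree $r-1$, and $x$ has degree $1$.
\end{itemize}
All edges of $P'_r$ still go between $W$ and $\mathcal F\cup\{x\}$.

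Next, fix a color $i\in[r]$ and let $D_i=c^{-1}(i)\subseteq \mathcal F\cup\{x\}$. I will count the edges between $W$ and $D_i$ in two ways.
\begin{itemize}
\item From the $W$ side, each of the $2r$ vertices of $W$ has exactly one neighbor in $D_i$. So $e(W,D_i)=2r$.
\item From the other side, $e(W,D_i)=\sum_{v\in D_i} d_{P'_r}(v)$.
\end{itemize}
It follows that $\sum_{v\in D_i} d_{P'_r}(v)\equiv 0 \pmod r$ for every color $i$.

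Now suppose $c(x)\neq c(\overline E_0)$, and let $p=c(\overline E_0)$. Then $D_p$ contains $\overline E_0$, does not contain $x$, and all its other members have degree $r$. Therefore
\[
\sum_{v\in D_p} d_{P'_r}(v)=r(|D_p|-1)+(r-1)\equiv -1 \pmod r.
\]
This is nonzero modulo $r$ because $r\ge 3$, which is a contradiction. Hence $c(x)=c(\overline E_0)$.

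The only step needing care is the degree bookkeeping after the edge swap, which is immediate. I do not expect a genuine obstacle: the argument needs only $r\ge2$ and does not use the specific hyperedges $E_i$.
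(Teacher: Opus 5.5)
Your proof is correct, but it takes a different route from the paper's. The paper argues locally. In $P'_r$ one has $N(b_{r-1})=\{\overline E_0,\overline E_1,\ldots,\overline E_{r-1}\}$ and $N(b_r)=\{x,\overline E_1,\ldots,\overline E_{r-1}\}$. The $r-1$ shared neighbors must carry $r-1$ distinct colors, so $\overline E_0$ and $x$ are both forced to take the one remaining color.

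You instead double count the edges between $W$ and each color class $D_i=c^{-1}(i)$. This gives $\sum_{v\in D_i}d_{P'_r}(v)=|W|=2r\equiv 0\pmod r$. A class containing $\overline E_0$ (degree $r-1$) but not $x$ (degree $1$) would have degree sum $\equiv -1\pmod r$. Your degree bookkeeping after the edge switch is right, and the contradiction holds for every $r\ge 2$.

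What each approach buys:
\begin{itemize}
\item The paper's argument is shorter. However, it relies on a particular feature of the gadget: a vertex ($b_{r-1}$) that shares $r-1$ neighbors with $b_r$ and is adjacent to $\overline E_0$.
\item Your argument does not depend on the specific hyperedges $E_i$. It uses only that $P_r$ is bipartite and $r$-regular with $r\mid |W|$. Hence the transmission property $c(x)=c(F)$ would hold whichever edge $wF$ of $P_r$ (with $w\in W$, $F\in\mathcal F$) were replaced by $wx$. It is also the same color-class edge counting the paper uses in Theorem~\ref{thm:extremal-regular} and in the lower-bound proofs for M\"obius ladders and prisms.
\end{itemize}
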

\begin{proof}
Recall that
\[
\begin{aligned}
    N_{P'_r}(b_{r-1})&=\{\overline E_0,\ldots,\overline E_{r-1}\},\\
    N_{P'_r}(b_r)&=\{x,\overline E_1,\ldots,\overline E_{r-1}\}.
\end{aligned}
\]
These two neighborhoods have $r-1$ common vertices, and $c$ assigns them pairwise distinct colors. Hence, the two vertices belonging to only one of the discussed neighborhoods, namely $x$ and $\overline E_0$, must be assigned the same color under $c$.
\end{proof}

\begin{figure}[ht]
\centering
\resizebox{0.7\columnwidth}{!}{%
\begin{tikzpicture}[
    every node/.style={font=\small},
    vtx/.style={
        circle,
        draw,
        minimum size=6mm,
        inner sep=0pt
    },
    box/.style={
        draw,
        rounded corners,
        inner sep=4pt
    },
    del/.style={
        red,
        dashed,
        thick
    },
    add/.style={
        blue,
        thick
    }
]

\node[box, minimum width=4.2cm, minimum height=2.8cm] at (0,0) {};
\node at (0,1.1) {$I_j$};

\node[vtx] (xj) at (-1.0,0) {$x_j$};
\node[vtx] (Fj) at (1.0,0) {$F_j$};

\node[font=\footnotesize] at (-1.0,-0.8) {$x_j\in X_j$};
\node[font=\footnotesize] at (1.0,-0.8) {$F_j\in\mathcal E_j$};

\draw[del] (xj)--(Fj);

\node[box, minimum width=4.2cm, minimum height=2.8cm] at (6,0) {};
\node at (6,1.1) {$P_r^j$};

\node[vtx] (E0) at (5.0,0) {$\overline E_0^j$};
\node[vtx] (br) at (7.0,0) {$b_r^j$};

\node[font=\footnotesize] at (5.0,-0.8)
{$\overline E_0^j\in\mathcal F^j$};

\node[font=\footnotesize] at (7.0,-0.8)
{$b_r^j\in W^j$};

\draw[del] (E0)--(br);

\draw[add] (xj) to[bend left=15] (br);
\draw[add] (Fj) to[bend right=15] (E0);

\node[align=center, font=\footnotesize] at (3,-2.0) {%
\textcolor{red}{deleted edges}\qquad
\textcolor{blue}{added edges}
};

\end{tikzpicture}%
}
\caption{The switching step in the construction of $G_j$. Starting from the disjoint union of the incidence component $I_j$ and the gadget $P_r^j$, we delete the edges $x_jF_j$ and $b_r^j\overline E_0^j$, and add the edges $x_jb_r^j$ and $F_j\overline E_0^j$.}
\label{fig:switchingF}
\end{figure}

We now present the main theorem of this section. The switching step of its proof is illustrated in Figure~\ref{fig:switchingF}.

\begin{theorem}
\label{thm:np-complete-regular}
Given a bipartite $r$-regular graph $G$, deciding whether
\[
    \widetilde{\gamma}_r(G)=\frac{|V(G)|}{2}
\]
is NP-complete for every fixed $r\ge3$.
\end{theorem}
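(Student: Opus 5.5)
Membership in NP is immediate: for $G$ given, $|V(G)|/2$ is a number we can compute, and deciding $\widetilde{\gamma}_r(G)\le |V(G)|/2$ is an instance of \textsc{Rainbow $r$-Domination}. Combined with Corollary~\ref{cor:regular-lower}, which gives $\widetilde{\gamma}_r(G)\ge n/2$, this shows that equality can be certified in polynomial time. For hardness we reduce from \textsc{Strong $r$-Coloring of Uniform Regular Hypergraphs}.

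\textbf{Construction.} Given a simple linear $r$-uniform $r$-regular hypergraph $\mathcal H=(X,\mathcal E)$, let $I$ be its incidence graph. It is bipartite with sides $X$ and $\mathcal E$, and it is $r$-regular because $\mathcal H$ is $r$-uniform and $r$-regular. By double counting, $|X|=|\mathcal E|$.

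Let $I_1,\dots,I_m$ be the connected components of $I$. In each component we choose one edge $x_jF_j$ with $x_j\in X_j$ and $F_j\in\mathcal E_j$. We then perform the switching of Figure~\ref{fig:switchingF} with a fresh copy $P_r^j$ of the gadget: delete $x_jF_j$ and $b_r^j\overline E_0^j$, and add $x_jb_r^j$ and $F_j\overline E_0^j$.

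The resulting graph $G$ is $r$-regular and bipartite, with sides $X\cup\bigcup_j\mathcal F^j$ and $\mathcal E\cup\bigcup_j W^j$. The orientation is the point of the gadget: hyperedge vertices and gadget $W$-vertices are intended to be uncolored, and vertex-side and gadget $\mathcal F$-vertices colored. The construction is polynomial.

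\textbf{Completeness.} Suppose $\chi$ is a strong $r$-coloring of $\mathcal H$. Color every $x\in X$ by $\chi(x)$, and color $\mathcal F^j$ by the coloring $\psi$ of Lemma~\ref{lem:orientation-gadget}(1), permuted so that $\overline E_0^j$ receives $\chi(x_j)$. This is possible since $\psi$ assigns the colors $E_i,\overline E_i\mapsto i+1$ and we may relabel colors.

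We check each uncolored vertex in turn. $F_j$ lost neighbor $x_j$ but gained $\overline E_0^j$ of the same color, so it still sees all colors. $b_r^j$ lost $\overline E_0^j$ and gained $x_j$ of the same color. All other uncolored vertices are unaffected. Theorem~\ref{thm:extremal-regular} then gives $\widetilde{\gamma}_r(G)=n/2$.

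\textbf{Soundness.} Suppose $\widetilde{\gamma}_r(G)=n/2$. By Theorem~\ref{thm:extremal-regular} there is a bipartition $(U,C)$ with a proper partition of $C$. If $G$ is connected, its bipartition is unique, so $C$ is one of the two sides. Connectivity follows because each $I_j$ stays connected after one switch: $I_j-x_jF_j$ plus the path through the connected gadget reconnects it.

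If $C=\mathcal E\cup\bigcup_j W^j$, then every vertex of $\mathcal F^j\setminus\{\overline E_0^j\}$ sees all colors on $W^j$. Those vertices are untouched by the switch, which contradicts Lemma~\ref{lem:orientation-gadget}(2). Hence $C=X\cup\bigcup_j\mathcal F^j$.

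Every $W^j$ vertex then sees all colors. Lemma~\ref{lem:color-transmission}, applied with $x=x_j$, gives $c(x_j)=c(\overline E_0^j)$. Each $F\ne F_j$ sees all colors on its members. $F_j$ sees the colors of $F_j\setminus\{x_j\}$ plus $c(\overline E_0^j)=c(x_j)$. So the restriction of $c$ to $X$ is a strong $r$-coloring.

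I expect the main obstacle to be the uniqueness-of-bipartition step, that is, showing connectivity carefully so that the wrong orientation is really excluded. If a component of $I$ is a single edge-free piece, that cannot happen since $r\ge3$. Simplicity of $G$ also needs checking: the added edges join previously non-adjacent, distinct vertices from different pieces, so no multi-edges arise.
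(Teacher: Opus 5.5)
Your proposal is correct and follows the paper's proof essentially step by step. It uses the same component-wise switching with a copy $P_r^j$ of the gadget, the same appeal to Theorem~\ref{thm:extremal-regular} to make $(U,C)$ a bipartition, Lemma~\ref{lem:orientation-gadget}(2) to rule out the wrong orientation, and Lemma~\ref{lem:color-transmission} to transfer the color of $\overline E_0^j$ to $x_j$.

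Two points need tidying. First, $G=\bigsqcup_j G_j$ is in general disconnected, so the uniqueness-of-bipartition and orientation argument must be run inside each connected $G_j$ separately, as the paper does; this works because each $G_j$ carries its own gadget. Second, your reconnection argument needs $P_r^j-b_r^j\overline E_0^j$ to be connected. This holds, for example via the path $b_r^j\,\overline E_1^j\,b_{r-1}^j\,\overline E_0^j$, or by the bridgelessness of connected regular bipartite graphs that the paper uses.
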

\begin{proof}
By Corollary~\ref{cor:regular-lower}, $\widetilde{\gamma}_r(G)\ge |V(G)|/2$. Thus, a rainbow $r$-dominating function of weight $|V(G)|/2$ certifies equality. Membership in NP follows by discussion given in Section~\ref{sec:preliminaries}.

Let $\mathcal H=(X,\mathcal E)$ be an instance of \textsc{Strong $r$-Coloring of Uniform Regular Hypergraphs}. Construct its incidence graph $I(\mathcal H)$ by joining each vertex $x\in X$ to the hyperedges containing it. Since $\mathcal H$ is $r$-uniform and $r$-regular, $I(\mathcal H)$ is $r$-regular. Let $I_1,\ldots,I_s$ be the connected components of $I(\mathcal H)$, and let $(X_j,\mathcal E_j)$ be their respective partite sets. Then, counting incidences in each $I_j$ gives $|X_j|=|\mathcal E_j|$.

For each $j\in[s]$, choose an edge $x_jF_j\in E(I_j)$,
where $x_j\in X_j$ and $F_j\in\mathcal E_j$, and add a disjoint copy $P_r^j$ of the forcing gadget introduced before. Delete edges $x_jF_j$ and $b_r^j\overline E_0^j$, and add edges $x_jb_r^j$ and $F_j\overline E_0^j$. Denote the resulting graph by $G_j$, and let $G=\bigsqcup_{j=1}^sG_j$.

The total number of vertices of $G$ is $N=|V(G)|=2|X|+4rs$. Moreover, each $G_j$ is bipartite and $r$-regular. The two partite sets are
\[
    L_j=X_j\cup\mathcal F^j,
    \qquad R_j=\mathcal E_j\cup W^j.
\]
Furthermore, since $|X_j|=|\mathcal E_j|$, the two sets $L_j$ and $R_j$ are also of equal cardinalities.

Recall that both $I_j$ and $P_r^j$ are connected. Since a connected $r$-regular bipartite graph has no bridge for $r\ge2$, neither $x_jF_j$ nor $b_r^j\overline E_0^j$ is a bridge. Therefore, after deleting the two edges, both $I_j$ and $P_r^j$ remain connected. Since the two added edges join them, each $G_j$ is connected.

The construction procedure can be performed in polynomial time for every fixed $r\ge3$. Indeed, each connected component $I_1,\ldots,I_s$ is joined with exactly one gadget of constant size $4r$.

We now prove that $\mathcal H$ admits a strong $r$-coloring if and only if $\widetilde{\gamma}_r(G)=N/2$.

\smallskip
\noindent(\,$\Rightarrow$\,) Suppose that $\mathcal H$ admits a strong $r$-coloring $\chi$. For each $j\in[s]$, let $\psi_j:\mathcal F^j\to[r]$ be a coloring as in the first point of Lemma~\ref{lem:orientation-gadget}, with the colors permuted so that $\psi_j(\overline E_0^j)=\chi(x_j)$. Construct a function $f:V(G)\to\{\emptyset,\{1\},\ldots,\{r\}\}$ such that
\[
f(v)=\begin{cases}
\{\chi(v)\}, & \text{if } v\in X_j \text{ for some } j\in[s],\\
\{\psi_j(v)\}, & \text{if } v\in\mathcal F^j \text{ for some } j\in[s],\\
\emptyset, & \text{otherwise}.
\end{cases}
\]
Fix $j\in[s]$. Since $\chi$ is a strong $r$-coloring of $\mathcal H$, every vertex $F\in\mathcal E_j\setminus\{F_j\}$ sees all $r$ colors among its neighbors in $X_j$. The vertex $F_j$ has the same neighborhood as in $I(\mathcal H)$, except that $x_j$ is replaced by $\overline E_0^j$. Since $\chi(x_j)=\psi_j(\overline E_0^j)$, $F_j$ also sees all $r$ colors.

Similarly, every vertex of $W^j\setminus\{b_r^j\}$ has the same neighborhood as in $P_r^j$, and hence by Lemma~\ref{lem:orientation-gadget} sees all $r$ colors. The vertex $b_r^j$ has the same neighborhood as in $P_r^j$, except that $\overline E_0^j$ is replaced by $x_j$. Recall that $\psi_j(\overline E_0^j)=\chi(x_j)$, and thus $b_r^j$, as well as every other uncolored vertex of $G_j$, sees all $r$ colors in its neighborhood. This proves that $f$ is a rainbow $r$-dominating function of $G$. Its weight is
\[
\begin{aligned}
w(f)
&=\sum_{j=1}^s\bigl(|X_j|+|\mathcal F^j|\bigr)\\
&=|X|+2rs\\
&=\frac{2|X|+4rs}{2}
=\frac{N}{2}.
\end{aligned}
\]
By Corollary~\ref{cor:regular-lower},
\[
\widetilde{\gamma}_r(G)\ge\frac{N}{2},
\]
and therefore
\[
\widetilde{\gamma}_r(G)=\frac{N}{2}.
\]

\smallskip
\noindent(\,$\Leftarrow$\,) Conversely, suppose that $\widetilde{\gamma}_r(G)=N/2$, and $f$ is a minimum rainbow $r$-dominating function of $G$. By Theorem~\ref{thm:extremal-regular}, $(C_f,U_f)$ is a bipartition of $G$. For each $j\in[s]$, $G_j$ is connected, and thus $C_f\cap V(G_j)=R_j$ or $C_f\cap V(G_j)=L_j$. Since the neighborhoods of the vertices in $\mathcal F^j\setminus\{\overline E_0^j\}$ were not changed by the construction, the former is impossible by Lemma~\ref{lem:orientation-gadget}. Thus, $f(w)=\emptyset$ for every vertex $w\in W^j$. Identifying each singleton $f(v)=\{i\}$ with the color $i$, Lemma~\ref{lem:color-transmission} gives $f(x_j)=f(\overline E_0^j)$ for every $j\in[s]$.

Let $\chi: X\to [r]$ be a function such that
\[
\chi(x)=i \qquad\text{if } f(x)=\{i\}.
\]
For each $j\in[s]$, the construction did not change the neighborhoods of the vertices in $\mathcal E_j$, except for $F_j$. Hence, every $F\in\mathcal E_j\setminus\{F_j\}$ contains all $r$ colors under $\chi$. Since $F_j$ is uncolored, its neighbors in $G$ receive all $r$ colors. Moreover, $f(x_j)=f(\overline E_0^j)$, so replacing $\overline E_0^j$ by $x_j$ preserves the set of colors appearing in the neighborhood of $F_j$. Therefore, $\chi$ is a strong $r$-coloring of $\mathcal H$. This completes the proof.
\end{proof}

Consequently, for every fixed $r\ge3$, the decision version of rainbow $r$-domination is NP-complete, and computing $\widetilde{\gamma}_r(G)$ is NP-hard, even for bipartite $r$-regular graphs.

\section{Exact values for selected regular families}
\label{sec:exact-families}

In this section we apply the bounds and extremal characterization from Section~\ref{sec:bounds} to obtain exact values for several standard families of regular graphs. We first analyze connected $2$-regular graphs, that is, cycles. We then consider two cubic families: M\"obius ladders and prism graphs.

\subsection{Cycles}

We denote the cycle of order $n$ by $C_n$, and label its vertices by $V(C_n)=\{v_0,v_1,\dots,v_{n-1}\}$, with indices taken modulo $n$.

For $k=2$, the following proposition restates the cycle formula established by Chellali, Haynes, and Hedetniemi \cite[Proposition~8]{chellali2026dual}, since dual-server domination coincides with rainbow $2$-domination. We include a self-contained proof using the bounds developed earlier.

\begin{proposition}
\label{prop:cycles}
For every cycle $C_n$,
\[
\widetilde{\gamma}_{k}(C_n)=
\begin{cases}
\left\lceil \dfrac{n}{3}\right\rceil, & \text{if } k=1,\\[2mm]
\dfrac{n}{2}, & \text{if } k=2 \text{ and } n\equiv 0 \pmod 4,\\[2mm]
\left\lceil \dfrac{n+1}{2}\right\rceil, & \text{if } k=2 \text{ and } n\not\equiv 0 \pmod 4,\\[2mm]
n, & \text{if } k\ge 3.
\end{cases}
\]
\end{proposition}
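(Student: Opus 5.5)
I will handle the four cases separately, each time matching a lower bound from Section~\ref{sec:bounds} with an explicit construction. The case $k\ge3$ is immediate from the discussion in Section~\ref{sec:preliminaries}: $C_n$ is $2$-regular and $k>2$, so every vertex must be colored and $\widetilde{\gamma}_k(C_n)=n$. For $k=1$, a rainbow $1$-dominating function is exactly a dominating set with all chosen vertices colored $1$. The lower bound comes from Corollary~\ref{cor:regular-lower} with $r=2$, $k=1$: writing $n=3q+s$, we get $q+\min\{1,s\}=\lceil n/3\rceil$. The upper bound comes from coloring $v_{3j+1}$ for $0\le j<\lceil n/3\rceil$, adjusting the last index modulo $n$, which is a standard dominating set.

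For $k=2$, the lower bound comes from Corollary~\ref{cor:regular-lower} with $r=k=2$, i.e. $n=4q+s$, which gives $2q+\min\{2,s\}$. For $s=0$ this is $n/2$. For $s=1$ it is $2q+1=\lceil (n+1)/2\rceil$. For $s=2,3$ it is $2q+2$, and this equals $\lceil (n+1)/2\rceil$ in both cases, since $n+1=4q+3$ or $4q+4$. So the lower bounds already match the claimed formula, and only constructions remain.

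For the constructions, when $4\mid n$ I will color with the periodic pattern $1,\emptyset,2,\emptyset$ along the cycle. Each uncolored vertex $v_{2j+1}$ then has neighbors $v_{2j}$ and $v_{2j+2}$ of different colors, giving weight $n/2$. This is also consistent with Theorem~\ref{thm:extremal-regular}, since $C_n$ is bipartite with each $C_i$ hitting every uncolored vertex once. For $n=4q+s$ with $s\in\{1,2,3\}$, I will use the pattern $1,\emptyset,2,\emptyset$ repeated $q$ times and then append a short tail. For $s=1$ I append a single colored vertex, of weight $1$. For $s=2$ I append two colored vertices, of weight $2$. For $s=3$ I append $c,\emptyset,c'$, of weight $2$.

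The tail colors must be chosen so that the uncolored vertices adjacent to the wrap-around see both colors. In the pattern, $v_{4q-1}$ is uncolored with left neighbor $v_{4q-2}$ of color $2$, so the first tail vertex should get color $1$. Likewise $v_0$ has color $1$, so nothing needs checking there. For $s=3$, the tail $v_{4q},v_{4q+1},v_{4q+2}$ can take colors $1,\emptyset,2$: then $v_{4q+1}$ sees colors $1$ and $2$, and $v_{4q-1}$ sees $2$ and $1$.

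The main obstacle is only the careful verification of these wrap-around cases together with the small cycles $n=3,5,6,7$. These I will check directly, and in all of them the lower bounds are attained.
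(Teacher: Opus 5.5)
Your proposal is correct and essentially follows the paper's proof: you match the $k=2$ lower bounds from Corollary~\ref{cor:regular-lower} with the periodic $1,\emptyset,2,\emptyset$ construction, adding short wrap-around tails that coincide with the paper's three cases, and you handle $k\ge 3$ by the same degree argument. The only deviation is at $k=1$, where you derive $\lceil n/3\rceil$ from Corollary~\ref{cor:regular-lower} (with $r=2$) and give an explicit dominating set, instead of quoting the classical domination number of cycles as the paper does.
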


Two cycles of different congruence classes with their minimum rainbow $2$-dominating functions are shown in Figure~\ref{fig:cycles-rainbow2}.

\begin{figure}[ht]
\centering
\begin{tikzpicture}[
    vertex/.style={
        circle,
        draw,
        minimum size=4.5mm,
        inner sep=0pt,
        font=\tiny
    },
    c1/.style={vertex,fill=red!25},
    c2/.style={vertex,fill=blue!25},
    empty/.style={vertex,fill=white}
]

\begin{scope}[xshift=-2.2cm]

\node[empty] (a0) at (0,1.35) {};
\node[c1] (a1) at (0.95,0.95) {1};
\node[empty] (a2) at (1.35,0) {};
\node[c2] (a3) at (0.95,-0.95) {2};
\node[empty] (a4) at (0,-1.35) {};
\node[c1] (a5) at (-0.95,-0.95) {1};
\node[empty] (a6) at (-1.35,0) {};
\node[c2] (a7) at (-0.95,0.95) {2};

\draw
(a0)--(a1)--(a2)--(a3)--(a4)--(a5)--(a6)--(a7)--(a0);

\node[font=\scriptsize] at (0,-1.85)
{(a) $C_8$};

\end{scope}

\begin{scope}[xshift=2.2cm]

\node[c1] (b0) at (0,1.35) {1};
\node[empty] (b1) at (0.73,1.14) {};
\node[c2] (b2) at (1.23,0.57) {2};
\node[empty] (b3) at (1.34,-0.19) {};
\node[c1] (b4) at (1.03,-0.88) {1};
\node[empty] (b5) at (0.38,-1.30) {};
\node[c2] (b6) at (-0.38,-1.30) {2};
\node[empty] (b7) at (-1.03,-0.88) {};
\node[c1] (b8) at (-1.34,-0.19) {1};
\node[empty] (b9) at (-1.23,0.57) {};
\node[c2] (b10) at (-0.73,1.14) {2};

\draw
(b0)--(b1)--(b2)--(b3)--(b4)--(b5)--(b6)--(b7)--(b8)--(b9)--(b10)--(b0);

\node[font=\scriptsize] at (0,-1.85)
{(b) $C_{11}$};

\end{scope}

\end{tikzpicture}
\caption{Cycles $C_8$ and $C_{11}$ with their minimum rainbow $2$-dominating functions. Numbers inside colored vertices indicate the assigned colors, and white vertices are uncolored.}
\label{fig:cycles-rainbow2}
\end{figure}

\begin{proof}
For $k=1$, the statement is the classical domination number of a cycle, see
\cite{haynes1998fundamentals}:
\[
\widetilde{\gamma}_{1}(C_n)=\gamma(C_n)=\left\lceil \frac{n}{3}\right\rceil.
\]

For $k=2$, let $q$ and $s$ be integers such that $n=4q+s$ and $0\le s<4$. Since $C_n$ is $2$-regular, Corollary~\ref{cor:regular-lower} gives
\[
\widetilde{\gamma}_{2}(C_n) \ge 2q+\min\{2,s\}.
\]
Equivalently,
\[
\widetilde{\gamma}_{2}(C_n)\ge
\begin{cases}
\dfrac{n}{2}, & \text{if } n\equiv0\pmod4,\\[2mm]
\left\lceil\dfrac{n+1}{2}\right\rceil, & \text{if } n\not\equiv0\pmod4.
\end{cases}
\]

We now prove the corresponding upper bounds by constructing rainbow $2$-dominating functions attaining them.

First, suppose that $n\equiv0\pmod4$. Let $f:V(C_n)\to\{\emptyset, \{1\},\{2\}\}$ be a function such that
\[
f(v_i)=
\begin{cases}
\{1\}, & \text{if } i\equiv 1 \pmod 4,\\
\{2\}, & \text{if } i\equiv 3 \pmod 4,\\
\emptyset, & \text{if } i \text{ is even.}
\end{cases}
\]
Since every uncolored vertex $v_i$ has neighbors $v_{i-1}$ and $v_{i+1}$, and those neighbors are assigned colors $\{1\}$ and $\{2\}$ in some order, $f$ is a rainbow $2$-dominating function. Its weight is exactly $n/2$. Hence
\[
\widetilde{\gamma}_{2}(C_n)\le\frac{n}{2}.
\]

Suppose next that $n\equiv2\pmod4$. Define
\[
f(v_i)=
\begin{cases}
\{1\}, & \text{if } i\equiv 1 \pmod 4,\\
\{2\}, & \text{if } i\equiv 3 \pmod 4 \text{ or } i=0,\\
\emptyset, & \text{otherwise.}
\end{cases}
\]
Then $f$ is a rainbow $2$-dominating function of $C_n$. Indeed, every uncolored vertex $v_i$, where $i\in\{2,4,\dots,n-2\}$, has neighbors colored with $1$ and $2$. Since exactly $n/2+1$ vertices are colored by $f$, it follows that
\[
\widetilde{\gamma}_{2}(C_n)\le\frac{n}{2}+1
=\left\lceil\frac{n+1}{2}\right\rceil.
\]

Lastly, suppose that $n$ is odd. Define
\[
f(v_i)=
\begin{cases}
\{1\}, & \text{if } i\equiv 0 \pmod 4,\\
\{2\}, & \text{if } i\equiv 2 \pmod 4,\\
\emptyset, & \text{if } i \text{ is odd.}
\end{cases}
\]
Then every uncolored vertex $v_i$ has two even-indexed neighbors, and those neighbors are assigned two distinct colors. The weight of $f$ is exactly $(n+1)/2$, and therefore
\[
\widetilde{\gamma}_{2}(C_n)\le\frac{n+1}{2}
=\left\lceil\frac{n+1}{2}\right\rceil.
\]

Combining the established lower and upper bounds gives the result for $k=2$.

Finally, if $k\ge 3$, then no vertex of $C_n$ can remain uncolored. Indeed, every vertex has degree $2$ and therefore cannot see more than two distinct colors in its neighborhood. Hence
\[
\widetilde{\gamma}_{k}(C_n)=n\qquad \text{if } k\geq 3.
\]
This completes the proof.
\end{proof}

Since every connected $2$-regular graph is a cycle, Proposition~\ref{prop:cycles} determines the rainbow $k$-domination number for all connected $2$-regular graphs.

\subsection{M\"obius ladders}

The M\"obius ladder $M_n$, where $n\ge 4$ is even, is a cubic graph obtained from the cycle
\[
    v_0v_1\cdots v_{n-1}v_0
\]
by adding the edges
\[
    v_i v_{i+n/2}
\]
for all $i\in\{0,1,\ldots,n/2-1\}$, with indices taken modulo $n$.

Kuzman obtained a formula for the classical $3$-rainbow domination number of M\"obius ladders \cite[Theorem~4.5]{kuzman2020}. For $n\equiv0,4,6\pmod{12}$, his upper-bound constructions use only empty and one-element sets, so the stated values also hold for rainbow $3$-domination. We determine the exact rainbow $3$-domination number of this family for every relevant congruence class of $n$.

\begin{theorem}
\label{thm:mobius}
Let $n\ge 4$ be even, and let $M_n$ denote the M\"obius ladder on $n$ vertices. Then
\[
\widetilde{\gamma}_{3}(M_n)=
\begin{cases}
\dfrac{n}{2}, & \text{if } n \equiv 6 \pmod{12},\\[2mm]
\dfrac{n}{2}+1, & \text{if } n \equiv 0,4,8 \pmod{12},\\[2mm]
\dfrac{n}{2}+2, & \text{if } n \equiv 2,10 \pmod{12}.
\end{cases}
\]
\end{theorem}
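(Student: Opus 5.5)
The plan is to pair the lower bounds from Section~\ref{sec:bounds} with explicit constructions, and to handle the one case where those bounds are not tight by a counting and structural argument. Throughout I use the standard fact that $M_n$ is bipartite exactly when $n/2$ is odd, that is, when $n\equiv 2\pmod 4$. Corollary~\ref{cor:cubic-extremal} gives $\widetilde{\gamma}_3(M_n)\ge n/2$ when $6\mid n$ and $\widetilde{\gamma}_3(M_n)\ge n/2+1$ otherwise.

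\textbf{Upper bounds.} I would give periodic colorings along the rim $v_0v_1\cdots v_{n-1}$.
\begin{itemize}
\item For $n\equiv 6\pmod{12}$, $M_n$ is bipartite. I color the odd-indexed vertices with colors repeating with period $3$ along the odd positions. Each even vertex $v_i$ has neighbors $v_{i\pm1}$ and $v_{i+n/2}$, where $n/2\equiv 3\pmod 6$, so I need to check that these three neighbors get three distinct colors. This is a finite check of residues modulo $6$, adjusting the period pattern if necessary.
\item For $n\equiv 0,4,8\pmod{12}$, I take Kuzman's singleton constructions, which the paper notes are available for $n\equiv 0,4\pmod{12}$. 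For $n\equiv 8$ I modify one of them by a local patch that adds one extra colored vertex.
\item For $n\equiv 2,10\pmod{12}$, I use the $6$-periodic pattern on most of the ladder and a local patch costing two extra colored vertices, giving weight $n/2+2$.
\end{itemize}
Each construction is checked by listing the colors seen by the uncolored vertices in one period and at the seam.

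\textbf{Lower bounds in the easy cases.} For $n\equiv 4,8\pmod{12}$, Corollary~\ref{cor:cubic-extremal} already gives the matching bound $n/2+1$. For $n\equiv 0\pmod{12}$, $n/2$ is even, so $M_n$ is not bipartite. Theorem~\ref{thm:extremal-regular} then rules out the value $n/2$, and since $n/2$ is also a lower bound, we get $\widetilde{\gamma}_3(M_n)\ge n/2+1$.

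\textbf{Main obstacle: $n\equiv 2,10\pmod{12}$.} Here I must exclude a rainbow $3$-dominating function $f$ of weight $n/2+1$. Suppose one exists, with $|C|=n/2+1$ and $|U|=n/2-1$.
\begin{itemize}
\item Every $u\in U$ has three colored neighbors with distinct colors, so $U$ is independent and $e(U,C)=3|U|$.
\item Double counting degrees in $C$ gives $2e(G[C])=3|C|-3|U|=6$, so $C$ spans exactly three edges.
\item For each color $i$, $e(U,C_i)=|U|$.
\end{itemize}
So away from at most six vertices incident to these three edges, the configuration looks locally like an extremal one: colored and uncolored vertices alternate, and each uncolored vertex sees colors $1,2,3$ exactly once.

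I would then show that in such an alternating region the colors are forced. Two consecutive uncolored rim vertices share a neighbor, and the rung structure links the two ``rows'' of the ladder. Together these force the coloring of the colored vertices to be periodic with period $3$ along each row, propagating across rungs. Following this propagation around the whole Möbius ladder, past the bounded defect region that contains the three $C$-edges, should give a consistency condition modulo $3$ on $n/2$ (or on the length of the defect). This condition should fail exactly when $3\nmid n$.

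This case analysis of how three internal $C$-edges can be placed, adjacent or on rungs or rims, is the hard part. I would first try to reduce it with the per-color count $e(U,C_i)=|U|$: summed against $|C_i|$, it should limit which color classes can contain the endpoints of the $C$-edges, before doing the propagation argument.
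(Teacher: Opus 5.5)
\textbf{The gap: the lower bound for $n\equiv 2,10\pmod{12}$ is not proved.} Your easy cases agree with the paper: the bounds from Corollary~\ref{cor:cubic-extremal}, non-bipartiteness for $n\equiv 0\pmod{12}$ combined with Theorem~\ref{thm:extremal-regular}, and coloring the odd vertices by $i \bmod 6$ when $n\equiv 6\pmod{12}$. But the lower bound $\widetilde{\gamma}_3(M_n)\ge n/2+2$ for $n\equiv 2,10\pmod{12}$ is the real content of the theorem, and you leave it at ``should give'' and ``should fail''. Your derivation that $C$ spans exactly three edges is correct. However, the propagation plan assumes these edges sit in a bounded defect region, and nothing you wrote localizes them. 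A priori they may lie far apart. Also, each constraint at an uncolored $v_{2j}$ involves the rung partner $v_{2j+L}$ (with $L=n/2$), so a single defect disturbs the propagation in two separated places. Without knowing where the defects are, there is no consistency condition to read off. Your fallback does not close the gap either. With $d_i=\sum_{v\in C_i}|N(v)\cap C|$, the identity $e(U,C_i)=|U|$ becomes $L-1=3|C_i|-d_i$. This only yields $d_i\equiv 1-L\pmod 3$ together with $d_1+d_2+d_3=6$. For example, when $L\equiv 2\pmod 3$ these are satisfied by $d_1=d_2=d_3=2$ (three disjoint monochromatic $C$-edges, one in each color). So counting alone excludes nothing.

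\textbf{The missing idea: localize the defect first.} The paper does this in the proof of Lemma~\ref{lem:mobius-plus-two-lower}. Pair the rim into the $L$ edges $v_{2j}v_{2j+1}$. Since $U$ is independent with $|U|=L-1$, exactly one pair avoids $U$; rotate so that it is $v_0v_1$. The rim edges $v_{2j+1}v_{2j+2}$ then force $U$ to contain the even vertex of pairs $1,\dots,t$ and the odd vertex of pairs $t+1,\dots,L-1$, for some $t\in\{0,\dots,L-1\}$. Let $s=(L-1)/2$. If $1\le t\le s$, the rung $v_{2t}v_{2(t+s)+1}$ would lie inside $U$; if $s<t\le L-2$, the rung $v_{2(t+1-s)}v_{2(t+1)+1}$ would. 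Hence $t\in\{0,L-1\}$. Up to symmetry, $U$ is the even side minus $v_0$, and the three $C$-edges are exactly the edges at $v_0$. Only now does your per-color count bite. Let $O$ be the set of odd vertices and $t_i=|N(v_0)\cap C_i|$. Then $e(U,C_i)=3|C_i\cap O|-t_i=L-1$, so $t_i\equiv 1-L\pmod 3$.
\begin{itemize}
\item If $L\equiv 2\pmod 3$, this forces $t_1=t_2=t_3=2$, contradicting $|N(v_0)|=3$.
\item If $L\equiv 1\pmod 3$, it forces $N(v_0)=\{v_1,v_{n-1},v_L\}$ to be monochromatic. That is impossible, because $v_{L-1}\in U$ is adjacent to both $v_{n-1}$ and $v_L$.
\end{itemize}

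\textbf{The upper bounds are also incomplete.} For $n\equiv 8\pmod{12}$ and $n\equiv 2,10\pmod{12}$ you give only unspecified ``local patches''. Since a $6$-periodic pattern does not close up when $6\nmid n$, the patch at the seam is exactly what has to be written down and checked. The paper gives one explicit construction for all $n\equiv 0\pmod 4$, and explicit sequences with two extra colored even vertices for $n\equiv 2,10\pmod{12}$.
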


The proof of Theorem~\ref{thm:mobius} is split into several lemmas. Their proofs are case-dependent modular constructions and counting arguments. They are given in Appendix~\ref{app:mobius-proofs}. Examples of M\"obius ladders with their minimum rainbow $3$-dominating functions are shown in Figure~\ref{fig:mobius-rainbow3}.

\begin{figure}[ht]
\centering
\begin{tikzpicture}[
    vertex/.style={
        circle,
        draw,
        minimum size=4.0mm,
        inner sep=0pt,
        font=\tiny
    },
    c1/.style={vertex,fill=red!25},
    c2/.style={vertex,fill=blue!25},
    c3/.style={vertex,fill=green!25},
    empty/.style={vertex,fill=white}
]

\begin{scope}[xshift=-2.2cm]

\foreach \i/\lab/\sty in {
0/{}/empty,
1/1/c1,
2/{}/empty,
3/2/c2,
4/{}/empty,
5/1/c1,
6/{}/empty,
7/2/c2,
8/{}/empty,
9/3/c3,
10/1/c1,
11/3/c3,
12/1/c1,
13/3/c3}
{
    \node[\sty] (a\i) at ({90-\i*360/14}:1.28) {\lab};
}

\draw
(a0)--(a1)--(a2)--(a3)--(a4)--(a5)--(a6)--(a7)--
(a8)--(a9)--(a10)--(a11)--(a12)--(a13)--(a0);

\draw (a0)--(a7);
\draw (a1)--(a8);
\draw (a2)--(a9);
\draw (a3)--(a10);
\draw (a4)--(a11);
\draw (a5)--(a12);
\draw (a6)--(a13);

\node[font=\scriptsize] at (0,-1.8) {(a) $M_{14}$};

\end{scope}

\begin{scope}[xshift=2.2cm]

\foreach \i/\lab/\sty in {
0/{}/empty,
1/1/c1,
2/{}/empty,
3/2/c2,
4/{}/empty,
5/3/c3,
6/{}/empty,
7/1/c1,
8/{}/empty,
9/2/c2,
10/{}/empty,
11/3/c3,
12/{}/empty,
13/1/c1,
14/{}/empty,
15/2/c2,
16/{}/empty,
17/3/c3}
{
    \node[\sty] (b\i) at ({90-\i*360/18}:1.38) {\lab};
}

\draw
(b0)--(b1)--(b2)--(b3)--(b4)--(b5)--(b6)--(b7)--(b8)--
(b9)--(b10)--(b11)--(b12)--(b13)--(b14)--(b15)--(b16)--(b17)--(b0);

\draw (b0)--(b9);
\draw (b1)--(b10);
\draw (b2)--(b11);
\draw (b3)--(b12);
\draw (b4)--(b13);
\draw (b5)--(b14);
\draw (b6)--(b15);
\draw (b7)--(b16);
\draw (b8)--(b17);

\node[font=\scriptsize] at (0,-1.95) {(b) $M_{18}$};

\end{scope}

\end{tikzpicture}
\caption{M\"obius ladders $M_{14}$ and $M_{18}$ with their minimum rainbow $3$-dominating functions. Numbers inside colored vertices indicate the assigned colors, and white vertices are uncolored.}
\label{fig:mobius-rainbow3}
\end{figure}

\begin{lemma}
\label{lem:mobius-upper-6}
If $n\equiv 6 \pmod{12}$, then
\[
\widetilde{\gamma}_{3}(M_n)\le \frac{n}{2}.
\]
\end{lemma}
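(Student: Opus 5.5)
The plan is a direct construction. For $n\equiv 6\pmod{12}$, write $n=12m+6$, so that $n/2=6m+3$ is odd and divisible by $3$. The Möbius ladder is then bipartite, with even-indexed and odd-indexed vertices as the two sides. Indeed, the cycle edges $v_iv_{i+1}$ change parity, and the rungs $v_iv_{i+n/2}$ also change parity because $n/2$ is odd. This matches the extremal structure of Theorem~\ref{thm:extremal-regular}. I will color the odd-indexed vertices and leave all even-indexed vertices uncolored, which gives weight exactly $n/2$.

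Define $f$ for $j\in\{0,1,\ldots,n/2-1\}$ by
\[
f(v_{2j+1})=\{(j \bmod 3)+1\},\qquad f(v_{2j})=\emptyset .
\]
Indices are taken modulo $n$, so $j$ is effectively taken modulo $n/2$. Since $3\mid n/2$, the rule $j\mapsto j\bmod 3$ is well defined on $\mathbb{Z}_{n/2}$. In particular, the coloring is consistent across the wrap-around from $v_{n-1}$ to $v_1$. This matches the pattern for $M_{18}$ in Figure~\ref{fig:mobius-rainbow3}(b).

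It remains to check that every uncolored vertex $v_{2j}$ sees all three colors. Its neighbors are $v_{2j-1}=v_{2(j-1)+1}$ with color class $j-1 \pmod 3$ and $v_{2j+1}$ with color class $j \pmod 3$. The third neighbor is $v_{2j+n/2}$. Since $2j+n/2 = 2j+6m+3 = 2(j+3m+1)+1$, its color class is $j+3m+1\equiv j+1\pmod 3$. The three classes $j-1$, $j$ and $j+1$ are pairwise distinct modulo $3$, so $v_{2j}$ sees all colors. Hence $f$ is a rainbow $3$-dominating function of weight $n/2$, which proves the lemma.

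The only delicate point is the modular bookkeeping: confirming that the rung neighbor is odd-indexed, and that the color offset it contributes is exactly $+1$ modulo $3$. Both rely on $n/2=6m+3$, which is where the hypothesis $n\equiv 6\pmod{12}$ is used.
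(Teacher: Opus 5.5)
Your proof is correct and is essentially the paper's proof. Your coloring $f(v_{2j+1})=\{(j\bmod 3)+1\}$ is exactly the paper's assignment of colors $1,2,3$ to odd indices $\equiv 1,3,5 \pmod 6$, and your check that the rung neighbor $v_{2j+n/2}$ falls in class $j+1 \pmod 3$ is the same verification; you additionally spell out the wrap-around consistency (via $3\mid n/2$), which the paper leaves implicit.
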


\begin{lemma}
\label{lem:mobius-plus-two-lower}
If $n\equiv 2,10 \pmod{12}$, then
\[
\widetilde{\gamma}_{3}(M_n)\ge \frac{n}{2}+2.
\]
\end{lemma}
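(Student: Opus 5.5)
We argue by contradiction. By Corollary~\ref{cor:cubic-extremal}, $\widetilde{\gamma}_3(M_n)\ge n/2+1$ whenever $n\equiv 2,10\pmod{12}$, since then $n\equiv 2$ or $4\pmod 6$. So it suffices to rule out a rainbow $3$-dominating function $f$ of weight exactly $n/2+1$. Suppose such an $f$ exists, and write $C$, $U$, $C_1,C_2,C_3$ as in Section~\ref{sec:preliminaries}, so $|C|=n/2+1$ and $|U|=n/2-1$.

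\emph{Step 1: global counting.} As in the proof of Theorem~\ref{thm:extremal-regular}, every $u\in U$ has its three neighbours colored with pairwise distinct colors. Hence $U$ is independent and $e(U,C)=3|U|$. It follows that $2|E(G[C])|=3|C|-3|U|=6$, so $G[C]$ has exactly three edges. Refining by colors, let $D_i$ be the sum of the $G[C]$-degrees of the vertices of $C_i$. Then
\[
3|C_i|-D_i=e(U,C_i)=|U|=\tfrac n2-1,\qquad D_1+D_2+D_3=6.
\]
Reducing modulo $3$ gives $D_i\equiv 0$ for all $i$ when $n\equiv 2\pmod{12}$, and $D_i\equiv 2$ for all $i$ (so $D_1=D_2=D_3=2$) when $n\equiv 10\pmod{12}$. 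Thus the three "defect" edges of $G[C]$ can occur only in a few configurations, which we will list explicitly.

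\emph{Step 2: local propagation.} Since $n/2$ is odd in both congruence classes, $M_n$ is bipartite. Away from the defects, the pattern must therefore alternate colored/uncolored along the rim cycle. The key rigidity comes from pairs of uncolored vertices sharing two neighbours, in the spirit of Lemma~\ref{lem:color-transmission}. The vertices $v_i$ and $v_{i+n/2+1}$ share the neighbours $v_{i+1}$ and $v_{i+n/2}$. Consequently, if both are uncolored, then
\[
f(v_{i-1})=f(v_{i+n/2+2}).
\]
Combined with the requirement that the three neighbours of each uncolored vertex carry distinct colors, this forces the colors on the colored rim vertices to follow a cyclic pattern of period $3$ (in the colored positions), i.e.\ period $6$ along the rim. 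The pattern must also be consistent under the twist $i\mapsto i+n/2$.

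\emph{Step 3: closing the cycle.} If there were no defects, traversing the whole rim would require $6\mid n$ and would put us in the extremal situation of Theorem~\ref{thm:extremal-regular}, which is excluded since $n\equiv 2,10\pmod{12}$. With the three defect edges present, we analyse each configuration allowed by Step 1: the three edges of $G[C]$ forming a path, a star, a path plus a disjoint edge, or a matching, placed on rim or rung edges. In each case we show that a defect can shift the color phase or the colored/uncolored parity only by a bounded amount. We then check that the total shift accumulated around the (twisted) cycle cannot compensate for $n/2\equiv 1$ or $5\pmod 6$; equivalently, the number of uncolored vertices cannot reach $n/2-1$.

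I expect this last case analysis to be the main obstacle. The plan there is to use the modular constraint on $(D_1,D_2,D_3)$ to prune configurations first. For example, in the class $n\equiv 10$, each color class must absorb exactly two defect endpoints, which forbids stars and forces specific color patterns on the defect edges. Only after this pruning would we examine the at most constant-size neighbourhoods of the defects, where the propagation rule from Step 2 breaks.
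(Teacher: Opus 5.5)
Your Step~1 is correct. In substance it is the same modular count the paper uses: the paper writes $t_i\equiv 1-L\pmod 3$ with $L=n/2$. The gap is that Steps~2 and~3 are a plan rather than an argument, and Step~1 alone cannot finish the proof.

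Configurations of the three edges of $G[C]$ other than a star pass Step~1:
\begin{itemize}
\item a $P_4$ whose end vertices have color~$1$ and whose middle vertices have colors $2$ and $3$;
\item three disjoint edges with two endpoints in each color class.
\end{itemize}
Both satisfy the constraint $D_1=D_2=D_3=2$ of the class $n\equiv 10\pmod{12}$. Likewise, three disjoint edges with three endpoints of color~$1$ and three of color~$2$ give $(D_1,D_2,D_3)=(3,3,0)$, which the class $n\equiv 2\pmod{12}$ allows.

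Your proposed way of excluding these, an accounting of ``phase shifts'' around the twisted rim, is never defined or carried out. The period-$3$ propagation of Step~2 only applies where the relevant vertices are uncolored. So it says nothing near the defect edges, which is exactly where all the work lies.

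The missing idea is a rigidity statement showing that only the star can occur: $U$ must be an entire side of the bipartition minus a single vertex.
\begin{itemize}
\item Split the rim into the $L$ disjoint edges $\{v_{2j},v_{2j+1}\}$. Since $U$ is independent and $|U|=L-1$, exactly one pair, say $\{v_0,v_1\}$, avoids $U$, and every other pair contains exactly one vertex of $U$.
\item Independence along the rim allows at most one switch from even to odd representatives, say after pair $t$.
\item Put $s=(L-1)/2$. If $1\le t\le s$, the rung $v_{2t}v_{2(t+s)+1}$ joins two vertices of $U$. If $s<t\le L-2$, the rung $v_{2(t+1-s)}v_{2(t+1)+1}$ does. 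Hence $t\in\{0,L-1\}$.
\end{itemize}
Up to symmetry, $U$ is then the set of all even-indexed vertices except $x=v_0$. So $G[C]$ is the star $K_{1,3}$ centered at $x$, and your Step~1 finishes at once:
\begin{itemize}
\item For $n\equiv 10\pmod{12}$, the color class of $x$ has degree sum at least $3>2$, a contradiction.
\item For $n\equiv 2\pmod{12}$, the constraint $D_i\in\{0,3,6\}$ forces the three leaves $v_1,v_L,v_{n-1}$ to share a color. But $v_{L-1}\in U$ is adjacent to both $v_L$ and $v_{n-1}$, a contradiction.
\end{itemize}
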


\begin{lemma}
\label{lem:mobius-upper-0mod4}
If $n\equiv 0\pmod 4$, then
\[
\widetilde{\gamma}_{3}(M_n)\le \frac{n}{2}+1.
\]
\end{lemma}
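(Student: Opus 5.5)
We will prove Lemma~\ref{lem:mobius-upper-0mod4} by building an explicit rainbow $3$-dominating function of weight $n/2+1$. Write $m=n/2$, which is even. The natural bipartite pattern (color every odd vertex) fails here. When $m$ is even, the chord $v_iv_{i+m}$ joins two vertices of the same parity, so $M_n$ is not bipartite, and Theorem~\ref{thm:extremal-regular} rules out weight $n/2$. The idea is to alternate on each half of the ladder with a parity shift between the halves. This costs exactly one extra colored vertex where the two halves meet.

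\textbf{The uncolored set.} Define
\[
U=\{v_i: 2\le i\le m-2,\ i \text{ even}\}\ \cup\ \{v_{m+i}: 1\le i\le m-1,\ i\text{ odd}\},
\]
and let $C=V(M_n)\setminus U$. Then $|U|=(m/2-1)+m/2=m-1$, so $|C|=m+1=n/2+1$. We first check that every vertex of $U$ has all three neighbors in $C$.
\begin{itemize}
\item For even $i$ with $2\le i\le m-2$, the cycle neighbors $v_{i\pm1}$ have odd indices below $m$, so they lie in $C$. The chord neighbor $v_{m+i}$ has an even offset, so it also lies in $C$.
\item For odd $i$, the vertex $v_{m+i}$ has cycle neighbors $v_{m+i\pm1}$ with even offsets. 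These are in $C$, including $v_m$ and $v_{m+(m-1)+1}=v_0$, because $v_0$ was deliberately excluded from $U$. Its chord neighbor $v_i$ has odd $i$, so it is in $C$.
\end{itemize}
Excluding $v_0$ is precisely what repairs the conflict at the wrap-around edge $v_{n-1}v_0$. That exclusion is the single extra colored vertex.

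\textbf{The coloring.} It remains to color $C$ with $3$ colors so that the three neighbors of each $u\in U$ receive distinct colors. Equivalently, we need a proper $3$-coloring of the auxiliary graph $H$ on $C$ in which two colored vertices are adjacent whenever they share a neighbor in $U$. Each $u\in U$ contributes a triangle to $H$.

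Listing $C$ in the order $v_1,v_3,\dots,v_{m-1}$ (odd indices) and $v_m,v_{m+2},\dots,v_{n-2},v_0$ (even offsets), the triangles of $H$ are as follows.
\begin{itemize}
\item The vertex $v_i$ (even $i$) gives the triangle $\{v_{i-1},v_{i+1},v_{m+i}\}$.
\item The vertex $v_{m+i}$ (odd $i$) gives the triangle $\{v_{m+i-1},v_{m+i+1},v_i\}$.
\end{itemize}
So $H$ is a ``ladder of triangles'' linking the sequence $v_1,v_3,\dots$ to the sequence $v_m,v_{m+2},\dots$, with consecutive triangles sharing an edge. A graph formed by a strip of triangles glued along edges is uniquely $3$-colorable once the first triangle is colored, so the colors are forced along the strip. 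We will therefore write down the coloring explicitly, using a period-$3$ assignment along the combined sequence. Concretely, color $v_{2j+1}$ and $v_{m+2j}$ by residues of $j$ (or $j+1$) modulo $3$, chosen so that each triangle receives all three colors.

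\textbf{Main obstacle.} The delicate part is the two ends of the strip: the triangles around $v_2$ and $v_{n-1}$, involving $v_1$, $v_m$, $v_0$, $v_{m-1}$ and $v_{n-2}$. The forced periodic coloring must close up consistently there, and whether it does depends on $m \bmod 3$, that is, on $n \bmod 12$. My plan is to first fix the periodic pattern and verify the interior triangles uniformly. Then I will handle the three classes $n\equiv 0,4,8 \pmod{12}$ separately. In each class I will either confirm that the end triangles are automatically rainbow, or adjust the colors of the boundary vertices $v_0$ and $v_m$, and if necessary $v_1$, by a local permutation. Since $v_0$ is adjacent in $H$ only to the last triangle, its color is nearly free, which should give enough slack. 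Small cases such as $n=4,8,12$ will be checked directly.
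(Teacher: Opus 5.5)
Your construction is the paper's, up to the automorphism $v_i\mapsto v_{-1-i}$: it carries your uncolored set onto the paper's and your extra colored vertex $v_0$ onto the paper's $v_{n-1}$, and the paper's mod-$6$ pattern is exactly a period-$3$ coloring along your strip of triangles. The ``main obstacle'' you anticipate does not arise, and no case split modulo $12$ is needed: every triangle of $H$ is a consecutive triple of the \emph{open} sequence $v_m,v_1,v_{m+2},v_3,\dots,v_{n-2},v_{m-1},v_0$, so $H$ is just the square of a path, and any period-$3$ coloring along it (for instance your $v_{2j+1}\mapsto j$, $v_{m+2j}\mapsto j+1 \pmod 3$) works uniformly for every even $m$, with no dependence on $m \bmod 3$.
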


\begin{lemma}
\label{lem:mobius-upper-2mod12}
If $n\equiv 2\pmod{12}$, then
\[
\widetilde{\gamma}_{3}(M_n)\le \frac{n}{2}+2.
\]
\end{lemma}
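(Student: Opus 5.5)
The plan is a direct construction. Write $n=12m+2$ with $m\ge1$. We give an explicit rainbow $3$-dominating function $f$ of $M_n$ with exactly $n/2+2=6m+3$ colored vertices. No lower-bound argument is needed here, since Lemma~\ref{lem:mobius-plus-two-lower} supplies the matching bound. The construction will extend the minimum function shown in Figure~\ref{fig:mobius-rainbow3}(a), which is the case $m=1$.

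\textbf{Why a defect is needed.} Since $n/2=6m+1$ is odd, every rung $v_iv_{i+n/2}$ joins vertices of opposite parity, so $M_n$ is bipartite with the even and odd vertices as its two sides. Suppose we color all odd vertices $v_{2j+1}$ by $c(j)$, where $j\in\mathbb Z_{6m+1}$. The even vertex $v_{2j}$ then has neighbors indexed by $j-1$, $j$ and $j+3m$, and these three colors must be distinct. Any rule that depends only on $j \bmod 3$ gives $c(j+3m)=c(j)$ away from the wrap, because $3m\equiv 0\pmod 3$. This is consistent with Theorem~\ref{thm:extremal-regular}: weight $n/2$ is impossible here because $6\nmid n$. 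So the function must contain a local defect.

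\textbf{The construction.} Following the $M_{14}$ example, we split the rim into two arcs.
\begin{enumerate}
\item A long arc carries a periodic pattern of period $6$ along the rim. It alternates uncolored vertices with colored ones whose colors run $1,2,3,1,2,3,\dots$, as in the $M_{18}$ picture in Figure~\ref{fig:mobius-rainbow3}(b).
\item A short repair arc, of bounded length independent of $m$, contains one stretch of consecutively colored vertices. In $M_{14}$ this is $v_9,\dots,v_{13}$ with colors $3,1,3,1,3$. These consecutive colored vertices shift the parity, so that the rung partners of the remaining uncolored vertices pick up the missing third color.
\end{enumerate}
To obtain $M_{12m+2}$ from $M_{14}$, we insert $m-1$ blocks of $6$ consecutive rim vertices on each side of the rung structure. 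Each block pair consists of $12$ vertices, $6$ of them colored, so every block pair adds exactly $6$ to both $n/2$ and the weight. The block colors are chosen so that the color sequence across each rung is shifted by a fixed permutation of $\{1,2,3\}$.

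\textbf{Verification.} We first count colored vertices to confirm the weight is $6m+3$. We then check every uncolored $v_i$, splitting into cases by $i \bmod 12$ and by whether $v_i$ or its rung partner $v_{i+6m+1}$ lies in the repair arc. Interior vertices of the periodic arc are handled uniformly: the two rim neighbors carry two distinct colors, and the period-$6$ pattern read at offset $6m+1\equiv 1\pmod 6$ supplies the third.

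\textbf{Main obstacle.} The hard part is the seam, namely the uncolored vertices adjacent to the repair arc and those whose rung partner falls inside it. Here the color sequence must be chosen so that the offset-$(6m+1)$ matching stays consistent for every $m$. I would first fix the insertion point just after $v_7$ and its antipode in the $M_{14}$ pattern. I would then check by hand the finitely many boundary vertices, roughly the six on each side of each seam, which is a finite verification valid for all $m\ge1$.
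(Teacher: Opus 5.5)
Your proposal has a genuine gap. The lemma is purely constructive, yet you never pin down the function. The block colors are only ``chosen so that the color sequence across each rung is shifted by a fixed permutation,'' and all checking is deferred. Worse, the one verification principle you state is false. Suppose an uncolored $v_i$, its rim neighbors and its rung partner $v_{i+6m+1}$ all lie in one arc carrying a single period-$6$ pattern. Then along that arc the partner sits at distance $\pm(6m+1)\equiv\pm1\pmod 6$ from $v_i$, so it repeats the color of $v_{i+1}$ or of $v_{i-1}$. This is exactly the obstruction $c(j+3m)=c(j)$ you derived yourself under ``Why a defect is needed.'' An arc covering all but boundedly many rim vertices contains most rung pairs, so ``one long periodic arc plus one short repair arc'' cannot work. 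It does not even describe $M_{14}$, whose odd vertices $v_1,v_3,v_5,v_7$ carry $1,2,1,2$ rather than a $1,2,3$ pattern. In any working construction the two antipodal halves of the rim carry the period-$6$ pattern in different phases. There are therefore two seams: the stretch colored $3,1,3,1,3$, and its antipode, the $1,2,1,2$ stretch.

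Your insertion idea can nevertheless be completed, and it then yields the paper's function. Insert $m-1$ consecutive copies of a block $X_1,\dots,X_6$ between $v_0$ and $v_1$, colored $1,\emptyset,2,\emptyset,3,\emptyset$. Insert $m-1$ copies of a block $Y_1,\dots,Y_6$ between $v_7$ and $v_8$, colored $\emptyset,3,\emptyset,1,\emptyset,2$. The two insertion points are antipodal, so all old rungs survive and the new rungs join $X_k$ to $Y_k$ within corresponding copies. Every uncolored vertex then sees all three colors:
\begin{itemize}
\item $X_2,X_4,X_6$ see $\{1,2\},\{2,3\},\{3,1\}$ on the rim and $3,1,2$ across the rung. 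The right neighbor of $X_6$ is the next $X_1$ or $v_1$, both colored $1$.
\item $Y_1,Y_3,Y_5$ see $\{2,3\},\{3,1\},\{1,2\}$ on the rim and $1,2,3$ across the rung. The left neighbor of $Y_1$ is $v_7$ or the previous $Y_6$, both colored $2$.
\item $v_0$ sees $3,1$ on the rim and $v_7$, colored $2$.
\item $v_8$ sees $2,3$ on the rim and $v_1$, colored $1$.
\item $v_2,v_4,v_6$ are unchanged.
\end{itemize}
The weight is $9+6(m-1)=n/2+2$. Reading the odd vertices cyclically gives $1,2,1,2,(3,1,2)^{m-1},3,3,3,(1,2,3)^{m-1}$. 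This is exactly the paper's sequence $a_j$, with the two even vertices inside the $3,3,3$ run colored $1$. The paper obtains it more directly by fixing $a_0,\dots,a_s$ and defining $a_{j+s}$ as the color missing from $\{a_{j-1},a_j\}$. Then the even vertices $v_{2j}$ with $1\le j\le s$ are satisfied by definition, and only the remaining indices need a separate check.
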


\begin{lemma}
\label{lem:mobius-upper-10mod12}
If $n\equiv 10\pmod{12}$, then
\[
\widetilde{\gamma}_{3}(M_n)\le \frac{n}{2}+2.
\]
\end{lemma}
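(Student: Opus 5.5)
Write $n=12m+10$ with $m\ge 0$, so $n/2=6m+5$ is odd. The plan is to color every odd-indexed vertex of $M_n$, leave almost every even-indexed vertex uncolored, and then color the at most two even-indexed vertices whose rainbow condition fails. That gives weight at most $n/2+2$. Because $n/2$ is odd, the rung $v_iv_{i+n/2}$ joins vertices of opposite parity. Hence every neighbor of an even vertex is odd, and only the colors on the odd vertices matter.

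Index the odd vertices by $t\in\mathbb Z_{6m+5}$ via $v_{2t+1}$, and write $c_t\in[3]$ for the color of $v_{2t+1}$. The neighbors of the even vertex $v_{2t}$ are $v_{2t-1}$, $v_{2t+1}$ and $v_{2t+n/2}$. These correspond to the odd indices $t-1$, $t$ and $t+s$, where $s=(n/2-1)/2=3m+2$. So $v_{2t}$ may stay uncolored exactly when $c_{t-1},c_t,c_{t+s}$ are pairwise distinct. The key observation is that $2s=6m+4\equiv -1 \pmod{6m+5}$. Thus $t-1\equiv t+2s$, and the triple is $\{t,\,t+s,\,t+2s\}$. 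Moreover, $s$ is invertible modulo $6m+5$, since $2s\equiv-1$. Substitute $t=sj$ and set $d_j=c_{sj}$. The condition then says that $d_j,d_{j+1},d_{j+2}$ must be pairwise distinct for all $j\in\mathbb Z_{6m+5}$. In other words, $d$ should be a proper $3$-coloring of the square of the cycle $C_{6m+5}$.

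A perfect such coloring is impossible, because $3\nmid 6m+5$; this matches Lemma~\ref{lem:mobius-plus-two-lower}. We therefore take $d_j\equiv j \pmod 3$ (colors $1,2,3$ for residues $0,1,2$) for $j=0,\dots,6m+4$. We then check where the window condition fails. For $0\le j\le 6m+2$ the three indices are consecutive integers inside the range, so they have distinct residues. At the seam, the window $j=6m+3$ gives residues $(0,1,0)$ and the window $j=6m+4$ gives $(1,0,1)$. Both of these fail, and these are the only failures. Translating back through $t=sj$, there are exactly two even vertices $v_{2t}$ that do not see all three colors. We give each of them an arbitrary color. Every remaining uncolored vertex is even and sees three distinct colors, so the resulting function is a rainbow $3$-dominating function of weight $n/2+2$.

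The step I expect to be the real obstacle is finding the right reparametrization. A naive period-$3$ coloring along the original odd positions fails at \emph{every} vertex, since $s\equiv 2\pmod 3$ puts $c_{t+s}$ on the same residue as $c_{t-1}$. The substitution $t=sj$, which uses $2s\equiv-1$, is what turns the constraint into the square-of-cycle coloring problem and confines the defects to the seam. The remaining checks are routine: that $M_n$ with $n/2$ odd really has the stated neighborhoods, and the seam computation. The small case $m=0$ ($n=10$) can be confirmed directly.
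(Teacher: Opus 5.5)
Your proof is correct, and it takes a genuinely different route from the paper's.

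The paper handles $n=10$ by an explicit assignment. For $n=12q+10$ with $q\ge 1$, it writes down an ad hoc sequence $b_0,\dots,b_{L-1}$ on the odd vertices:
\begin{itemize}
\item a fixed prefix $(1,2,1,3,1,2)$, followed by blocks $(3,1,2)$ up to index $s$;
\item then $b_{j+s}$ is defined as the color missing from $\{b_{j-1},b_j\}$, which makes windows $j=1,\dots,s$ rainbow by construction;
\item the remaining windows are left to ``a direct check'', and the two defective even vertices $v_{2(s+2)}$ and $v_{2(s+4)}$ are colored.
\end{itemize}

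You instead note that $2s\equiv-1\pmod L$ turns each neighborhood into the progression $\{t,t+s,t+2s\}$. The relabeling $t=sj$ then identifies the constraint system with proper $3$-coloring of the square of $C_L$. For the coloring $d_j\equiv j\pmod 3$, the only bad windows are the two at the seam, and checking this is a one-line residue computation. In the original labeling, your two extra colored vertices are $v_2$ and $v_{L+1}$.

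What your approach buys:
\begin{itemize}
\item It is uniform in $m\ge 0$, so the separate check of $n=10$ you mention at the end is unnecessary; your seam computation already covers it.
\item It replaces the paper's unwritten case check with a transparent verification, and it explains where the $+2$ comes from.
\item The same reparametrization handles every class with $L$ odd at once. For $n\equiv 2\pmod{12}$ it again gives exactly two defects (Lemma~\ref{lem:mobius-upper-2mod12}), and for $n\equiv 6\pmod{12}$ it gives none (Lemma~\ref{lem:mobius-upper-6}).
\item It also makes the final counting step of Lemma~\ref{lem:mobius-plus-two-lower} transparent. A coloring of $C_L^2$ in which all windows but one are rainbow is forced to be $3$-periodic, which is impossible when $3\nmid L$.
\end{itemize}
The paper's version stays in the original coordinates, closer to a picture one can draw, but its correctness rests on the unstated verification.

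A minor point: your aside that the naive period-$3$ coloring fails at \emph{every} even vertex is overstated near the wrap-around. For $n=10$ it actually works at $v_0$. This plays no role in your argument.
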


\begin{proof}[Proof of Theorem~\ref{thm:mobius}]
Since $M_n$ is cubic, Corollary~\ref{cor:cubic-extremal} gives
\[
\widetilde{\gamma}_{3}(M_n)\ge
\begin{cases}
\dfrac{n}{2}, & \text{if } n\equiv0,6\pmod{12},\\[2mm]
\dfrac{n}{2}+1, & \text{if } n\equiv2,4,8,10\pmod{12}.
\end{cases}
\]

First, suppose that $n\equiv6\pmod{12}$. Then Lemma~\ref{lem:mobius-upper-6} gives
\[
\widetilde{\gamma}_{3}(M_n)\le\frac{n}{2},
\]
and therefore
\[
\widetilde{\gamma}_{3}(M_n)=\frac{n}{2}.
\]

Now suppose that $n\equiv0,4,8\pmod{12}$. If $n\equiv4,8\pmod{12}$, Corollary~\ref{cor:cubic-extremal} gives
\[
\widetilde{\gamma}_{3}(M_n)\ge\frac{n}{2}+1.
\]
If $n\equiv0\pmod{12}$, then $M_n$ is not bipartite. Indeed, in this case $n/2$ is even and every edge $v_iv_{i+n/2}$ joins vertices of the same parity. Consequently, by Theorem~\ref{thm:extremal-regular},
\[
\widetilde{\gamma}_{3}(M_n)\ge\frac{n}{2}+1.
\]
Since Lemma~\ref{lem:mobius-upper-0mod4} gives the corresponding upper bound, it follows that
\[
\widetilde{\gamma}_{3}(M_n)=\frac{n}{2}+1.
\]

Lastly, suppose that $n\equiv 2,10\pmod{12}$. Then Lemma~\ref{lem:mobius-plus-two-lower} gives
\[
\widetilde{\gamma}_{3}(M_n)\ge\frac{n}{2}+2.
\]
Combining this with Lemma~\ref{lem:mobius-upper-2mod12} for $n\equiv2\pmod{12}$ and Lemma~\ref{lem:mobius-upper-10mod12} for $n\equiv10\pmod{12}$, we obtain
\[
\widetilde{\gamma}_{3}(M_n)=\frac{n}{2}+2.
\]

This completes the proof.
\end{proof}

\subsection{Prism graphs}

The prism graph $Pr(m)$ is the Cartesian product
\[
    Pr(m)=C_m\square K_2.
\]
Equivalently, it has vertex set
\[
    \{u_0,u_1,\ldots,u_{m-1}\}\cup
    \{v_0,v_1,\ldots,v_{m-1}\},
\]
and edge set consisting of the two cycles
\[
    u_0u_1\cdots u_{m-1}u_0,\qquad
    v_0v_1\cdots v_{m-1}v_0,
\]
together with the matching edges $u_i v_i$ for $i\in\{0,1,\ldots,m-1\}$. Again, indices are taken modulo $m$.

We now give the exact value of the rainbow $k$-domination number of prism graphs.

\begin{theorem}
\label{thm:prisms}
Let $m\ge 3$. Then, for every integer $k\ge 1$,
\[
\widetilde{\gamma}_{k}(Pr(m))=
\begin{cases}
\dfrac{m}{2}, & \text{if } k=1 \text{ and } m\equiv 0 \pmod 4,\\[2mm]
\dfrac{m+1}{2}, & \text{if } k=1 \text{ and } m\equiv 1,3 \pmod 4,\\[2mm]
\dfrac{m}{2}+1, & \text{if } k=1 \text{ and } m\equiv 2 \pmod 4,\\[2mm]
m, & \text{if } k=2,\\[1mm]
m, & \text{if } k=3 \text{ and } m\equiv 0 \pmod 6,\\[1mm]
m+1, & \text{if } k=3 \text{ and } m\equiv 1,3,5 \pmod 6,\\[1mm]
m+2, & \text{if } k=3 \text{ and } m\equiv 2,4 \pmod 6,\\[1mm]
2m, & \text{if } k\ge 4.
\end{cases}
\]
\end{theorem}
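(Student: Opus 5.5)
The plan is to treat the four regimes of $k$ separately, in each case combining a lower bound from Section~\ref{sec:bounds} (or a short direct counting argument) with an explicit periodic construction along the rungs $\{u_i,v_i\}$. The case $k\ge4$ is immediate: $Pr(m)$ is cubic, so by the remark in Section~\ref{sec:preliminaries} every vertex must be colored and $\widetilde{\gamma}_k(Pr(m))=2m$.

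For $k=1$ the parameter is the ordinary domination number. The lower bound $\lceil 2m/4\rceil=\lceil m/2\rceil$ comes from Corollary~\ref{cor:regular-lower} with $r=3$, $k=1$. For $m\equiv 0\pmod 4$ I will use the pattern that dominates rung $i$ from $u_i$ for $i\equiv0\pmod4$ and from $v_i$ for $i\equiv2\pmod4$. For odd $m$ I will use the same pattern and patch the wrap-around with one extra vertex. For $m\equiv2\pmod4$ I must show that weight $m/2$ is impossible. Equality $|C|=m/2$ forces every vertex to be dominated exactly once, i.e.\ $C$ is a perfect code. A perfect code can contain at most one vertex per rung, and consecutive chosen rungs must be exactly two apart and on opposite cycles. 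This forces the alternation $u,v,u,v,\dots$ with period $4$, contradicting $m\equiv 2\pmod 4$. The upper bound $m/2+1$ is again a patched periodic pattern.

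For $k=2$, Corollary~\ref{cor:regular-lower} only gives roughly $4m/5$, so I will instead bound $|U|\le m$ directly, in the spirit of Proposition~\ref{prop:sparse-lower}. Classify each rung by its number of uncolored vertices ($0$, $1$ or $2$). If both $u_i,v_i$ are uncolored, each already has one uncolored neighbor, so $u_{i\pm1},v_{i\pm1}$ must all be colored. Hence every $2$-rung is followed by a $0$-rung, and the map sending a $2$-rung to its successor is injective into the $0$-rungs. Therefore the average number of uncolored vertices per rung is at most $1$, giving $\widetilde{\gamma}_2\ge m$. For the upper bound with $m$ even, color $u_i$ with color $1$ for even $i$ and $v_i$ with color $2$ for odd $i$. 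Then each uncolored vertex sees its rung partner and its two cycle neighbors, which carry both colors. For odd $m$ I expect to need a local modification near the wrap-around, for instance inserting a $2$-rung/$0$-rung pair with colors $1,2$ arranged so that the neighboring uncolored vertices still see both colors, while keeping weight exactly $m$. Finding this patch is a finite check that I would do by hand on a window of a few rungs.

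For $k=3$, Corollary~\ref{cor:regular-lower} (with $n=2m$, $k+r=6$) gives $m$ for $m\equiv0\pmod3$ and $m+1$ otherwise. $Pr(m)$ is bipartite iff $m$ is even, so Theorem~\ref{thm:extremal-regular} raises the bound to $m+1$ for $m\equiv3\pmod6$. The upper bounds are period-$6$ colorings along the rungs, similar to the M\"obius constructions of Kuzman, with a constant-size patch for the residues not divisible by $6$.

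The main obstacle is the lower bound $m+2$ for $m\equiv2,4\pmod6$. Suppose $|C|=m+1$ and $|U|=m-1$. Every uncolored vertex then has all three neighbors colored with distinct colors, so $e(U,C_i)=|U|=m-1\le 3|C_i|$. This gives $|C_i|\ge\lceil(m-1)/3\rceil$, and summing forces the class sizes to be essentially equal, leaving only a total slack of a few edges inside $C$. I would then run a local structural analysis along the rungs to show that this small slack cannot be placed consistently around a cycle of length $m\not\equiv0\pmod 3$. The analysis would mirror Lemma~\ref{lem:mobius-plus-two-lower}: either the local pattern is forced to be $6$-periodic, or the few defects must occur and each costs an additional colored vertex.
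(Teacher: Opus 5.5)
There is a genuine gap: the lower bound $\widetilde{\gamma}_3(Pr(m))\ge m+2$ for $m\equiv 2,4\pmod 6$ is only sketched. You must prove it for both residues, because unlike the paper you do not use Kuzman's classical value for $m\equiv 4\pmod 6$. Your count $e(U,C_i)=m-1$ shows only that $G[C]$ has exactly three edges and restricts the class sizes. It does not say where those edges lie, and the ``$6$-periodic, or each defect costs a vertex'' analysis you propose is not what closes the argument.

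The missing idea is the rung pigeonhole used in Lemma~\ref{lem:prism-lower-3}. Since $U$ is independent, each rung $\{u_i,v_i\}$ contains at most one vertex of $U$. With $|U|=m-1$, exactly one rung, say rung $0$, misses $U$. Along the path of rungs $1,\dots,m-1$, the vertices of $U$ must alternate between the two cycles. For even $m$ this forces $U=A\setminus\{x\}$, where $A$ is a bipartition class and $x$ is its vertex in rung $0$. Hence $C=B\cup\{x\}$, and the three edges of $G[C]$ are exactly the edges at $x$.

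Write $t_j=|N(x)\cap C_j|$. Counting $e(U,C_j\cap B)$ gives $m-1=3|C_j\cap B|-t_j$, so $t_j\equiv 1-m\pmod 3$.
\begin{itemize}
\item If $m\equiv 2\pmod 6$, this forces $t_1=t_2=t_3=2$, contradicting $t_1+t_2+t_3=3$.
\item If $m\equiv 4\pmod 6$, it forces $t_j\in\{0,3\}$, so $N(x)$ is monochromatic. Taking $x=u_0$, the vertex $v_1\in U$ is adjacent to $u_1,v_0\in N(x)$ and sees at most two colors.
\end{itemize}
This is the mechanism of Lemma~\ref{lem:mobius-plus-two-lower}, including its extra monochromatic-neighborhood step, and it has to be carried out rather than replaced by a periodicity heuristic.

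The rest of your plan is correct and more self-contained than the paper, which cites Kuzman for $k=1$, for $\gamma_{r2}(Pr(m))=m$, and for the $k=3$ lower bounds outside $m\equiv 2\pmod 6$. In particular, the following all work:
\begin{itemize}
\item the perfect-code argument for $k=1$, $m\equiv 2\pmod 4$;
\item the injection from rungs with two uncolored vertices into rungs with none, for $k=2$;
\item non-bipartiteness via Theorem~\ref{thm:extremal-regular} for $m\equiv 3\pmod 6$.
\end{itemize}

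Your $k=2$ construction needs no patch for odd $m$. The only adjacent uncolored pair is $v_0v_{m-1}$, and each of these still sees color $1$ on its rung partner and color $2$ on $v_1$ or $v_{m-2}$.

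The $k=3$ upper bounds, however, are only asserted; the paper takes them from \v{Z}erovnik's result on $P_2\square C_m$ plus an explicit function for $m=3$. You should write them down. For odd $m$, for example, leave $u_i$ ($i$ odd) and $v_i$ ($i$ even, $i\neq 0$) uncolored. Put $w_0=u_0$, $w_m=v_0$, and let $w_i$ be the other vertex of rung $i$ for $1\le i\le m-1$. Giving $w_i$ color $(i\bmod 3)+1$ yields weight $m+1$, because the uncolored vertex of rung $i$ has neighborhood exactly $\{w_{i-1},w_i,w_{i+1}\}$. For $m\equiv 0\pmod 6$, take $U$ to be a bipartition class. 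For even $m$, leaving two rungs free of $U$ (for instance rungs $0$ and $2$) gives weight $m+2$ in the same way.
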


The proof of Theorem~\ref{thm:prisms} utilizes the following two lemmas, whose proofs are given in Appendix~\ref{app:prism-proofs}, as well as known results for $k$-rainbow domination and rainbow $k$-domination numbers. Figure~\ref{fig:prism-rainbow} illustrates an exemplary prism graph with its minimum rainbow $2$- and $3$-dominating functions.

\begin{figure}[ht]
\centering
\begin{tikzpicture}[
    vertex/.style={
        circle,
        draw,
        minimum size=4.0mm,
        inner sep=0pt,
        font=\tiny
    },
    c1/.style={vertex,fill=red!25},
    c2/.style={vertex,fill=blue!25},
    c3/.style={vertex,fill=green!25},
    empty/.style={vertex,fill=white}
]

\begin{scope}[xshift=-2.2cm]

\node[c1] (au0) at (0,1.35) {1};
\node[c2] (au1) at (0.95,0.95) {2};
\node[empty] (au2) at (1.35,0) {};
\node[c2] (au3) at (0.95,-0.95) {2};
\node[c2] (au4) at (0,-1.35) {2};
\node[empty] (au5) at (-0.95,-0.95) {};
\node[c1] (au6) at (-1.35,0) {1};
\node[empty] (au7) at (-0.95,0.95) {};

\node[empty] (av0) at (0,0.68) {};
\node[empty] (av1) at (0.48,0.48) {};
\node[c1] (av2) at (0.68,0) {1};
\node[empty] (av3) at (0.48,-0.48) {};
\node[empty] (av4) at (0,-0.68) {};
\node[c1] (av5) at (-0.48,-0.48) {1};
\node[empty] (av6) at (-0.68,0) {};
\node[c2] (av7) at (-0.48,0.48) {2};

\draw
(au0)--(au1)--(au2)--(au3)--(au4)--(au5)--(au6)--(au7)--(au0);

\draw
(av0)--(av1)--(av2)--(av3)--(av4)--(av5)--(av6)--(av7)--(av0);

\draw
(au0)--(av0)
(au1)--(av1)
(au2)--(av2)
(au3)--(av3)
(au4)--(av4)
(au5)--(av5)
(au6)--(av6)
(au7)--(av7);

\node[font=\scriptsize] at (0,-1.85)
{(a) $\widetilde{\gamma}_2(Pr(8))=8$};

\end{scope}

\begin{scope}[xshift=2.2cm]

\node[empty] (bu0) at (0,1.35) {};
\node[c1] (bu1) at (0.95,0.95) {1};
\node[empty] (bu2) at (1.35,0) {};
\node[c2] (bu3) at (0.95,-0.95) {2};
\node[empty] (bu4) at (0,-1.35) {};
\node[c1] (bu5) at (-0.95,-0.95) {1};
\node[empty] (bu6) at (-1.35,0) {};
\node[c3] (bu7) at (-0.95,0.95) {3};

\node[c2] (bv0) at (0,0.68) {2};
\node[empty] (bv1) at (0.48,0.48) {};
\node[c3] (bv2) at (0.68,0) {3};
\node[c1] (bv3) at (0.48,-0.48) {1};
\node[c3] (bv4) at (0,-0.68) {3};
\node[empty] (bv5) at (-0.48,-0.48) {};
\node[c2] (bv6) at (-0.68,0) {2};
\node[c1] (bv7) at (-0.48,0.48) {1};

\draw
(bu0)--(bu1)--(bu2)--(bu3)--(bu4)--(bu5)--(bu6)--(bu7)--(bu0);

\draw
(bv0)--(bv1)--(bv2)--(bv3)--(bv4)--(bv5)--(bv6)--(bv7)--(bv0);

\draw
(bu0)--(bv0)
(bu1)--(bv1)
(bu2)--(bv2)
(bu3)--(bv3)
(bu4)--(bv4)
(bu5)--(bv5)
(bu6)--(bv6)
(bu7)--(bv7);

\node[font=\scriptsize] at (0,-1.85)
{(b) $\widetilde{\gamma}_3(Pr(8))=10$};

\end{scope}

\end{tikzpicture}

\caption{The prism graph $Pr(8)$ with its minimum rainbow $2$- and $3$-dominating functions. Numbers inside colored vertices indicate the assigned colors, and white vertices are uncolored.}
\label{fig:prism-rainbow}
\end{figure}

\begin{lemma}
\label{lem:prism-upper-2}
For every $m\geq3$,
\[
\widetilde{\gamma}_{2}(Pr(m))\leq m.
\]
\end{lemma}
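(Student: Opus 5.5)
The plan is to construct, for every $m\ge 3$, an explicit rainbow $2$-dominating function of $Pr(m)$ with exactly $m$ colored vertices. Each column $\{u_i,v_i\}$ will contain exactly one colored vertex, and the color will be determined by the side.

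Concretely, choose a cyclic side sequence $s_0,\dots,s_{m-1}\in\{u,v\}$. If $s_i=u$, color $u_i$ with $\{1\}$ and leave $v_i$ uncolored. If $s_i=v$, color $v_i$ with $\{2\}$ and leave $u_i$ uncolored. The weight is then exactly $m$.

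Next I derive a combinatorial condition on the side sequence that guarantees the rainbow condition. Consider the uncolored vertex of column $i$. It sees the color of its colored partner in column $i$, which is the color tied to side $s_i$. Its other two neighbors lie on its own side, namely the side opposite to $s_i$, in columns $i-1$ and $i+1$. Such a neighbor is colored exactly when $s_{i\pm1}\ne s_i$, and in that case it carries the other color. So every uncolored vertex sees both colors if and only if, for every $i$, at least one of $s_{i-1},s_{i+1}$ differs from $s_i$. Equivalently, the cyclic sequence has no maximal run of equal symbols of length at least $3$, and it is not constant.

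It remains to exhibit such a sequence for each $m\ge3$. A cyclic sequence in two symbols with at least two runs has an even number of runs, since the runs alternate. So I need to write $m$ as a sum of an even number $\ge2$ of parts, each equal to $1$ or $2$.
\begin{itemize}
\item For $m$ even, take $m$ parts equal to $1$, i.e.\ $s_i=u$ for even $i$ and $s_i=v$ for odd $i$.
\item For $m$ odd, take one part $2$ and $m-2$ parts $1$. This gives $m-1$ runs, an even number at least $2$; for $m=3$ it is the sequence $u,u,v$.
\end{itemize}
In both cases the run condition holds cyclically, including across the wrap-around from index $m-1$ to $0$.

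The only delicate point is this wrap-around and the parity of the number of runs in the odd case. Checking it explicitly, together with the verification in the second paragraph, completes the proof of $\widetilde{\gamma}_{2}(Pr(m))\le m$.
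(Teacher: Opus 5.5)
Your proposal is correct and is essentially the paper's construction. The paper colors $u_i$ with $\{1\}$ for even $i$ and $v_i$ with $\{2\}$ for odd $i$, which is your alternating side sequence; for odd $m$ the single run of length $2$ falls across the wrap-around at indices $m-1,0$. Your run-length criterion is just a slightly more general way of packaging the same neighbor check that the paper carries out directly.
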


\begin{lemma}
\label{lem:prism-lower-3}
If $m\equiv 2\pmod6$, then
\[
\widetilde{\gamma}_{3}(Pr(m))\geq m+2.
\]
\end{lemma}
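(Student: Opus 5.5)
We argue by contradiction, starting from the lower bound we already have. Put $n=2m$. Since $m\equiv 2\pmod 6$, we have $n\equiv 4\pmod{12}$, so Corollary~\ref{cor:cubic-extremal} gives $\widetilde{\gamma}_{3}(Pr(m))\ge m+1$. It therefore suffices to show that no rainbow $3$-dominating function $f$ has weight exactly $m+1$. Suppose such an $f$ exists, and let $C=C_f$ and $U=U_f$, so that $|U|=m-1$. As in the proof of Theorem~\ref{thm:extremal-regular}, every $u\in U$ has all three of its neighbors colored, and these three neighbors carry pairwise distinct colors. In particular, $U$ is independent.

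\textbf{Step 1: the structure of $U$.} Call $\{u_j,v_j\}$ the $j$-th rung. Because $U$ is independent and $u_jv_j$ is an edge, each rung contains at most one vertex of $U$. Since $|U|=m-1$, exactly one rung, say rung $i$, is fully colored, and every other rung contains exactly one uncolored vertex. For $j\neq i$, let $c_j$ denote the colored vertex of rung $j$.

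Consider two consecutive rungs $j,j+1$, both different from $i$. If their uncolored vertices lay on the same cycle, they would be adjacent, contradicting independence. Hence the uncolored vertices alternate between the $u$-cycle and the $v$-cycle along the rungs $i+1,i+2,\dots,i+m-1$. There are $m-1$ such rungs, which is an odd number, so the uncolored vertices in rungs $i+1$ and $i+m-1$ lie on the same cycle, say the $u$-cycle. Both are then adjacent to the same colored vertex $w=u_i$ of rung $i$.

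\textbf{Step 2: reduction to a cyclic color sequence.} Fix an uncolored vertex in rung $j\neq i$, say $u_j$. Its neighbors are $v_j=c_j$ together with $u_{j-1}$ and $u_{j+1}$. By the alternation in Step 1, $u_{j\pm1}$ is the colored vertex of the corresponding neighboring rung, namely $c_{j\pm1}$, except that it equals $w$ when $j\pm1=i$. Now read the cyclic sequence
\[
w,\ c_{i+1},\ c_{i+2},\ \dots,\ c_{i+m-1}
\]
of length $m$, wrapping around back to $w$. The rainbow condition at the uncolored vertex of rung $j$ says exactly that the window of three consecutive terms centered at $c_j$ consists of three distinct colors. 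This holds for every $j\ne i$. The window centered at $w$ is not constrained: $w$ is colored, and the second vertex $v_i$ of rung $i$ plays no role.

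\textbf{Step 3: periodicity and the contradiction.} We show that all cyclic windows of length three, including the one centered at $w$, have distinct entries. First, if a window $(a,b,c)$ has three distinct colors from $[3]$ and the next window $(b,c,d)$ also does, then $d=a$. Applying this along the chain of $m-2$ constrained windows shows that the sequence is $3$-periodic when read from $c_{i+m-1}$ through $w$ to $c_{i+1}$. Consequently the two entries flanking $w$, namely $c_{i+m-1}$ and $c_{i+1}$, are determined by the periodic pattern. Checking the window centered at $w$ against this pattern, the whole cyclic sequence of length $m$ is $3$-periodic. A cyclic sequence of length $m$ in which every three consecutive entries are distinct must be a rotation of $(1,2,3,1,2,3,\dots)$ up to a permutation of colors, and this requires $3\mid m$. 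That contradicts $m\equiv 2\pmod 6$.

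The main obstacle is this final step. We must carefully justify that the unconstrained window at $w$ still closes the period, or, alternatively, derive the contradiction directly from the $m-2$ constrained windows. For that, we use that $3$-periodicity along the path forces $c_{i+m-1}$, $w$ and $c_{i+1}$ into a specific relation. Combined with the two windows containing $w$, namely the ones centered at $c_{i+1}$ and at $c_{i+m-1}$, this yields $m\equiv 0\pmod 3$. We would verify this index bookkeeping explicitly: with $w=x_0$ and $c_{i+t}=x_t$, periodicity gives $x_t=x_{t \bmod 3}$ on $0\le t\le m$ with $x_m=x_0$, and the windows centered at $x_1$ and $x_{m-1}$ then force $m\equiv 0\pmod 3$.
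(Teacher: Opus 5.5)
Your proof is correct. Step 1 is the same as in the paper, but you finish by a genuinely different route. The paper notes that, since $m$ is even, $Pr(m)$ is bipartite, so Step 1 gives $U=A\setminus\{x\}$ for a bipartition class $A$. It then double counts the edges between $U$ and the color-$j$ vertices $B_j$ on the other side. This gives $m-1=3|B_j|-t_j$ with $t_j=|N(x)\cap B_j|$, so $t_j\equiv 2\pmod 3$ forces $t_1=t_2=t_3=2$, which contradicts $|N(x)|=3$.

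You instead propagate the rainbow condition along the cyclic color sequence. On the linear sequence $x_0,\dots,x_m$ with $x_m=x_0$, consecutive constrained windows give $x_{s+3}=x_s$ for $0\le s\le m-3$. Hence $x_0=x_m=x_{m \bmod 3}$, which contradicts the distinctness of $x_0,x_1,x_2$ unless $3\mid m$.

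What each buys:
\begin{itemize}
\item The paper's counting is shorter and reuses the template of the M\"obius ladder lemma. However, it uses $m\equiv 2\pmod 3$ essentially: for $m\equiv 1\pmod 3$ it only yields $t_j\in\{0,3\}$, and an extra local argument is needed.
\item Your propagation argument treats $m\equiv 2$ and $m\equiv 4\pmod 6$ uniformly, since it only needs $m$ even and $3\nmid m$. It would therefore also make the $m\equiv 4\pmod 6$ case of the prism theorem independent of Kuzman's classical values.
\end{itemize}

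The middle paragraph of your Step 3 should be replaced by the bookkeeping in your last paragraph, which is the actual proof. Three points need fixing there.
\begin{itemize}
\item The periodicity is obtained along $w,c_{i+1},\dots,c_{i+m-1},w$, not ``from $c_{i+m-1}$ through $w$ to $c_{i+1}$.''
\item The claim that checking the window at $w$ makes the whole cyclic sequence $3$-periodic is unjustified and circular, since it holds only when $3\mid m$.
\item There are $m-1$ constrained windows, that is, $m-2$ consecutive pairs. For the final contradiction you only need the window centered at $x_1$: here $m\equiv 2\pmod 3$ gives $x_0=x_2$.
\end{itemize}
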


\begin{proof}[Proof of Theorem~\ref{thm:prisms}]
For $k=1$, rainbow $1$-domination is equivalent to ordinary domination and $1$-rainbow domination. The stated values are already known in the literature \cite[Theorem~4.2]{kuzman2020}.

For $k=2$, Kuzman \cite[Theorem~4.2]{kuzman2020} proved that
\[
\gamma_{r2}(Pr(m))=m.
\]
Hence, by Proposition~\ref{prop:rainbow-relation},
\[
\widetilde{\gamma}_{2}(Pr(m))\ge m.
\]
By combining this with the corresponding upper bound given in Lemma~\ref{lem:prism-upper-2}, we obtain
\[
\widetilde{\gamma}_{2}(Pr(m))=m.
\]

For $k=3$, the classical values in \cite[Theorem~4.2]{kuzman2020}, together with Proposition~\ref{prop:rainbow-relation}, give the required lower bounds except when $m\equiv2\pmod6$. In that case, the required lower bound follows from Lemma~\ref{lem:prism-lower-3}.

If $m=3$, let $f$ be a function such that
\[
f(u_1)=\{1\},\qquad f(v_0)=\{2\},\qquad f(u_2)=f(v_2)=\{3\},
\]
and
\[
f(u_0)=f(v_1)=\emptyset.
\]
A direct check shows that $f$ is a rainbow $3$-dominating function of weight $4=m+1$.

If $m\geq 4$, apply \v{Z}erovnik's result \cite[Theorem~7]{zerovnik2026singleton} to the cylindrical graph $P_2\square C_m=Pr(m)$. This gives exactly the required upper bounds.

Finally, if $k\ge 4$, then no vertex of $Pr(m)$ can remain uncolored. Indeed, since $Pr(m)$ is cubic, no vertex can see $k$ distinct colors in its neighborhood. Hence
\[
\widetilde{\gamma}_{k}(Pr(m))=|V(Pr(m))|=2m.
\]

This completes the proof.
\end{proof}

\section{Dynamic programming for generalized Petersen graphs}
\label{sec:dp}

We now turn to generalized Petersen graphs of the form $P(6t,t)$. This family provides another natural cubic setting for rainbow $k$-domination and admits a finite-state dynamic-programming algorithm.

The generalized Petersen graph $P(6t,t)$ has vertex set
\[
V(P(6t,t))=
    \{u_0,u_1,\ldots,u_{6t-1}\}\cup
    \{v_0,v_1,\ldots,v_{6t-1}\},
\]
and edge set
\[
\begin{aligned}
E(P(6t,t))={}&
    \{u_i u_{i+1}:0\le i\le 6t-1\} \\
&\cup \{u_i v_i:0\le i\le 6t-1\} \\
&\cup \{v_i v_{i+t}:0\le i\le 6t-1\},
\end{aligned}
\]
with indices taken modulo $6t$. Hence $|V(P(6t,t))|=12t$. In this section we assume that $t\geq2$.

Rainbow-type domination parameters in generalized Petersen graphs have been studied in several earlier works. The case of $2$-rainbow domination was considered by Shao et al.~\cite{shao2019two}, and later more systematically for the family $P(ct,t)$ by Brezovnik, Rupnik Poklukar, and \v{Z}erovnik~\cite{brezovnik2023two}. The case of $3$-rainbow domination for $P(6t,t)$ was studied by Erve\v{s} and \v{Z}erovnik~\cite{erves2021petersen}, and broader families $P(ct,t)$ were considered by Rupnik Poklukar and \v{Z}erovnik~\cite{rupnik2022three}. These works provide the main background for the computational results and conjectures formulated below.

Since $P(6t,t)$ is cubic, every uncolored vertex can see at most three distinct colors in its neighborhood. Consequently,
\[
    \widetilde{\gamma}_{k}(P(6t,t))=|V(P(6t,t))|=12t
    \qquad \text{for every } k\ge 4.
\]
Therefore, only the cases $k\in\{1,2,3\}$ are nontrivial. In this section we present an exact algorithm for fixed $k$ and prove that it runs in linear time with respect to $t$.

For $j\in\{0,1,\ldots,t-1\}$ and $r\in\{0,1,\ldots,5\}$, define
\[
    u_j^r=u_{j+rt},
    \qquad
    v_j^r=v_{j+rt},
\]
where the superscript is taken modulo $6$. Then the vertex set of $P(6t,t)$ is partitioned into the $t$ blocks
\[
    B_j=\{u_j^r,v_j^r:r=0,1,\ldots,5\},
    \qquad
    j=0,1,\ldots,t-1.
\]
The structure of one such block was illustrated in Figure~\ref{fig:petersen-block}.

\begin{figure}[ht]
\centering
\resizebox{0.5\columnwidth}{!}{%
\begin{tikzpicture}[
    vtx/.style={
        circle,
        draw=black,
        fill=white,
        minimum size=4.5mm,
        inner sep=0pt,
        line width=0.5pt
    },
    blockvtx/.style={
        circle,
        draw=red,
        fill=white,
        minimum size=4.5mm,
        inner sep=0pt,
        line width=0.9pt
    },
    edge/.style={
        draw=black,
        line width=0.5pt
    },
    blockedge/.style={
        draw=red,
        line width=0.9pt
    }
]

\foreach \i in {0,...,17} {
    \pgfmathsetmacro{\ang}{90-20*\i}
    \node[vtx] (u\i) at (\ang:3.8) {};
    \node[vtx] (v\i) at (\ang:2.8) {};
}

\draw[edge]
(u0)--(u1)--(u2)--(u3)--(u4)--(u5)--(u6)--(u7)--(u8)--(u9)--(u10)--(u11)--(u12)--(u13)--(u14)--(u15)--(u16)--(u17)--(u0);

\foreach \i in {0,...,17} {
    \draw[edge] (u\i)--(v\i);
}

\draw[edge] (v0)--(v3)--(v6)--(v9)--(v12)--(v15)--(v0);
\draw[edge] (v1)--(v4)--(v7)--(v10)--(v13)--(v16)--(v1);
\draw[edge] (v2)--(v5)--(v8)--(v11)--(v14)--(v17)--(v2);

\draw[blockedge] (v0)--(v3)--(v6)--(v9)--(v12)--(v15)--(v0);

\foreach \i in {0,3,6,9,12,15} {
    \draw[blockedge] (u\i)--(v\i);
}

\foreach \i in {0,3,6,9,12,15} {
    \node[blockvtx] at (u\i) {};
    \node[blockvtx] at (v\i) {};
}

\end{tikzpicture}%
}
\caption{The generalized Petersen graph $P(18,3)$ with the block $B_0$ highlighted in red.}
\label{fig:petersen-block}
\end{figure}

In this representation, for each fixed $j$ the vertices
\[
    v_j^0,v_j^1,\ldots,v_j^5
\]
form a $6$-cycle, each vertex $u_j^r$ is adjacent to $v_j^r$, and the outer edges are
\[
    u_j^r u_{j+1}^r
    \qquad
    (0\le j\le t-2),
\]
together with the wrap-around edges
\[
    u_{t-1}^r u_0^{r+1}.
\]
Thus, in the context of rainbow $k$-domination, each block interacts directly only with its two neighboring blocks. Because this interaction has constant width, an algorithm of computational complexity linear in the number of blocks can be constructed.

Additionally, for $0\le j\le t-2$, the blocks $B_0\cup B_j\cup B_{j+1}$ form a path decomposition of width at most $35$. Consequently, linear-time solvability for fixed $k$ can also be derived by adapting the bounded-treewidth dynamic program for classical $k$-rainbow domination constructed by Meybodi et al.~\cite{meybodi2021treewidth} to forbid assignments of multi-element sets. Our algorithm was developed independently and explicitly utilizes rainbow $k$-domination characteristics as well as specific structural features of $P(6t,t)$.

\subsection{A linear-time algorithm}

Throughout this subsection we assume that $1\le k\le 3$.

Let
\[
    \Sigma_k=\{0,1,\ldots,k\}^6.
\]
An element
\[
    \sigma=(\sigma_0,\sigma_1,\ldots,\sigma_5)\in\Sigma_k
\]
will be called an outer state. We interpret $\sigma_r=0$ as the vertex $u_j^r$ being uncolored, and $\sigma_r=i$ as $u_j^r$ being assigned color $\{i\}$. Analogously, an element of $\Sigma_k$ will also be used as an inner state for the vertices $v_j^0,\ldots,v_j^5$.

For a state $\sigma\in\Sigma_k$, let
\[
    c(\sigma)=|\{r\in\{0,\ldots,5\}:\sigma_r\neq 0\}|
\]
denote the number of colored vertices represented by $\sigma$.

We also define two shift operators on $\Sigma_k$ by
\[
    \rho^+(\sigma)_r=\sigma_{r+1},
    \qquad
    \rho^-(\sigma)_r=\sigma_{r-1},
\]
where the indices are taken modulo $6$. These shifts are used to express the wrap-around edges joining the last block with the first one.

For three outer states $\alpha,\beta,\gamma\in\Sigma_k$, define $\mu_k(\alpha,\beta,\gamma)$ to be the minimum value of $c(\tau)$ over all inner states
\[
    \tau=(\tau_0,\ldots,\tau_5)\in\Sigma_k
\]
satisfying the following two conditions for every $r\in\{0,\ldots,5\}$.

\begin{enumerate}
    \item If $\beta_r=0$, then
    \[
        \{\alpha_r,\tau_r,\gamma_r\}\setminus\{0\}=[k].
    \]
    \item If $\tau_r=0$, then
    \[
        \{\tau_{r-1},\beta_r,\tau_{r+1}\}\setminus\{0\}=[k].
    \]
\end{enumerate}

If no such inner state exists, we set
\[
    \mu_k(\alpha,\beta,\gamma)=\infty.
\]

Condition (1) expresses that an uncolored outer vertex must see all $k$ colors among its three neighbors: one in the previous block, one in the next block, and one assigned to its incident inner vertex. Condition (2) expresses the analogous requirement for an uncolored inner vertex, whose neighbors are the two adjacent inner vertices in the $6$-cycle and the corresponding outer vertex.

The next lemma gives an exact reformulation of the problem in terms of outer states.

\begin{lemma}
\label{lem:petersen-state-formula}
For every $t\ge 2$ and every $1\le k\le 3$,
\[
\widetilde{\gamma}_{k}(P(6t,t))
=
\min_{\sigma_0,\ldots,\sigma_{t-1}\in\Sigma_k}
\Bigg(
\sum_{j=0}^{t-1} c(\sigma_j)
+\mu_k(\rho^-(\sigma_{t-1}),\sigma_0,\sigma_1)
\]
\[
+\sum_{j=1}^{t-2}\mu_k(\sigma_{j-1},\sigma_j,\sigma_{j+1})
+\mu_k(\sigma_{t-2},\sigma_{t-1},\rho^+(\sigma_0))
\Bigg).
\]
\end{lemma}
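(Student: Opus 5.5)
The plan is to establish a weight-preserving correspondence between rainbow $k$-dominating functions of $P(6t,t)$ and tuples of outer and inner states, and then to minimize over the inner states block by block. Given any $f:V(P(6t,t))\to\{\emptyset,\{1\},\ldots,\{k\}\}$, encode it by outer states $\sigma_j\in\Sigma_k$ and inner states $\tau_j\in\Sigma_k$ for $j=0,\ldots,t-1$. Put $(\sigma_j)_r=i$ if $f(u_j^r)=\{i\}$ and $(\sigma_j)_r=0$ if $u_j^r$ is uncolored, and define $\tau_j$ from the vertices $v_j^r$ in the same way. This encoding is a bijection, and
\[
w(f)=\sum_{j=0}^{t-1}\bigl(c(\sigma_j)+c(\tau_j)\bigr).
\]

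The first key step is to list the neighborhoods of all vertices in block coordinates. For the inner vertices we have $N(v_j^r)=\{v_j^{r-1},v_j^{r+1},u_j^r\}$, since $v_{j+rt}$ is adjacent to $v_{j+rt\pm t}$. For the outer vertices with $1\le j\le t-2$ we have $N(u_j^r)=\{u_{j-1}^r,u_{j+1}^r,v_j^r\}$. The boundary blocks are where the wrap-around edges enter. For $j=0$, the predecessor of $u_0^r=u_{rt}$ is $u_{rt-1}=u_{t-1}^{r-1}$, whose color is recorded by $\rho^-(\sigma_{t-1})_r=(\sigma_{t-1})_{r-1}$. For $j=t-1$, the successor of $u_{t-1}^r$ is $u_0^{r+1}$, whose color is recorded by $\rho^+(\sigma_0)_r$. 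When $t=2$, the predecessor block and the successor block of $B_0$ (and of $B_1$) are the same block, but with the correct shift on one side, so the formula still applies. In that case the middle sum is empty.

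The second key step is to check that $f$ is a rainbow $k$-dominating function if and only if, for every $j$, the inner state $\tau_j$ satisfies conditions (1) and (2) of the definition of $\mu_k$ with respect to the triple (previous outer state, $\sigma_j$, next outer state). Here the previous and next outer states are the shifted ones for $j=0$ and $j=t-1$. This check is direct:
\begin{itemize}
\item Condition (1) is precisely the rainbow requirement at an uncolored $u_j^r$, because $\{\ldots\}\setminus\{0\}$ is the set of colors seen by that vertex.
\item Condition (2) is the requirement at an uncolored $v_j^r$.
\end{itemize}
Every vertex of the graph is covered exactly once by these conditions.

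The final step is the minimization. The inner state $\tau_j$ appears only in the constraints attached to block $j$. Therefore, for fixed outer states, the minimum of $\sum_j c(\tau_j)$ over feasible inner states splits into the sum of independent minima. Each of these minima is by definition the corresponding $\mu_k$ term, with value $\infty$ when block $j$ is infeasible, in which case that choice of outer states is excluded. Minimizing afterwards over all outer states gives the stated formula.

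The main obstacle is purely bookkeeping: verifying the index arithmetic for the wrap-around edges, so that $\rho^-$ and $\rho^+$ appear on the correct sides, and checking that the degenerate case $t=2$ creates no double edges or missed adjacencies. Since the graph is simple and cubic, each vertex has exactly the three listed neighbors, and this settles it.
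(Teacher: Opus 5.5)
Your proposal is correct and follows essentially the same route as the paper: encode $f$ by outer and inner block states, observe that each $\tau_j$ is constrained only by the outer states of the blocks around $B_j$ (with $\rho^-$ and $\rho^+$ at the two ends), and minimize over the inner states block by block. This is the paper's two-inequality argument phrased as a single equivalence, and your explicit checks fill in details the paper leaves implicit, namely the wrap-around indices ($u_{rt-1}=u_{t-1}^{r-1}$ and $u_{(t-1)+rt+1}=u_0^{r+1}$) and the degenerate case $t=2$.
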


\begin{proof}
Let $f$ be a rainbow $k$-dominating function of $P(6t,t)$. For each
$j\in\{0,\ldots,t-1\}$, let
\[
    \sigma_j=(\sigma_{j,0},\ldots,\sigma_{j,5})\in\Sigma_k
\]
be the outer state induced by $f$ on the vertices $u_j^0,\ldots,u_j^5$, and let
\[
    \tau_j=(\tau_{j,0},\ldots,\tau_{j,5})\in\Sigma_k
\]
be the inner state induced by $f$ on the vertices $v_j^0,\ldots,v_j^5$.

For each block $B_j$, the rainbow $k$-domination constraints on the vertices of $B_j$ involve only the outer states of the neighboring blocks and the inner state of $B_j$ itself. Equivalently, for $1\le j\le t-2$, the state $\tau_j$ is feasible for the triple
\[
    (\sigma_{j-1},\sigma_j,\sigma_{j+1}),
\]
for $j=0$ it is feasible for
\[
    (\rho^-(\sigma_{t-1}),\sigma_0,\sigma_1),
\]
and for $j=t-1$ it is feasible for
\[
    (\sigma_{t-2},\sigma_{t-1},\rho^+(\sigma_0)).
\]
Hence
\[
    c(\tau_0)\ge \mu_k(\rho^-(\sigma_{t-1}),\sigma_0,\sigma_1),
\]
\[
    c(\tau_j)\ge \mu_k(\sigma_{j-1},\sigma_j,\sigma_{j+1})
    \qquad \text{if }1\le j\le t-2,
\]
and
\[
    c(\tau_{t-1})\ge
    \mu_k(\sigma_{t-2},\sigma_{t-1},\rho^+(\sigma_0)).
\]
Summing over all blocks yields
\[
\begin{aligned}
w(f)\ge
&\sum_{j=0}^{t-1} c(\sigma_j)
+\mu_k(\rho^-(\sigma_{t-1}),\sigma_0,\sigma_1) \\
&+\sum_{j=1}^{t-2}\mu_k(\sigma_{j-1},\sigma_j,\sigma_{j+1})
+\mu_k(\sigma_{t-2},\sigma_{t-1},\rho^+(\sigma_0)).
\end{aligned}
\]

Conversely, suppose that the outer states $\sigma_0,\ldots,\sigma_{t-1}$ are given, and that their corresponding terms $\mu_k$ are finite. For each block, choose an inner state realizing the corresponding minimum $\mu_k$. Then, by definition of $\mu_k$, every uncolored vertex of every block sees all $k$ colors in its neighborhood. Hence these outer and inner states together yield a rainbow $k$-dominating function of $P(6t,t)$, and its weight is exactly the expression shown. Therefore, the minimum of that expression over all choices of outer states is equal to $\widetilde{\gamma}_{k}(P(6t,t))$.
\end{proof}

Lemma~\ref{lem:petersen-state-formula} leads to a dynamic-programming algorithm over all consecutive blocks. For fixed $k$, Algorithm~\ref{alg:local-cost-table} precomputes the local cost table $\mu_k$ in time constant in $t$. Then, Algorithm~\ref{alg:petersen-dp} performs a dynamic program over the sequence of blocks, and its computational complexity is linear in the number of blocks, or equivalently linear in $t$.

\begin{algorithm}[t]
\caption{Precomputation of the local cost table $\mu_k$}
\label{alg:local-cost-table}
\begin{algorithmic}[1]
\REQUIRE An integer $k\in\{1,2,3\}$ and the state set
$\Sigma_k=\{0,1,\ldots,k\}^6$
\ENSURE The table $\mu_k(\alpha,\beta,\gamma)$ for all
$\alpha,\beta,\gamma\in\Sigma_k$
\FORALL{$\sigma\in\Sigma_k$}
    \STATE compute $c(\sigma)$
\ENDFOR
\FORALL{$(\alpha,\beta,\gamma)\in\Sigma_k^3$}
    \STATE $\mu_k(\alpha,\beta,\gamma)\gets\infty$
\ENDFOR
\FORALL{$(\alpha,\beta,\gamma)\in\Sigma_k^3$}
    \FORALL{$\tau=(\tau_0,\ldots,\tau_5)\in\Sigma_k$}
        \STATE $\mathrm{feasible}\gets \mathrm{true}$
        \FOR{$r\gets 0$ to $5$}
            \IF{$\beta_r=0$ and
            $\{\alpha_r,\tau_r,\gamma_r\}\setminus\{0\}\neq [k]$}
                \STATE $\mathrm{feasible}\gets \mathrm{false}$
            \ENDIF
            \IF{$\tau_r=0$ and
            $\{\tau_{r-1},\beta_r,\tau_{r+1}\}\setminus\{0\}\neq [k]$}
                \STATE $\mathrm{feasible}\gets \mathrm{false}$
            \ENDIF
        \ENDFOR
        \IF{$\mathrm{feasible}$}
            \STATE $\mu_k(\alpha,\beta,\gamma)\gets
            \min\{\mu_k(\alpha,\beta,\gamma),c(\tau)\}$
        \ENDIF
    \ENDFOR
\ENDFOR
\end{algorithmic}
\end{algorithm}

\begin{algorithm}[t]
\caption{Exact dynamic program for $\widetilde{\gamma}_{k}(P(6t,t))$}
\label{alg:petersen-dp}
\begin{algorithmic}[1]
\REQUIRE An integer $t\ge 2$, an integer $k\in\{1,2,3\}$, the state set
$\Sigma_k$, the local cost table $\mu_k$, and the shift operators
$\rho^+,\rho^-$
\ENSURE $\widetilde{\gamma}_{k}(P(6t,t))$
\STATE $\mathrm{best}\gets\infty$
\FORALL{$(\sigma_0,\sigma_1)\in\Sigma_k^2$}
    \FORALL{$(\alpha,\beta)\in\Sigma_k^2$}
        \STATE $D_1(\alpha,\beta)\gets\infty$
    \ENDFOR
    \STATE $D_1(\sigma_0,\sigma_1)\gets c(\sigma_0)+c(\sigma_1)$
    \FOR{$j\gets 1$ to $t-2$}
        \FORALL{$(\beta,\gamma)\in\Sigma_k^2$}
            \STATE $D_{j+1}(\beta,\gamma)\gets\infty$
        \ENDFOR
        \FORALL{$(\alpha,\beta,\gamma)\in\Sigma_k^3$}
            \STATE $D_{j+1}(\beta,\gamma)\gets
            \min\{D_{j+1}(\beta,\gamma),
            D_j(\alpha,\beta)+c(\gamma)+\mu_k(\alpha,\beta,\gamma)\}$
        \ENDFOR
    \ENDFOR
    \FORALL{$(\alpha,\beta)\in\Sigma_k^2$}
        \STATE $\mathrm{best}\gets
        \min\{\mathrm{best},
        D_{t-1}(\alpha,\beta)
        +\mu_k(\alpha,\beta,\rho^+(\sigma_0))
        +\mu_k(\rho^-(\beta),\sigma_0,\sigma_1)\}$
    \ENDFOR
\ENDFOR
\RETURN $\mathrm{best}$
\end{algorithmic}
\end{algorithm}

\begin{theorem}
\label{thm:petersen-dp}
For every integer $t\ge 2$ and every $1\le k\le 3$, Algorithm~\ref{alg:petersen-dp} returns the exact value of
\[
    \widetilde{\gamma}_{k}(P(6t,t)).
\]
Moreover, for every fixed $k\in\{1,2,3\}$ the computational complexity is linear in $t$.
\end{theorem}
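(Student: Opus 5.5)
The proof splits into correctness and running time. For correctness, I will reduce everything to Lemma~\ref{lem:petersen-state-formula}. That lemma already expresses $\widetilde{\gamma}_{k}(P(6t,t))$ as the minimum, over all sequences $\sigma_0,\ldots,\sigma_{t-1}\in\Sigma_k$, of a cyclic sum of $c$-terms and $\mu_k$-terms. So it suffices to show that Algorithm~\ref{alg:petersen-dp} computes exactly this minimum.

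The algorithm fixes the pair $(\sigma_0,\sigma_1)$ in the outer loop. For fixed $(\sigma_0,\sigma_1)$ I will prove by induction on $j\in\{1,\ldots,t-1\}$ the following invariant: $D_j(\alpha,\beta)$ equals the minimum, over all $\sigma_2,\ldots,\sigma_j$ with $(\sigma_{j-1},\sigma_j)=(\alpha,\beta)$, of
\[
\sum_{i=0}^{j}c(\sigma_i)+\sum_{i=1}^{j-1}\mu_k(\sigma_{i-1},\sigma_i,\sigma_{i+1}),
\]
with value $\infty$ when no such sequence exists. When $j=1$ the second sum is empty, so the base case $j=1$ is immediate from the initialisation. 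The inductive step is the standard min-plus recurrence. The term $\mu_k(\alpha,\beta,\gamma)$ added when passing from $D_j$ to $D_{j+1}$ is the inner cost of block $j$, and the newly added $c(\gamma)$ is the outer cost of $\sigma_{j+1}$. Minimising over $\alpha=\sigma_{j-1}$ is valid because $\mu_k$ depends only on the three consecutive outer states.

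In the final loop the algorithm adds $\mu_k(\alpha,\beta,\rho^+(\sigma_0))$, which is the inner cost of block $t-1$, and $\mu_k(\rho^-(\beta),\sigma_0,\sigma_1)$ with $\beta=\sigma_{t-1}$, which is the inner cost of block $0$. The resulting expression is exactly the one in Lemma~\ref{lem:petersen-state-formula}. Minimising over $(\sigma_0,\sigma_1)$ and $(\alpha,\beta)$ then yields the full minimum, so the returned value equals $\widetilde{\gamma}_{k}(P(6t,t))$.

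I also need to check the edge case $t=2$. There the inner loop is empty and $(\alpha,\beta)$ must equal $(\sigma_0,\sigma_1)$, since all other entries of $D_1$ are $\infty$. The two final $\mu_k$ terms are then $\mu_k(\sigma_0,\sigma_1,\rho^+(\sigma_0))$ and $\mu_k(\rho^-(\sigma_1),\sigma_0,\sigma_1)$, which match the lemma with an empty middle sum. The $\infty$ arithmetic needs no extra care: infinite terms simply never realise the minimum, and by the lemma at least one finite sequence exists (for instance, colouring every vertex).

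The main obstacle is verifying that the wrap-around shifts in the algorithm agree with the indexing in the lemma. In the final line the shift is $\rho^-(\beta)$ with $\beta=\sigma_{t-1}$, which must match $\rho^-(\sigma_{t-1})$ in the lemma, and $\rho^+(\sigma_0)$ must match as well. This reduces to the edges $u_{t-1}^r u_0^{r+1}$, and it is already handled by the lemma as stated.

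For complexity, fix $k$ and put $S=|\Sigma_k|=(k+1)^6\le 4096$, a constant. Algorithm~\ref{alg:local-cost-table} takes $O(S^4\cdot 6)$ time, which is constant in $t$. In Algorithm~\ref{alg:petersen-dp}, the outer loop has $S^2$ iterations. Each iteration does $O(S^2)$ initialisation, then $t-2$ rounds of $O(S^3)$ work, then $O(S^2)$ final work, with every table lookup taking $O(1)$ time. The total is $O(S^5 t)$, which is $O(t)$ for fixed $k$, since $S$ is constant.
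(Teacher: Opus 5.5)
Your proposal is correct and follows essentially the same route as the paper: you reduce to Lemma~\ref{lem:petersen-state-formula}, prove the $D_j$ invariant by induction on $j$, close the cycle with the two wrap-around $\mu_k$ terms, and bound the running time by $O(tS^5)$ with $S=(k+1)^6$. The differences are minor. Your invariant is phrased purely over outer-state sequences, which fits the lemma more cleanly than the paper's wording in terms of partial assignments, and you treat $t=2$ explicitly. The paper, in turn, spells out the one-line check that Algorithm~\ref{alg:local-cost-table} computes $\mu_k$ correctly, which you took for granted.
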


\begin{proof}
Assume that $1\le k\le 3$.

First, we prove the correctness of Algorithm~\ref{alg:local-cost-table}. We claim that it computes the table $\mu_k$ correctly. Indeed, for every triple $(\alpha,\beta,\gamma)\in\Sigma_k^3$, Algorithm~\ref{alg:local-cost-table} examines every inner state $\tau\in\Sigma_k$ and checks exactly the two conditions in the definition of $\mu_k(\alpha,\beta,\gamma)$. Since it stores the minimum value of $c(\tau)$ from among all feasible inner states, and the value remains $\infty$ if none exists, the resulting table is exactly $\mu_k$.

We now prove the correctness of Algorithm~\ref{alg:petersen-dp}. Fix a pair $(\sigma_0,\sigma_1)\in\Sigma_k^2$. For $j\ge 1$, let $D_j(\alpha,\beta)$ denote the minimum partial weight consisting of the colored outer vertices in the blocks
\[
    B_0,B_1,\ldots,B_j
\]
and the colored inner vertices in the blocks
\[
    B_1,\ldots,B_{j-1},
\]
over all assignments whose initial outer states are $\sigma_0,\sigma_1$, whose last two outer states are $\alpha,\beta$, respectively, and for which all rainbow $k$-domination constraints in the blocks $B_1,\ldots,B_{j-1}$ are satisfied.

For $j=1$, this interpretation is immediate from the initialization
\[
    D_1(\sigma_0,\sigma_1)=c(\sigma_0)+c(\sigma_1).
\]
Assume now that this interpretation holds for some $j$ with $1\le j\le t-2$. To extend a partial solution ending with outer states $(\alpha,\beta)$ by a new outer state $\gamma$, one has to add the number of colored vertices of the new outer block, namely $c(\gamma)$, and the minimum possible number of colored inner vertices in block $B_j$, namely $\mu_k(\alpha,\beta,\gamma)$. This yields the recurrence
\[
    D_{j+1}(\beta,\gamma)=
    \min_{\alpha\in\Sigma_k} \bigl(D_j(\alpha,\beta)+c(\gamma)+\mu_k(\alpha,\beta,\gamma)\bigr).
\]
Hence, by induction with respect to $j$, the table $D_{t-1}$ computed by Algorithm~\ref{alg:petersen-dp} has the claimed meaning.

At this point all blocks have been introduced, but the first and last blocks still have to be closed around the cycle of outer blocks. Since the right neighbor of the last block is obtained from $\sigma_0$ by the shift $\rho^+$, the missing local contribution for the block $B_{t-1}$ is
\[
    \mu_k(\alpha,\beta,\rho^+(\sigma_0)).
\]
Similarly, the left neighbor of the first block is obtained from the last outer state by the shift $\rho^-$, so the missing local contribution for the block $B_0$ is
\[
    \mu_k(\rho^-(\beta),\sigma_0,\sigma_1).
\]
Therefore the value considered in the last part of Algorithm~\ref{alg:petersen-dp} is exactly the total weight of the best assignment realizing the chosen initial pair $(\sigma_0,\sigma_1)$. By Lemma~\ref{lem:petersen-state-formula}, taking the minimum over all such pairs gives exactly $\widetilde{\gamma}_{k}(P(6t,t))$.

Now we establish the time complexity of the two algorithms. Let
\[
    S=|\Sigma_k|=(k+1)^6.
\]
Algorithm~\ref{alg:local-cost-table} examines exactly $S^4$ quadruples $(\alpha,\beta,\gamma,\tau)$ of constant size, computing the table $\mu_k$. Hence, its time complexity is independent of $t$ and can be expressed as $O(S^4)$. Once $\mu_k$ is available, Algorithm~\ref{alg:petersen-dp} considers $S^2$ initial pairs, and for each of them performs a dynamic program with $t-2$ stages, each involving $O(S^3)$ transitions. Thus, its running time is $O(tS^5)$. Since $S$ depends only on $k$, which is fixed, this is equivalent to $O(t)$. The memory complexity is $O(S^3)$ for the table $\mu_k$ and $O(S^2)$ for the dynamic-programming tables. The second claim assumes that only the two required consecutive tables $D_j$ and
$D_{j+1}$ are stored. Again, in both cases this is constant for fixed $k$.
\end{proof}

\begin{remark}
For the exact computations presented in Section~\ref{sec:computations}, we implemented a specialized version of Algorithm~\ref{alg:petersen-dp} for $k=2$. In this case
\[
    |\Sigma_2|=3^6=729.
\]
The implementation exploits the $12$ symmetries of the six-row representation of a block, namely the six cyclic rotations of the rows together with the transposition of the two colors. This reduces the number of starting outer-state pairs from
\[
    3^{12}=531441
\]
to $44425$ outer representatives. Moreover, all feasible transitions between pairs of states were precomputed; their total number was $308915776$.
\end{remark}

\subsection{Computational evidence and conjectures}
\label{sec:computations}
The two algorithms presented in Section~\ref{sec:dp} were implemented in the \texttt{C++} programming language. This implementation, together with additional computations utilizing the Gurobi ILP solver \cite{gurobi2026}, led to the formulation of two conjectures stated in this section.

All the computations were carried out on a machine equipped with an AMD Ryzen 9 5950X processor with boosting disabled and $64$ GB of RAM clocked at $3600$ MHz. First, the program was run with the value of $k$ set to one and $t$ ranging from two to twenty. Since $\widetilde{\gamma}_1(G)=\gamma(G)$, it computed the values of the classical domination number for the considered family of graphs. The results match the general formulas proven in \cite{wang2015domination,fu2009domination}.

For $k=2$, the specialized version of the program was used. The precomputation phase took $5.7$ seconds. We computed the exact values of
\[
    \widetilde{\gamma}_{2}(P(6t,t))
\]
for all $2\le t\le 20$. The running time ranged from $0.022$ seconds for $t=2$ to $52124.481$ seconds for $t=20$. The experimental running times closely follow a linear trend as $t$ increases, providing empirical support for the theoretical $O(t)$ complexity of the algorithm. The raw values returned by the program are shown in Table~\ref{tbl:alg-data} of Appendix~\ref{app:petersen-data}.

While the dynamic program of Section~\ref{sec:dp} is effective for $k=2$, it is not practically usable for $k=3$. Indeed, the number of states grows from
\[
    |\Sigma_2|=3^6=729
\]
to
\[
    |\Sigma_3|=4^6=4096,
\]
and the running-time bound from Theorem~\ref{thm:petersen-dp} is
\[
    O(t|\Sigma_k|^5).
\]
Thus, already at the level of the general running-time bound, passing from $k=2$ to $k=3$ increases the state-dependent factor by
\[
    \left(\frac{4096}{729}\right)^5\approx 5599.67.
\]
Together with the substantial running times observed even for the specialized $k=2$ implementation, this made the dynamic-programming approach impractical for our $k=3$ computations. For this reason, the data for $k=3$ presented in Table~\ref{tbl:gurobi-data} of Appendix~\ref{app:petersen-data} were obtained using a Python script utilizing the Gurobi ILP solver.

We now present two conjectures based on the gathered data, beginning with the case $k=2$. The broader problem of $2$-rainbow domination in generalized Petersen graphs $P(ct,t)$ was studied in \cite{shao2019two,brezovnik2023two}. Although \cite{brezovnik2023two} also considers the singleton parameter and establishes exact results for some subfamilies of $P(ct,t)$, it does not determine $\widetilde{\gamma}_{2}(P(6t,t))$ exactly for all $t$. The values obtained using our algorithms strongly suggest the following exact formula.

\begin{conjecture}
\label{conj:petersen-two}
For every integer $t\ge 2$,
\[
\widetilde{\gamma}_{2}(P(6t,t))=
\begin{cases}
18, & \text{if } t=3,\\[1mm]
5t+1+\left\lfloor \dfrac{t}{8}\right\rfloor+\left\lfloor \dfrac{t+5}{8}\right\rfloor, & \text{if } t\neq 3.
\end{cases}
\]
\end{conjecture}

We next consider the case $k=3$. Erve\v{s} and \v{Z}erovnik \cite{erves2021petersen} determined the rainbow $3$-domination number of $P(6t,t)$ exactly for several infinite subfamilies. In particular, they proved that
\[
\widetilde{\gamma}_{3}(P(6t,t))=
\begin{cases}
6t, & \text{if } t\equiv 1,5 \pmod 6,\\
6t+3, & \text{if } t\equiv 0,2,4 \pmod 6.
\end{cases}
\]
Moreover, for $t\equiv 3\pmod 6$ they established
\[
6t<\widetilde{\gamma}_{3}(P(6t,t))\le 6t+6.
\]
For the smallest graph in the latter family they further obtained
\[
21\le \widetilde{\gamma}_{3}(P(18,3))\le 23.
\]
They conjectured that the lower bound for $t\equiv3\pmod6$ can be strengthened to $6t+3$. Some of the aforementioned results were then extended to the more general family $P(ct,t)$ in \cite{rupnik2022three}.

Our computations determine all tested instances in the remaining congruence class and suggest that the upper bound $6t+6$ is attained for every $t\equiv3\pmod6$ with $t\ge9$, while $\widetilde{\gamma}_{3}(P(18,3))=22$. Together with the previously established exact results, this leads to the following conjecture.

\begin{conjecture}
\label{conj:petersen-three}
For every integer $t\ge 2$,
\[
\widetilde{\gamma}_{3}(P(6t,t))=
\begin{cases}
22, & \text{if } t=3,\\[1mm]
6t, & \text{if } t\equiv 1,5 \pmod 6,\\[1mm]
6t+3, & \text{if } t\equiv 0,2,4 \pmod 6,\\[1mm]
6t+6, & \text{if } t\equiv 3 \pmod 6 \text{ and } t\ge 9.
\end{cases}
\]
\end{conjecture}

\begin{figure}[ht]
\centering
\begin{tikzpicture}
\begin{axis}[
    ybar,
    width=0.65\columnwidth,
    height=0.45\columnwidth,
    xlabel={$t$},
    ylabel={Time (s)},
    ymode=log,
    ymin=0.01,
    ymax=3000,
    xmin=1,
    xmax=52,
    bar width=3.5pt,
    xtick={2,6,10,14,18,22,26,30,34,38,42,46,50},
    grid=both,
    minor tick num=1,
    legend style={
        at={(0.99,0.02)},
        anchor=south east,
        font=\footnotesize
    },
    legend image code/.code={
        \draw[#1] (0cm,-0.08cm) rectangle (0.18cm,0.08cm);
    }
]

\addplot[
    fill=blue,
    bar shift=0pt
] coordinates {
(5,0.017461538314819336)
(7,0.0182497501373291)
(11,0.06741762161254883)
(13,0.10137820243835449)
(17,0.16205096244812012)
(19,0.24125242233276367)
(23,0.48899245262145996)
(25,0.36942291259765625)
(29,0.6040866374969482)
(31,1.2355997562408447)
(35,1.3408498764038086)
(37,1.596841812133789)
(41,3.2714672088623047)
(43,1.6916639804840088)
(47,2.9231016635894775)
(49,4.084986448287964)
};
\addlegendentry{$t\equiv 1,5 \pmod 6$}

\addplot[
    fill=red,
    bar shift=0pt
] coordinates {
(2,0.10309100151062012)
(4,0.2658398151397705)
(6,0.5563349723815918)
(8,0.5180351734161377)
(10,0.7587618827819824)
(12,2.1831068992614746)
(14,1.6423041820526123)
(16,3.7483980655670166)
(18,3.8818368911743164)
(20,5.182603597640991)
(22,8.351414680480957)
(24,10.319264650344849)
(26,11.608290910720825)
(28,14.767839193344116)
(30,19.143115520477295)
(32,17.34304451942444)
(34,17.58126974105835)
(36,29.400073289871216)
(38,22.75926423072815)
(40,36.43378019332886)
(42,34.43136644363403)
(44,37.33524465560913)
(46,51.62084412574768)
(48,42.75998020172119)
(50,32.237194299697876)
};
\addlegendentry{$t\equiv 0,2,4 \pmod 6$}

\addplot[
    fill=green!60!black,
    bar shift=0pt
] coordinates {
(3,0.19117426872253418)
(9,8.658386707305908)
(15,27.770633935928345)
(21,207.626131772995)
(27,355.16867327690125)
(33,451.7895953655243)
(39,881.9804801940918)
(45,1217.31192445755)
(51,1505.777232170105)
};
\addlegendentry{$t\equiv 3 \pmod 6$}

\end{axis}
\end{tikzpicture}
\caption{Running times of the ILP computations for $\widetilde{\gamma}_{3}(P(6t,t))$. The vertical axis is logarithmic.}
\label{fig:ilp-times}
\end{figure}

The running times of the utilized ILP solver are shown in Figure~\ref{fig:ilp-times}. They exhibit a clear dependence on the congruence class of $t$. Instances with $t\equiv 1,5\pmod 6$ are solved relatively quickly, those with even $t$ require moderately more time, and the instances with $t\equiv 3\pmod 6$ become substantially more difficult as $t$ increases. In particular, the latter instances require running times exceeding those of neighboring instances by one or more orders of magnitude.  This pattern closely reflects the distinction present in the known theoretical results: in \cite{erves2021petersen} Erve\v{s} and \v{Z}erovnik established exact values for $t\equiv 0,1,2,4,5\pmod 6$, whereas for $t\equiv 3\pmod 6$ only lower and upper bounds were obtained. Thus, the exceptional computational difficulty is present exactly in the congruence class for which the exact value had remained undetermined. Further analysis of this phenomenon might help in better understanding the structure of $P(6t, t)$ in the context of rainbow $k$-domination, and possibly in proving the conjectured exact formula.

\section{Conclusion}
\label{sec:conclusion}

We studied the rainbow $k$-domination number in regular graphs from four complementary viewpoints: general bounds for arbitrary regular graphs, complexity of the corresponding decision problem, exact values for highly symmetric regular graph families, and an exact algorithmic approach for generalized Petersen graphs.

The first part of the paper developed structural results for arbitrary regular graphs. We established two lower bounds and two upper bounds for the rainbow $k$-domination number. We then recalled the characterization of the extremal case $\widetilde{\gamma}_{r}(G)=|V(G)|/2$ using rainbow $k$-domination terminology, showing that such equality forces a bipartite structure in which one side is partitioned into $r$ color classes and every vertex on the opposite side sees exactly one vertex from each color class. Those results formed the base for the further study.

We then proved that deciding whether $\widetilde{\gamma}_{r}(G) = |V(G)|/2$ is NP-complete even when $G$ is regular and bipartite. The reduction utilized the extremal equality characterization and hypergraph coloring results of \cite{lange2026strong}. This result shows that rainbow $k$-domination remains computationally hard even under strong structural constraints.

We also presented the exact formulas for several standard regular graph families. For cycles, we noticed the coincidence of dual-server domination and rainbow $2$-domination, recalled the relevant result known for the former, and gave a self-contained proof utilizing previously established lower bounds. For M\"obius ladders, the proof involved lower bounds, counting arguments, and case-dependent modular constructions. It showed that the value of $\widetilde{\gamma}_{3}(M_n)$ is governed by the residue class of $n$ modulo $12$. For prism graphs, we determined $\widetilde{\gamma}_{k}(Pr(m))$ for all $k$, utilizing known results for classical domination and $k$-rainbow domination, together with additional counting arguments and constructions.

Finally, we developed a finite-state dynamic program for generalized Petersen graphs of the form $P(6t,t)$. For each fixed $k\in\{1,2,3\}$, the algorithm computes $\widetilde{\gamma}_{k}(P(6t,t))$ exactly in time linear in $t$. The implementation for $k=2$, together with integer linear programming computations, led to the conjectured closed formulas for $\widetilde{\gamma}_{2}(P(6t,t))$ and $\widetilde{\gamma}_{3}(P(6t,t))$ stated in Section~\ref{sec:computations}.

Several questions remain open. The most immediate problem is to prove Conjectures~\ref{conj:petersen-two} and~\ref{conj:petersen-three}. The computational evidence suggests that rainbow $k$-domination in the family $P(6t,t)$ exhibits modular behavior similar to, but more complicated than, that established for M\"obius ladders and prism graphs. It would be especially useful to turn the dynamic-programming patterns into explicit periodic constructions and corresponding lower-bound arguments.

It would also be interesting to assess the feasibility of extending the dynamic-programming framework to broader families of generalized Petersen graphs, especially graphs $P(ct,t)$ with fixed $c$ larger than $6$. Such an approach would need to carefully manage the state space to keep the calculations feasible as $c$ grows.

Another direction is to investigate the complexity of rainbow $k$-domination for other graph classes and other fixed values of $k$. In particular, it remains open whether analogous NP-completeness results can be established for regular graphs when $k<r$. It would also be interesting to determine the complexity of the exact-value problem $\widetilde{\gamma}_{k}(G)=B$ for a given integer $B$ in planar graphs, graphs subject to girth restrictions, dense graphs, and other constrained graph classes.

The results of this paper suggest that the rainbow $k$-domination number in regular graphs is influenced by an interaction between local color-forcing constraints and global structural patterns. This interaction is strong enough to produce simple exact formulas in highly structured graph families, but also complex enough to yield NP-completeness in the arbitrary regular case. Further research on this phenomenon could lead to the establishment of new structural bounds, algorithms, hardness results, and theorems regarding the relationship between rainbow $k$-domination and other domination-type parameters.

\section*{Acknowledgments}
\label{sec:acknowledgments}

This work was supported by the European Commission's Horizon Europe
Research and Innovation Programme through the Marie Sk\l odowska-Curie
Actions Staff Exchanges (MSCA-SE) under Grant Agreement No.~101182819
(COVER: Combinatorial Optimization for Versatile Applications to Emerging
Urban Problems).

The author used GitHub Copilot during the development of the algorithm and ILP implementations. After using this tool, the author reviewed and edited the content as needed and takes full responsibility for the resulting work.

\section*{Data availability}
\label{sec:data-availability}

All data generated during this study are available in Appendix~\ref{app:petersen-data}. The algorithm and ILP implementations are available online at \url{https://git.pg.edu.pl/pggit_oidc7232510/rainbow_petersen_algorithms}.

\clearpage
\appendix

\section{Proofs for M\"obius ladders}
\label{app:mobius-proofs}

\begin{proof}[Proof of Lemma~\ref{lem:mobius-upper-6}]
Write $n=12q+6$. Define a function $f:V(M_n)\to\{\emptyset,\{1\},\{2\},\{3\}\}$ such that
\[
f(v_i)=
\begin{cases}
\emptyset, & \text{if } i\equiv 0 \pmod 2,\\
\{1\}, & \text{if } i\equiv 1 \pmod 6,\\
\{2\}, & \text{if } i\equiv 3 \pmod 6,\\
\{3\}, & \text{if } i\equiv 5 \pmod 6.
\end{cases}
\]
Then exactly $n/2$ vertices are colored.

Let $i$ be even. Since
\[
\frac{n}{2}=6q+3\equiv3\pmod6,
\]
the indices of the three neighbors
\[
v_{i-1},\qquad v_{i+1},\qquad v_{i+n/2}
\]
are congruent modulo $6$ to $1$, $3$, and $5$ in some order. Therefore $f$ assigns them the three distinct colors.

Therefore $f$ is a rainbow $3$-dominating function of weight $n/2$, and consequently
\[
\widetilde{\gamma}_{3}(M_n)\le\frac{n}{2}.
\]
\end{proof}

\begin{proof}[Proof of Lemma~\ref{lem:mobius-plus-two-lower}]
Let $L=\frac{n}{2}$. Since $n\equiv2,10\pmod{12}$, Corollary~\ref{cor:cubic-extremal} gives $\widetilde{\gamma}_{3}(M_n)\ge L+1$.

Suppose, for a contradiction, that $\widetilde{\gamma}_{3}(M_n)=L+1$. Let $f$ be a minimum rainbow $3$-dominating function of $M_n$, and define
\[
C=\{v\in V(M_n):f(v)\neq\emptyset\},
\]
\[
U=\{v\in V(M_n):f(v)=\emptyset\}.
\]
Then $|U|=n-|C|=L-1$. Moreover, $U$ is an independent set. Indeed, every vertex of $U$ has three neighbors and they all belong to $C$. Since $L$ is odd, $M_n$ is bipartite with partite sets
\[
D=\{v_i:i\equiv0\pmod2\},
\]
\[
O=\{v_i:i\equiv1\pmod2\}.
\]

Partition the cycle into the $L$ disjoint edges
\[
(v_{2j},v_{2j+1}), \qquad j=0,1,\dots,L-1.
\]
Then exactly one of those pairs contains no vertex of $U$. Indeed, $U$ contains at most one vertex from each such edge and $|U|=L-1$. Relabeling if necessary, we may assume that this pair is $(v_0,v_1)$.

Therefore, exactly one of $v_{2j}$ and $v_{2j+1}$ belongs to $U$ for each $j=1,\dots,L-1$. Since $U$ is independent and $v_{2j+1}v_{2j+2}\in E(M_n)$, once the odd vertex is chosen from a pair, the even vertex cannot be chosen from the next pair. This implies that, for some $t\in\{0,\dots,L-1\}$, it holds that
\[
v_{2j}\in U \quad\text{for }1\le j\le t,
\]
\[
v_{2j+1}\in U \quad\text{for }t+1\le j\le L-1.
\]

We now claim that $t\in\{0,L-1\}$. Let $s=\frac{L-1}{2}$. Suppose that $1\le t\le s$. Then the edge
\[
v_{2t}v_{2(t+s)+1}\in E(M_n)
\]
has both endpoints in $U$, a contradiction.
Now suppose that $s<t\le L-2$. Then the edge
\[
v_{2(t+1-s)}v_{2(t+1)+1}\in E(M_n)
\]
has both endpoints in $U$, again giving a contradiction. Therefore either $t=0$ or $t=L-1$.

By symmetry, we may assume
\[
U=D\setminus\{v_0\}.
\]
Thus $v_0\in C$ is the only even vertex in $C$, and
\[
O\subseteq C.
\]

For each $i\in\{1,2,3\}$ let
\[
C_i=\{v\in V(M_n): f(v)=\{i\}\}.
\]
We count the edges between $U$ and $C_i$ in two ways. First, each vertex $u\in U$ has exactly one neighbor of color $i$, so
\[
e(U,C_i)=|U|=L-1.
\]
Second, every odd vertex has degree $3$, but the three odd neighbors of $v_0\notin U$, namely $v_1$, $v_{n-1}$, and $v_L$, have only two neighbors in $U$. Let $T=N(v_0)$ and $t_i=|C_i\cap T|$. Then
\[
e(U,C_i)=3|C_i\cap O|-t_i.
\]

Equating the two expressions gives
\[
L-1=3|C_i\cap O|-t_i,
\]
and hence
\[
t_i\equiv 1-L \pmod 3 \qquad\text{for each } i\in\{1,2,3\}.
\]

If $n\equiv 10\pmod{12}$, then $L\equiv 2\pmod 3$, so $1-L\equiv 2\pmod 3$. Since $t_i\in\{0,1,2,3\}$, this forces $t_i=2$ for each $i$. Therefore
\[
t_1+t_2+t_3=6,
\]
contradicting
\[
t_1+t_2+t_3=|T|=3.
\]
If $n\equiv 2\pmod{12}$, then $L\equiv 1\pmod 3$, so $1-L\equiv 0\pmod 3$. Thus each $t_i$ belongs to $\{0,3\}$ and $f$ assigns the same color to all three vertices in $T$. Indeed, since
\[
t_1+t_2+t_3=3,
\]
one of the numbers $t_i$ equals $3$ and the other two equal $0$. At the same time, the vertex $v_{L-1}\in U$ is adjacent to both $v_{n-1}$ and $v_L$, so it has two neighbors of the same color, contradicting $f$ being a rainbow $3$-dominating function of $M_n$. In both cases we get a contradiction. Therefore
\[
\widetilde{\gamma}_{3}(M_n)\neq \frac{n}{2}+1,
\]
and consequently
\[
\widetilde{\gamma}_{3}(M_n)\ge \frac{n}{2}+2.
\] 
\end{proof}

\begin{proof}[Proof of Lemma~\ref{lem:mobius-upper-0mod4}]
Let $L=\frac{n}{2}$. Since $n \equiv 0\pmod 4$, $L$ is even. Label
\[
V(M_n)=\{v_0,\dots,v_{n-1}\}.
\]

Define a function $f:V(M_n)\to\{\emptyset,\{1\},\{2\},\{3\}\}$ such that
\[
f(v_i)=\emptyset
\quad \text{if } 0\le i\le L-2 \text{ and } i\equiv 0\pmod 2,
\]
\[
f(v_i)=\{1\}
\quad \text{if } 0\le i\le L-1 \text{ and } i\equiv 1\pmod 6,
\]
\[
f(v_i)=\{2\}
\quad \text{if } 0\le i\le L-1 \text{ and } i\equiv 3\pmod 6,
\]
\[
f(v_i)=\{3\}
\quad \text{if } 0\le i\le L-1 \text{ and } i\equiv 5\pmod 6,
\]
\[
f(v_{L+j})=\{2\}
\quad \text{if } j\equiv 0\pmod 6 \text{ and } j\in\{0,2,\dots,L-2\},
\]
\[
f(v_{L+j})=\{3\}
\quad \text{if } j\equiv 2\pmod 6 \text{ and } j\in\{0,2,\dots,L-2\},
\]
\[
f(v_{L+j})=\{1\}
\quad \text{if } j\equiv 4\pmod 6 \text{ and } j\in\{0,2,\dots,L-2\},
\]
\[
f(v_{n-1})=\{3\},
\]
and $f(v)=\emptyset$ for all remaining vertices.

Then $f$ assigns a nonempty set to all odd-indexed vertices in $\{v_0,\dots,v_{L-1}\}$, all even-indexed vertices in $\{v_L,\dots,v_{n-2}\}$, and the single vertex $v_{n-1}$. Hence $|C|=L+1=n/2+1$.

Let $v_i$ be an uncolored vertex with $0\le i\le L-2$ and $i$ even. Its neighbors are
\[
v_{i-1},\quad v_{i+1},\quad v_{i+L},
\]
and by the definition of $f$ these three vertices receive the colors $\{1\}$, $\{2\}$, and $\{3\}$ in some order.

Now let $v_{L+j}$ be an uncolored vertex with $j$ odd and $1\le j\le L-3$. Its neighbors are
\[
v_{L+j-1},\quad v_{L+j+1},\quad v_j,
\]
and again these three vertices are assigned the colors $\{1\}$, $\{2\}$, and $\{3\}$ in some order.

Therefore every uncolored vertex sees all three colors in its neighborhood, so $f$ is a rainbow $3$-dominating function of weight $L+1$. Hence
\[
\widetilde{\gamma}_{3}(M_n)\le \frac{n}{2}+1.
\]
\end{proof}

\begin{proof}[Proof of Lemma~\ref{lem:mobius-upper-2mod12}]
Write
\[
n=12q+2 \qquad (q\ge 1),
\]
and define
\[
L=\frac{n}{2}=6q+1,
\qquad
s=\frac{L-1}{2}=3q.
\]

Define a sequence $a_0,a_1,\dots,a_{L-1}\in\{1,2,3\}$ by
\[
(a_0,a_1,a_2,a_3)=(1,2,1,2),
\]
\[
(a_{4+3r},a_{5+3r},a_{6+3r})=(3,1,2)
\qquad\text{for } r=0,1,\dots,q-2,
\]
and for $j=1,2,\dots,s$ let $a_{j+s}$ be the unique color in
\[
\{1,2,3\}\setminus\{a_{j-1},a_j\}.
\]

Let $f:V(M_n)\to\{\emptyset,\{1\},\{2\},\{3\}\}$ be a function such that
\[
f(v_{2j+1})=\{a_j\}
\qquad\text{for } j=0,1,\dots,L-1,
\]
\[
f(v_{2(s+2)})=\{1\},
\qquad
f(v_{2(s+3)})=\{1\},
\]
and $f(v)=\emptyset$ for all remaining even vertices. Then
\[
|C|=L+2=\frac{n}{2}+2.
\]

Let $v_{2j}$ be an uncolored even-indexed vertex. Its neighbors are
\[
v_{2j-1},\quad v_{2j+1},\quad v_{2j+L}=v_{2(j+s)+1},
\]
so the colors appearing in its neighborhood are
\[
\{a_{j-1},a_j,a_{j+s}\},
\]
with indices taken modulo $L$.

A direct check from the definition of the sequence shows that the three colors
\[
a_{j-1},\qquad a_j,\qquad a_{j+s}
\]
are pairwise distinct for every $j\notin\{s+2,s+3\}$, with indices taken modulo $L$. The two exceptional indices correspond to the vertices
\[
v_{2(s+2)}
\qquad\text{and}\qquad
v_{2(s+3)},
\]
which are colored. Hence every uncolored vertex sees all three colors in its neighborhood.

Therefore $f$ is a rainbow $3$-dominating function of weight $n/2+2$, and thus
\[
\widetilde{\gamma}_{3}(M_n)\le \frac{n}{2}+2.
\]
\end{proof}

\begin{proof}[Proof of Lemma~\ref{lem:mobius-upper-10mod12}]
If $n=10$, the function $f$ defined by
\[
f(v_1)=f(v_3)=\{1\},
\qquad
f(v_5)=\{2\},
\]
\[
\qquad
f(v_7)=f(v_9)=\{3\},
\qquad
f(v_2)=f(v_8)=\{1\},
\]
and
\[
f(v_0)=f(v_4)=f(v_6)=\emptyset
\]
is a valid rainbow $3$-dominating function of weight
\[
7=\frac{n}{2}+2.
\]

Assume now that
\[
n=12q+10
\]
where $q\ge 1$.
Define
\[
L=\frac{n}{2}=6q+5,
\qquad
s=\frac{L-1}{2}=3q+2.
\]

Let $b_0,b_1,\dots,b_{L-1}\in\{1,2,3\}$ be a sequence such that
\[
(b_0,b_1,b_2,b_3,b_4,b_5)=(1,2,1,3,1,2),
\]
\[
(b_{6+3r},b_{7+3r},b_{8+3r})=(3,1,2)
\qquad\text{for } r=0,1,\dots,q-2,
\]
and for $j=1,2,\dots,s$ let $b_{j+s}$ be the unique color in
\[
\{1,2,3\}\setminus\{b_{j-1},b_j\}.
\]

Define a function $f$ by coloring all odd-indexed vertices so that
\[
f(v_{2j+1})=\{b_j\}
\qquad\text{for } j=0,1,\dots,L-1,
\]
and additionally coloring the two even vertices
\[
f(v_{2(s+2)})=\{1\},
\qquad
f(v_{2(s+4)})=\{1\},
\]
while assigning $\emptyset$ to all other even vertices. Then
\[
|C|=L+2=\frac{n}{2}+2.
\]

Let $v_{2j}$ be an uncolored even-indexed vertex. Its neighbors are
\[
v_{2j-1},\quad v_{2j+1},\quad v_{2j+L}=v_{2(j+s)+1},
\]
so the colors in its neighborhood are
\[
\{b_{j-1},b_j,b_{j+s}\},
\]
with indices taken modulo $L$. A direct check from the definition of the sequence shows that the three colors are pairwise distinct for every $j\notin\{s+2,s+4\}$. The corresponding vertices
\[
v_{2(s+2)}
\qquad\text{and}\qquad
v_{2(s+4)}
\]
are colored. Hence every uncolored vertex sees all three colors in its neighborhood.

Therefore $f$ is a rainbow $3$-dominating function of weight $n/2+2$, and thus
\[
\widetilde{\gamma}_{3}(M_n)\le \frac{n}{2}+2.
\]
\end{proof}

\section{Proofs for prism graphs}
\label{app:prism-proofs}
\begin{proof}[Proof of Lemma~\ref{lem:prism-upper-2}]
Define a function $f:V(Pr(m))\to\{\emptyset,\{1\},\{2\}\}$ such that
\[
f(u_i)=
\begin{cases}
\{1\}, & \text{if } i\equiv0\pmod2,\\
\emptyset, & \text{if } i\equiv1\pmod2,
\end{cases}
\]
and
\[
f(v_i)=
\begin{cases}
\emptyset, & \text{if } i\equiv0\pmod2,\\
\{2\}, & \text{if } i\equiv1\pmod2.
\end{cases}
\]
Then the weight of $f$ is exactly $m$.

We now prove that $f$ is a rainbow $2$-dominating function of $Pr(m)$. First, consider an uncolored vertex $u_i$, where $i$ is odd. Its neighbors are $v_i$, $u_{i-1}$, and $u_{i+1}$, with indices taken modulo $m$. Since $f(u_{i-1})=f(u_{i+1})=\{1\}$ and $f(v_i)=\{2\}$, the uncolored vertex $u_i$ sees both colors in its neighborhood. Second, consider an uncolored vertex $v_i$, where $i$ is even. Then $N(v_i)=\{u_i,v_{i-1},v_{i+1}\}$. Since $f$ assigns $\{1\}$ to $u_i$ and $\{2\}$ to at least one of $v_{i-1}$ and $v_{i+1}$, an uncolored vertex $v_i$ also sees both colors in its neighborhood.

Therefore $f$ is a rainbow $2$-dominating function of weight $m$, and consequently
\[
\widetilde{\gamma}_{2}(Pr(m))\le m.
\]
\end{proof}

\begin{proof}[Proof of Lemma~\ref{lem:prism-lower-3}]
Since $Pr(m)$ is cubic of order $2m$ and $m\equiv2\pmod6$,
Corollary~\ref{cor:cubic-extremal} gives
\[
\widetilde{\gamma}_{3}(Pr(m))\ge m+1.
\]

Suppose, for a contradiction, that $\widetilde{\gamma}_{3}(Pr(m))=m+1$. Let $f$ be a minimum rainbow $3$-dominating function of $Pr(m)$, and let $U$ be its set of uncolored vertices. Then $|U|=2m-(m+1)=m-1$. Moreover, $U$ is independent. Indeed, every vertex of $U$ has degree $3$ and all three of its neighbors must be colored.

From among each pair $(u_i,v_i)$, at most one vertex belongs to $U$. Consequently, exactly one such pair contains no vertex of $U$, and each of the remaining $m-1$ pairs contains exactly one such vertex. Now consider those $m-1$ remaining pairs in cyclic order. Then, the vertices of $U$ must alternate between the two cycles of $Pr(m)$. Since $m$ is even, $Pr(m)$ is bipartite, and hence $U$ is obtained by deleting one vertex from one of its two bipartition classes.

By symmetry, let
\[
U=A\setminus\{x\},
\]
where
\[
A=\{u_i:i\equiv0\pmod2\}\cup\{v_i:i\equiv1\pmod2\},
\]
and let $B=V(Pr(m))\setminus A$. Thus every vertex of $B$ is assigned some color by $f$, while $x$ is the only colored vertex in $A$.

For each $j\in\{1,2,3\}$, let
\[
B_j=\{y\in B:f(y)=\{j\}\},
\]
and put
\[
T=N(x),\qquad t_j=|T\cap B_j|.
\]
Since every vertex of $U$ has exactly one neighbor of color $j$, it holds that $e(U,B_j)=m-1$. At the same time, every vertex of $B_j\setminus T$ has three neighbors in $U$, while every vertex of $T\cap B_j$ has two such neighbors. Hence $m-1=3|B_j|-t_j$. Consequently,
\[
t_j\equiv1-m\equiv2\pmod3.
\]
Since $0\le t_j\le3$, this forces $t_j=2$ for every $j\in\{1,2,3\}$.
Thus
\[
t_1+t_2+t_3=6,
\]
which contradicts
\[
t_1+t_2+t_3=|T|=3.
\]

Therefore a rainbow $3$-dominating function of weight $m+1$ does not
exist, and consequently
\[
\widetilde{\gamma}_{3}(Pr(m))\ge m+2.
\]
\end{proof}

\section{Gathered data}
\label{app:petersen-data}

\begin{table}[htbp]
\centering
\begin{tabular}{|c|c|r|}
\hline
$t$ & $\widetilde{\gamma}_{2}(P(6t,t))$ & Time (s) \\
\hline
2  & 11  & 0.022 \\
3  & 18  & 0.739 \\
4  & 22  & 61.838 \\
5  & 27  & 163.080 \\
6  & 32  & 2016.193 \\
7  & 37  & 5431.389 \\
8  & 43  & 9603.877 \\
9  & 48  & 13080.890 \\
10 & 53  & 16445.925 \\
11 & 59  & 20510.875 \\
12 & 64  & 23924.655 \\
13 & 69  & 27256.908 \\
14 & 74  & 30721.392 \\
15 & 79  & 34144.957 \\
16 & 85  & 37980.734 \\
17 & 90  & 41429.770 \\
18 & 95  & 44834.172 \\
19 & 101 & 48707.385 \\
20 & 106 & 52124.481 \\
\hline
\end{tabular}
\caption{Exact values of the rainbow $2$-domination number of $P(6t,t)$ and computation times gathered using Algorithm~\ref{alg:petersen-dp}.}
\label{tbl:alg-data}
\end{table}

\begin{table}[htbp]
\centering
\begin{tabular}{|c|c|r| @{\qquad} |c|c|r|}
\hline
$t$ & $\widetilde{\gamma}_{3}(P(6t,t))$ & Time (s)
&
$t$ & $\widetilde{\gamma}_{3}(P(6t,t))$ & Time (s) \\
\hline
 2 &  15 &    0.103 & 27 & 168 &  355.169 \\
 3 &  22 &    0.191 & 28 & 171 &   14.768 \\
 4 &  27 &    0.266 & 29 & 174 &    0.604 \\
 5 &  30 &    0.017 & 30 & 183 &   19.143 \\
 6 &  39 &    0.556 & 31 & 186 &    1.236 \\
 7 &  42 &    0.018 & 32 & 195 &   17.343 \\
 8 &  51 &    0.518 & 33 & 204 &  451.790 \\
 9 &  60 &    8.658 & 34 & 207 &   17.581 \\
10 &  63 &    0.759 & 35 & 210 &    1.341 \\
11 &  66 &    0.067 & 36 & 219 &   29.400 \\
12 &  75 &    2.183 & 37 & 222 &    1.597 \\
13 &  78 &    0.101 & 38 & 231 &   22.759 \\
14 &  87 &    1.642 & 39 & 240 &  881.980 \\
15 &  96 &   27.771 & 40 & 243 &   36.434 \\
16 &  99 &    3.748 & 41 & 246 &    3.271 \\
17 & 102 &    0.162 & 42 & 255 &   34.431 \\
18 & 111 &    3.882 & 43 & 258 &    1.692 \\
19 & 114 &    0.241 & 44 & 267 &   37.335 \\
20 & 123 &    5.183 & 45 & 276 & 1217.312 \\
21 & 132 &  207.626 & 46 & 279 &   51.621 \\
22 & 135 &    8.351 & 47 & 282 &    2.923 \\
23 & 138 &    0.489 & 48 & 291 &   42.760 \\
24 & 147 &   10.319 & 49 & 294 &    4.085 \\
25 & 150 &    0.369 & 50 & 303 &   32.237 \\
26 & 159 &   11.608 & 51 & 312 & 1505.777 \\
\hline
\end{tabular}
\caption{Exact values of the rainbow $3$-domination number of $P(6t,t)$ and computation times obtained using the Gurobi ILP solver for $2\le t\le51$.}
\label{tbl:gurobi-data}
\end{table}

\end{document}